\newcounter{all}
\newtheorem{theorem}[all]{Theorem}
\newtheorem*{theorem*}{Theorem}
\newtheorem{proposition}[all]{Proposition}
\newtheorem{lemma}[all]{Lemma}
\newtheorem*{remark*}{Remark}
\def\R{\mathbb{R}}
\def\Rnonneg{\R_p}
\def\Q{\mathbb{Q}}
\def\Qnonneg{\Q_{\geq0}}
\def\sigf#1{:\{0,1\}^{#1}\to\Rnonneg}
\def\vecq{{\boldsymbol q}}
\def\vecx{{\boldsymbol x}}
\def\vecy{{\boldsymbol y}}
\def\vecz{{\boldsymbol z}}
\def\veczero{{\boldsymbol 0}}
\def\vecone{{\boldsymbol 1}}
\def\vecp{{\boldsymbol p}}
\def\pin#1#2{#1_{#2}}
\def\setdiff{\triangle}
\def\nCSP{\operatorname{\#CSP}}
\def\PMp{\mathsf{\#PM}}
\def\PMP{\mathsf{\#PM}}
\def\BIS{\mathsf{\#BIS}}
\def\SAT{\mathsf{\#SAT}}
\def\HC{\mathsf{\#HC}}
\def\Xprob{\mathsf{\#X}}
\def\Yprob{\mathsf{\#Y}}
\def\PIN{\mathrm{PIN}}
\def\NAND{\mathrm{NAND}}
\def\OR{\mathrm{OR}}
\def\EQ{\mathrm{EQ}}
\def\NEQ{\mathrm{NEQ}}
\def\IMP{\mathrm{IMP}}
\def\PM{\mathrm{PM}}
\def\calC{\mathcal{C}}
\def\calF{\mathcal{F}}
\def\calG{\mathcal{G}}
\def\APred{\leq_{AP}}
\def\APeq{=_{AP}}
\def\supp{\operatorname{supp}}
\def\dom{\operatorname{dom}}
\def\deg{\operatorname{deg}}
\def\wt#1#2{\operatorname{wt}^{#1}_{#2}}
\def\Neqconj{\text{NEQ-conj}}
\def\Imconj{\text{IM-conj}}
\def\nCSPs{\nCSP^{\geq 0}}
\def\cwt{\nCSPs_{\leq 2}}
\def\gpin{\Gamma_{\mathrm{pin}}}
\def\scope{\operatorname{scope}}
\def\zp#1{Z_{#1}}
\def\zwp#1#2{Z^{#1}_{#2}}
\newtheorem*{rep@theorem}{\rep@title}
\newcommand{\newreptheorem}[2]{%
\newenvironment{rep#1}[1]{%
 \def\rep@title{#2 \ref{##1}}%
 \begin{rep@theorem}}%
 {\end{rep@theorem}}}
\begin{document}
\title{Degree two approximate Boolean \#CSPs with variable weights}
\author{Colin McQuillan
\thanks{Supported by an EPSRC doctoral training grant.}
\\University of Liverpool
\\ cmcq@liv.ac.uk}

\maketitle
\begin{abstract}
 A counting constraint satisfaction problem (\#CSP) asks for the
 number of ways to satisfy a given list of constraints, drawn from a
 fixed constraint language $\Gamma$.  We study how hard it is
 to evaluate this number approximately.  There is an interesting
 partial classification, due to Dyer, Goldberg, Jalsenius and Richerby
 \cite{bdddeg}, of Boolean constraint languages when the \emph{degree}
 of instances is bounded by $d\geq 3$ - every variable appears in at most
 $d$ constraints - under the assumption that ``pinning'' is allowed as
 part of the instance. We study the $d=2$ case under the stronger
 assumption that ``variable weights'' are allowed as part of the
 instance.  We give a dichotomy: in each case, either the \#CSP is
 tractable, or one of two important open problems, $\BIS$ or $\PMP$,
 reduces to the \#CSP.
\end{abstract}
\clearpage

\section{Introduction}

A constraint satisfaction problem asks whether there an assignment of
values to some variables that satisfies given constraints.  We will be
looking at Boolean CSPs, where each variable takes the value 0 or 1.
An example of a Boolean CSP is whether a graph has a perfect matching:
whether each edge can be labelled 0 or 1 (these are the variables)
such that (these are the constraints) at each vertex there is exactly
one edge labelled 1.

Given a finite set of relations $\Gamma$, the counting problem
$\nCSP(\Gamma)$ asks for the number of assignments that satisfy a
conjunction of constraints of of the form ``$(v_1,\cdots,v_k)\in R$''
with $R\in \Gamma$.  The approximation complexity of $\nCSP(\Gamma)$
is the complexity of the same problem but allowing a multiplicative
error.  Sometimes we will allow weighted constraints, called
\emph{signatures}, and in this case we write $\calF$ instead of $\Gamma$.

An important feature of the perfect matchings example is that every
variable is used twice: the \emph{degree} of every variable is two.
For larger degree bounds $\nCSP(\Gamma)$ has been studied in
\cite{bdddeg}.  The restriction of $\nCSP(\calF)$ to instances where
each variable appears exactly twice has also been called
a (non-bipartite) Holant problem \cite{holantc}.

To make progress on the degree two problem we allow instances to
specify a weight for each of the two values each variable can take.
The main result of the paper is a hardness result for degree two
Boolean $\nCSP$s with these variable weights: in every case we show
that that problem is either tractable or as hard as an important open
problem.
The core of the proof is that we can adapt the ``fan-out'' constructions
of Feder \cite{feder}; this does not work for delta matroids,
but delta matroids can be handled specially.  Along
the way we give a generalisation of delta matroids to
weighted constraints called ``terraced signatures''. This definition
directly describes when a constraint fails to give fan-out gadgets for
degree-two $\nCSP$s.

We also give partial results for signatures and for some
related problems.

\subsection{Variable weights and degree bounds}

We will consider the problem of approximately evaluating a $\nCSP$
where the constraints, variables weights, and degrees are restricted.
To discuss these problems it is useful to introduce some notation.
For the main theorem we study the problems $\nCSPs_{\leq 2}(\Gamma)$
for a constraint language $\Gamma$ of Boolean relations. The
instances of $\nCSPs_{\leq 2}(\Gamma)$ consist of variable weights and constraints.
Variable weights are arbitrary non-negative rationals, constraints are
taken from $\Gamma$, and every variable appears at most twice.

To discuss other results, and to put our results in a wider context,
it is useful to generalise from $\nCSPs(\Gamma)$.  Given a set of
non-negative ``variable weights'' $W\subset\mathbb{R}\times\mathbb{R}$
and a set of degree bounds $K\subseteq\mathbb{N}$, we then have an
approximate counting problem $\nCSP^W_K(\Gamma)$: instances consist of
a pair of variable weights from $W$ for each variable, and a set of
constraints from the set of relations $\Gamma$, such that the degree
of each variable is an integer in $K$.  To avoid clutter we will use
the {\bf default values} $W=\{(1,1)\}$ and $K=\mathbb{N}$ when they
are omitted, and {\bf abbreviate} $W=\Qnonneg\times\Qnonneg$ to $\geq
0$, and $K=\{1,\cdots,d\}$ and $K=\{d\}$ to $=d$ and $\leq d$
respectively.  We will in fact generalise to sets of signatures
$\calF$ and define $\nCSP^W_K(\calF)$. See Section \ref{ssec:cspdefn}
for more formal definitions.

For example, if we define $\NAND=\{(0,0),(0,1),(1,0)\}$, then
$\nCSP(\{\NAND\})$ is equivalent to the problem of counting
independent sets in a graph: the variables $x_v$ of the CSP correspond
to vertices $v$ of a (multi)graph, the constraints correspond to edges
- there is a constraint $\NAND(x_u,x_v)$ for each edge $uv$ of the
graph - and the satisfying assignments of the CSP are the indicator
functions of independent sets of this graph.
As another example, if we define $\PM_3=\{(0,0,1),(0,1,0),(1,0,0)\}$
then $\nCSP_{=2}(\{\PM_3\})$ is equivalent to counting perfect matchings
of a graph in which every vertex has degree three (by the same
encoding discussed previously for perfect matchings as a CSP),
and $\nCSPs_{=2}(\{\PM_3\})$ is equivalent to counting weighted perfect
matchings in a graph in which every vertex has degree three.

\subsection{Main result}

In approximation complexity a problem is considered tractable if it
has a fully polynomial randomised approximation scheme (FPRAS) - see
Section \ref{sec:fprasdefn} for a definition.  We will present results
using the ``AP-reduction'' notation $\APred$ introduced in
\cite{rcap}. $\Xprob\APred\Yprob$ means that $\Xprob$ has an FPRAS
using an FPRAS for $\Yprob$ as an oracle.  This also defines an
equivalence relation $\Xprob\APeq\Yprob$.

The main result states reductions from the problems $\SAT$, $\BIS$ and
$\PMP$ to certain $\nCSP$ problems. $\SAT$ is the problem of counting
solutions to a SAT instance; it plays a similar role for approximation
problems as NP plays for decision problems. $\BIS$ is the problem of
counting the number of independent sets in a bipartite graph. We do
not actually use this definition directly; $\BIS$ has been used in
this way as a ``hard'' problem since it was introduced in \cite{rcap}.
$\PMP$ is the problem of counting perfect matchings in a graph.
Finding an FPRAS for $\PMP$ has been an important open research
problem, certainly since the restriction of $\PMP$ to bipartite graphs
was shown to have an FPRAS \cite{allperm}.  It is therefore a
respectable ``hard'' problem for approximation.

We will give AP-reductions depending on whether $\Gamma$ falls into
certain classes of relations.  Briefly, a relation is basically binary
if it is a Cartesian product of relations of arity at most two, for
example $\{\vecx\in\{0,1\}^4\mid \text{$x_1x_2=1$ and $x_3\leq
  x_4$}\}$.  A relation is in NEQ-conj if it is a conjunction of
equalities, disequalities, and constants, for example
$\{\vecx\in\{0,1\}^6\mid x_1=x_2, x_2\neq x_5, x_6=0\}$.  A relation
is in IM-conj if it is a conjunction of implications and constants,
for example $\{\vecx\in\{0,1\}^6\mid x_1\leq x_2\leq x_3, x_6=0\}$.

A family $\calC$ of subsets of a finite set is a delta matroid if for
all $X,Y\in\calC$ and $i\in X\setdiff Y$ there exists $j\in X\setdiff
Y$ with $X\setdiff\{i,j\}\in\calC$, where the triangle operator means
the symmetric difference. In this paper we will also call the
corresponding relations $R\subseteq\{0,1\}^V$ delta matroids.  For
example, the set system $\{\emptyset,
\{1\},\{2\},\{1,2\},\{1,2,3\}\}\subset\{0,1\}^3$ is not a delta
matroid: it contains $X=\emptyset$ and $Y=\{1,2,3\}$ but does not
contain $\{3,j\}$ for any $j\in\{1,2,3\}$; hence the corresponding
relation $\{\vecx\in\{0,1\}^3\mid x_3\leq x_1,x_2\}$ is not a delta
matroid.  On the other hand $\{\vecx\in\{0,1\}^3\mid \sum
x_i\in\{0,2,3\}\}$ is a delta matroid relation.

Our main theorem says:

\begin{theorem}\label{thm:basic}
Let $\Gamma$ be a finite set of relations.  If $\Gamma\subseteq\Neqconj$ or
every relation in $\Gamma$ is basically binary then $\cwt(\Gamma)$ has an FPRAS.  Otherwise,
\begin{itemize}

\item If $\Gamma\subseteq\Imconj$ then $\BIS\APeq\cwt(\Gamma)$.
\item If $\Gamma\not\subseteq\Imconj$ then $\PMP\APred\cwt(\Gamma)$. If furthermore $\Gamma$ is not a set of delta matroids then $\SAT\APeq\cwt(\Gamma)$.
\end{itemize}
\end{theorem}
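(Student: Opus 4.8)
The plan is to split into the tractable cases and the hard cases, treating sets of delta matroids separately within the hard cases; the dividing line is whether $\Gamma$ admits ``fan-out'' gadgets. \emph{Tractable cases.} If every relation in $\Gamma$ is basically binary, each constraint can be replaced by an equivalent conjunction of constraints of arity at most two on the same variables, leaving all degrees unchanged; the resulting instance has only binary constraints and every variable has degree at most two, so its constraint graph is a disjoint union of paths and cycles and the weighted sum over satisfying assignments is evaluated exactly by a transfer-matrix product. If $\Gamma\subseteq\Neqconj$, expanding every constraint shows the whole instance is a conjunction of binary equalities, disequalities and pinnings; it is then consistent on each connected component of the associated equality/disequality graph in at most two complementary ways, so the weighted count factorises over components and is again evaluated exactly. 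Exact evaluation yields an FPRAS in both cases.

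\emph{Set-up for hardness.} A variable weight $(w_0,w_1)$ is a unary signature, $(1,0)$ and $(0,1)$ are pinnings, and in the AP-reduction framework weights may be driven to $0$ and $\infty$, so $\cwt(\Gamma)$ has available at degree one all pinnings and all non-negative unary signatures. I would first prove the core lemma, adapting Feder's fan-out construction: if $\Gamma$ contains a relation that is not a delta matroid then, using that relation with pinnings and unary weights at degree two only, one can realise the ternary equality $\EQ_3=\{000,111\}$, or a gadget serving the same purpose, so that arbitrary fan-out becomes available and $\cwt(\Gamma)\APeq\nCSPs(\Gamma)$. The matching intuition, that being a delta matroid is precisely the obstruction to such gadgets, is what the notion of ``terraced signatures'' is designed to capture; proving this lemma cleanly is the step I expect to be the main obstacle.

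\emph{Non-delta-matroid $\Gamma$.} Granting the core lemma, if $\Gamma$ is not a set of delta matroids then $\cwt(\Gamma)\APeq\nCSPs(\Gamma)$, and I would conclude using the approximation classification of unbounded-degree Boolean $\nCSP$ with variable weights: if $\Gamma\not\subseteq\Imconj$ then $\SAT\APeq\nCSPs(\Gamma)$, hence $\SAT\APeq\cwt(\Gamma)$, and also $\PMP\APred\cwt(\Gamma)$ using $\PMP\APred\SAT$; and if $\Gamma\subseteq\Imconj$ (so $\Gamma\not\subseteq\Neqconj$, else we are in the tractable case) then $\BIS\APeq\nCSPs(\Gamma)\APeq\cwt(\Gamma)$.

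\emph{Delta-matroid $\Gamma$.} Finally suppose $\Gamma$ is a set of delta matroids and not a tractable case, so fan-out is unavailable. If $\Gamma\not\subseteq\Imconj$ I need only $\PMP\APred\cwt(\Gamma)$: the perfect matchings of a graph form a delta matroid and $\PMP$ is AP-interreducible with counting perfect matchings of cubic graphs, i.e.\ with $\nCSPs_{=2}(\{\PM_3\})$, so it suffices to show that a set of delta matroids that is neither basically binary nor contained in $\Neqconj$ nor in $\Imconj$ can simulate $\PM_3$ (or an equivalent perfect-matching gadget) at degree two; this I would obtain from a structural analysis of delta matroids, passing to minors by pinning coordinates to $0$ or $1$ to extract a small matching-like configuration. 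If $\Gamma\subseteq\Imconj$, the upper bound $\cwt(\Gamma)\APred\BIS$ holds because $\Imconj$ is closed under the gadget operations, and for $\BIS\APred\cwt(\Gamma)$ one uses the weaker, $\BIS$-preserving fan-out available for monotone constraints --- the kind that already makes bounded-degree $\BIS$ as hard as $\BIS$ --- together with the hypotheses that $\Gamma$ is not basically binary and not in $\Neqconj$. Combining the four cases gives the theorem.
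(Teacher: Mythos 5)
Your high-level architecture matches the paper's: exact tractability for the basically binary and NEQ-conj cases, a fan-out lemma reducing unbounded-degree $\nCSPs(\Gamma)$ to $\cwt(\Gamma)$ when $\Gamma$ contains a non-delta-matroid, an appeal to the unbounded-degree classification (Lemma \ref{lem:unbounded}) to finish those cases, and extraction of a $\PM_3$-like gadget from a non-basically-binary delta matroid for the $\PMP$ reduction. However, the proposal has a genuine gap: the two lemmas that carry essentially all of the technical content are stated as intentions rather than proved. You explicitly flag the fan-out lemma as ``the step I expect to be the main obstacle,'' and indeed it is the core of the paper: one must classify pinning-minimal non-terraced relations (Lemmas \ref{lem:dmterr} and \ref{lem:minnonterr}) and then use $h$-maximisation (Lemma \ref{lem:hamweight}) --- realised by variable weights that are exponentially large powers of two, with a careful error analysis --- to whittle the relation down to two complementary arity-3 configurations, which then chain into an equality gadget of degree two. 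Your sketch does not identify this mechanism; ``weights driven to $0$ and $\infty$'' only gives pinnings, which do not suffice. The same applies to the delta-matroid case: extracting a flip of $\PM_3$ requires the structural argument of Lemma \ref{lem:min3delta} (again via $h$-maximisation), which you describe only as a plan.

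A second, more local problem: your final case ($\Gamma$ a set of delta matroids with $\Gamma\subseteq\Imconj$) does not exist. The paper's Lemma \ref{lem:imdeltabb} shows that a delta matroid in IM-conj is basically binary, so this case falls under tractability and the $\BIS$-hardness you try to establish there is never needed. The argument you propose for it (a ``weaker, $\BIS$-preserving fan-out for monotone constraints'' at degree two) is unsubstantiated and, given Lemma \ref{lem:imdeltabb}, could not work as stated, since every such $\Gamma$ is tractable. Recognising the vacuity of this case is itself a small but necessary piece of the proof that your decomposition misses.
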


So in every case the problem is either tractable, or at least as hard as an
important open problem. This is quite a different situation from the
corresponding decision problems, considered in \cite{dalmauford}.  For
degree-two decision CSP there is no known dichotomy, and there
are many tractable problems using
delta matroids.

\subsection{Other results}

These classes or relations, and the proof of Theorem \ref{thm:basic},
generalises to some extent to signatures.
There is a similar notion of
basically binary signatures. NEQ-conj generalises to
Weighted-NEQ-conj, and IM-conj generalises to the class of
logsupermodular signatures (these classes were used in the result of
Bulatov et al. mentioned below).  We will define a generalisation of
delta matroids called ``terraced'' signatures.
We establish the following results in Section \ref{sec:weight}.

\begin{theorem}\label{thm:weightthm}
Let $\calF$ be a finite set of signatures.  If every signature in
$\calF$ is basically binary or every signature in $\calF$ is
in Weighted-NEQ-conj, then $\nCSPs_{\leq 2}(\calF)$ has an FPRAS.  Otherwise
assume furthermore that there is a signature in $\calF$ that is not
terraced or that does not have basically binary support. Then:
\begin{itemize}
\item If every signature in $\calF$ is logsupermodular then
  $\BIS\APred\nCSPs_{=2}(\calF)$.
\item If some signature in $\calF$ is not logsupermodular then
  $\PMp\APred\nCSPs_{=2}(\calF)$. If furthermore some
  signature in $\calF$ is not terraced then $\SAT\APeq\nCSPs_{=2}(\calF)$.
\end{itemize}
\end{theorem}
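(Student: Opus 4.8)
The plan is to follow the proof of Theorem~\ref{thm:basic} step by step, with relations replaced by signatures, delta matroids by terraced signatures, and $\Imconj$ by the class of logsupermodular signatures, and to reduce the genuinely new (unbounded-degree, weighted) instances to the classification of $\nCSPs(\calF)$ of Bulatov et al. For the two tractable cases there is in fact an exact polynomial-time algorithm. If every signature in $\calF$ is basically binary, then in an instance in which every variable has degree at most two each constraint splits into its factors of arity at most two, every variable lies in at most two of these, and the resulting ``constraint graph'' is a disjoint union of paths and cycles, so the partition function is evaluated by a transfer-matrix computation along each path and cycle; this is the step that uses the degree bound. If every signature is in Weighted-NEQ-conj, each constraint is, up to a scalar, a product of weighted equalities, disequalities and unary weights, and contracting the equalities and disequalities by union-find leaves one independent two-valued choice per connected component, so the partition function again factorises over components.

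For the hardness part, assume $\calF$ is neither entirely basically binary nor entirely in Weighted-NEQ-conj, and that some signature of $\calF$ is not terraced or has support that is not basically binary. Suppose first that some $f\in\calF$ is not terraced. The definition of ``terraced'' is arranged so that this is \emph{exactly} the situation in which $f$, with the help of variable weights, yields a fan-out gadget: a gadget over $\calF$ of degree at most two whose external signature is a scalar multiple of $\EQ_3$ and all of whose internal variables have degree at most two. Adapting Feder's fan-out construction to weighted signatures, one iterates to build $\EQ_k$-gadgets for all $k$, together with pinning (a variable weight $(1,0)$ or $(0,1)$ on an $\EQ$-gadget realises a constant), and then, inserting these into an arbitrary instance and padding every variable up to degree exactly two, obtains $\nCSPs(\calF)\APred\nCSPs_{=2}(\calF)$. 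Since $\calF$ is not entirely in Weighted-NEQ-conj, the classification of $\nCSPs(\calF)$ says it is $\BIS$-equivalent if every signature of $\calF$ is logsupermodular and $\SAT$-equivalent otherwise; together with $\PMp\APred\SAT$ this gives all the claims of Theorem~\ref{thm:weightthm} whenever some signature is not terraced.

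It remains to treat $\calF$ in which every signature is terraced but some $f\in\calF$ does not have basically binary support. Here fan-out is unavailable --- that is precisely what ``terraced'' records --- and one specialises the degree-two, delta-matroid part of the proof of Theorem~\ref{thm:basic}. Since $f$ is terraced, its support $R=\supp(f)$ is a delta matroid; it is not basically binary and, one checks, not in $\Neqconj$, so Theorem~\ref{thm:basic} gives that $\cwt(\{R\})$ is not tractable, is $\BIS$-equivalent if $R\in\Imconj$, and satisfies $\PMp\APred\cwt(\{R\})$ in general. As one generally cannot recover the relation $R$ from the signature $f$ using variable weights alone, the point is to rerun the underlying $\PM_3$-, $\NEQ$- and pinning-based gadgets directly inside $\nCSPs_{=2}(\calF)$, with $f$ in place of $R$ and with compensating variable weights on the dangling edges, using that weighted $\PMp$ is $\APeq\PMp$ so only bounded-degree weighted versions are needed, and --- when some signature of $\calF$ is not logsupermodular --- using the weighted failure of supermodularity of that signature, not merely of its support, as the ``antiferromagnetic'' ingredient that forces $\PMp$-hardness. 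Assembling the cases gives the stated dichotomy.

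The main obstacle is the fan-out step: proving that ``not terraced'' really does deliver a degree-two $\EQ_3$-gadget --- that is, that the terraced condition is the precise obstruction to weighted fan-out --- which needs the right weighted analogue of Feder's gadgets and a matroid-style exchange argument; and, dually, getting $\PMp$-hardness through in the terraced-but-not-basically-binary-support case, where the absence of fan-out forces one to argue directly with the small weighted gadgets rather than reduce from a known unbounded-degree problem.
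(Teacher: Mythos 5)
Your overall architecture matches the paper's (tractable cases via the known exact algorithms; all-terraced case via the delta-matroid support, $h$-maximisation to $\PM_3^U$, and simple weightings; non-terraced case via fan-out followed by the unbounded-degree classification of Bulatov et al.), but the central fan-out step as you state it is wrong. A non-terraced signature does \emph{not} yield a degree-two gadget whose external signature is a scalar multiple of $\EQ_3$. What the minimal structure actually gives (Lemma \ref{lem:minnonterr}) is a chain gadget $G$ with $G(x,x,\vecy)=\EQ_B(\vecy)T(y_1,x)$ for a non-degenerate but otherwise arbitrary binary signature $T$: every tap off the equality chain is composed with $T$. Since $T^{-1}$ generally has negative entries (e.g.\ $T(0,0)=T(0,1)=T(1,0)=1$, $T(1,1)=2$), you cannot undo $T$ with non-negative variable weights, so you cannot realise $\EQ_3$ and you do not get $\nCSPs(\calF)\APred\nCSPs_{=2}(\calF)$. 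The paper's resolution (Lemma \ref{lem:holanty}) is to push $T$ onto the constraints via a holographic transformation, proving only $\nCSP(T^{\otimes}\calF^B\cup S)\APred\nCSP^W_{=2}(\calF)$, and then to apply the Bulatov et al.\ classification to the \emph{transformed} family $T^{\otimes}\calF^B$. That forces extra work your proposal omits entirely: one must verify that $T^{\otimes}G^B$ is still not in Weighted-NEQ-conj (via indecomposability and support size) and that $T^{\otimes}H^B$ is still not logsupermodular (the determinant computation in Lemma \ref{lem:bissatwt}), and the exponent $B\in\{1,2\}$ appears because the gadget may force each constraint to be used twice.

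A second, smaller gap: the holographic fan-out needs a non-\emph{IM}-terraced signature (the violating pinning must have $p_i\neq p_j$, so that the gadget satisfies $G(1,0,\vecy)=0$ and chains correctly). If the non-terraced witness has $p_i=p_j$, the paper first shows $\supp(F)$ is not in IM-conj and uses Lemma \ref{lem:neqing} to close $\calF$ under flips, converting $F$ into a non-IM-terraced signature. Your proposal does not distinguish terraced from IM-terraced and so misses this case split. (Minor further points: in the all-terraced branch the relation $R=\supp(f)$ can never lie in $\Imconj$ by Lemma \ref{lem:imdeltabb}, so the $\BIS$ sub-case you list there is vacuous and the failure of logsupermodularity is a consequence of the structure rather than an extra hypothesis you need to invoke.)
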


The case of terraced signatures whose support is basically binary is
left as an open problem.  Note that this theorem is stated for
$\nCSP_{=2}$ problems: every variable is used exactly twice and not
just at most twice.

\begin{theorem}\label{thm:finitewts}
Let $\calF$ be a finite set of signatures. Assume that not every
signature in $\calF$ is in Weighted-NEQ-conj, and not every signature
in $\calF$ is basically binary, and not every signature in $\calF$ is
terraced.
(This the same setting as the $\BIS$ and $\SAT$ reductions in Theorem
\ref{thm:weightthm}.)

Unless all the following conditions hold, there is a finite set
$W\subseteq\R_p\times\R_p$ such that
$\Xprob\APred\nCSP^W_{=2}(\calF)$ where $\Xprob=\BIS$ if every signature in
$\calF$ is logsupermodular, and $\Xprob=\SAT$ otherwise.

\begin{enumerate}
\item Every signature $F\in\calF$ is IM-terraced.
\item Either the support of every signature $F$ in $\calF$ is closed
  under meets
  ($\vecx,\vecy\in\supp(F)\implies\vecx\wedge\vecy\in\supp(F)$), or
  the support of every signature $F$ in $\calF$ is closed under joins
  ($\vecx,\vecy\in\supp(F)\implies\vecx\vee\vecy\in\supp(F)$).
\item No pinning of the support of a signature in $F$ is equivalent to
  $\EQ_2$.
\end{enumerate}
\end{theorem}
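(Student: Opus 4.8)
The statement sharpens the $\BIS$ and $\SAT$ reductions of Theorem~\ref{thm:weightthm} so that the variable weights come from a single finite set $W$ (depending only on $\calF$) rather than from all of $\R_p\times\R_p$, and the plan is to revisit that proof and track precisely where an unbounded supply of weights is used. In those reductions one takes the non-terraced signature guaranteed by the hypothesis, uses a Feder-style construction to build fan-out gadgets from it, uses fan-out to let a degree-two instance simulate unbounded degree, and then invokes the known approximate-counting classification for weighted $\nCSP$. The only step that leaves a finite weight set is the construction of the fan-out gadget: the arity-$k$ ``all equal'' gadget is realised only in a limiting sense, for instance with auxiliary variables of weight $(1,\varepsilon)$, $\varepsilon\to 0$, playing the role of the ``wiring'' and killing unwanted configurations, or by interpolating over unboundedly many weight values. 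Conditions~(1)--(3) of the statement are precisely the structural features that block an \emph{exact}, finite-weight substitute for this wiring, so the proof divides into three cases according to which of them fails.

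Suppose (3) fails: some pinning of $\supp(F)$, for an $F\in\calF$, is $\EQ_2$, so equality on two variables is realised exactly using only the finitely many weights needed for that pinning (together with $(1,1)$). This exact $\EQ_2$ lets the Feder-style fan-out construction for the non-terraced signature be carried out exactly, with no limit. Suppose instead (2) fails: there is a signature whose support is not closed under meets and a (possibly different) signature whose support is not closed under joins. Pinning the coordinates on which a witnessing pair of configurations agrees reduces each of these, using only pinning weights, to a low-arity gadget that fails meet-closure, respectively join-closure; marrying a meet-failing gadget with a join-failing gadget yields an exact wiring gadget and so again an exact fan-out. Finally suppose (1) fails: some $F\in\calF$ is not IM-terraced, which is a ``non-monotone'' failure of the terrace condition, strictly more than mere non-terracedness. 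The pair of configurations witnessing it can be converted, again with only finitely many weights, into an exact copy of $\NEQ$ (or a pinning of $F$ realising it), which composes to an exact $\EQ_2$ wiring gadget, reducing to the first case.

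In each case we therefore obtain, using a finite $W\subseteq\R_p\times\R_p$ that depends only on $\calF$ and contains $(1,1)$, a fixed finite family of exactly realised degree-two gadgets providing fan-out. From here the argument is the same as in Theorem~\ref{thm:weightthm}: fan-out lets a degree-two, weight-$W$ instance simulate instances of arbitrary degree, and since $\calF$ is not contained in Weighted-NEQ-conj the known approximate trichotomy for weighted $\nCSP$ gives the reduction $\Xprob\APred\nCSP^W_{=2}(\calF)$, with $\Xprob=\BIS$ when every signature of $\calF$ is logsupermodular and $\Xprob=\SAT$ otherwise.

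The main obstacle is the case in which (1) fails. The distinction between ``terraced'' and ``IM-terraced'' is delicate, and producing a usable exact gadget requires a case analysis of the terrace structure of the offending signature in order to locate the pair of configurations witnessing the non-monotone failure and to confirm that the gadget built from them uses only boundedly many distinct weight values --- in particular that no sequence $\varepsilon\to 0$, or $\varepsilon\to\infty$, is reintroduced. The case in which (2) fails is awkward for a related reason: the meet-failing and join-failing witnesses may be distinct signatures of distinct arities, so the construction marrying them must be designed carefully and checked to be exact rather than merely limiting. The case in which (3) fails should be the most routine of the three.
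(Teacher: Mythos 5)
Your case split (by which of conditions (1)--(3) fails) matches the paper's, but the logical flow of your argument is inverted relative to the paper's, and the inversion hides the central missing idea. You try to funnel every case toward ``obtain an exact $\EQ_2$, then fan out, then apply the unbounded-degree classification''; the paper funnels every case toward ``obtain a non-IM-terraced signature, then apply a holographic reduction''. Two steps of your plan genuinely fail. First, an exact arity-two equality cannot by itself provide fan-out in a degree-two instance: wiring the $d\geq 3$ copies of a variable with arity-two gadgets yields a graph in which the copies have one free slot and the auxiliary variables have two, i.e.\ a disjoint union of paths and cycles, so at most two copies can ever be forced equal; branching requires an arity-$\geq 3$ gadget, which is exactly what the arity-$(B+2)$ signature $G$ of Lemma \ref{lem:holanty} (or $R_4$ in the relation case) supplies. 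Second, a non-IM-terraced signature does not yield an exact $\NEQ$ or $\EQ_2$: Lemma \ref{lem:minnonimterr} only gives $G(x,x,\vecy)=\EQ_B(\vecy)T(y_1,x)$ for a non-degenerate $T$ that may well have full support, and sharpening such a $T$ to $\NEQ$ is an $h$-maximisation, which needs unboundedly many variable weights --- precisely what the theorem forbids. The paper's handling of a failing condition (1) (Lemmas \ref{lem:holanty}, \ref{lem:nonterruse}, \ref{lem:bissatwt}) never produces equality: it simulates equality only \emph{up to the holographic transformation} $T^{\otimes}$, reducing $\nCSP(T^{\otimes}\calF^B\cup S)$ to $\nCSP^W_{=2}(\calF)$ for finite $W$, and then separately verifies that $T^{\otimes}\calF^B$ is still outside Weighted-NEQ-conj (and still non-logsupermodular when $\Xprob=\SAT$) before invoking Lemma \ref{lem:unbounded}. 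This is the core of the theorem and is absent from your proposal; your own closing paragraph effectively concedes that the ``usable exact gadget'' for this case has not been constructed.

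The other two cases inherit the problem. For a failing condition (2), ``marrying a meet-failing gadget with a join-failing gadget'' does not produce a wiring gadget: for instance $\sum_t\NAND(x,t)\OR(t,y)$ has support $\IMP$, not $\EQ_2$, and a pointwise product consumes both slots of each external variable. The paper instead uses IM-terracedness (condition (1) holding) to force the minimal join-closure failure to be an arity-two $G$ with support $\NAND$, and composes it with the non-terraced signature $F$ via $H(x_1,x_2,\vecy)=\sum_t G(x_1,t)F(t,x_2,\vecy)$, checking that $H$ is \emph{not IM-terraced} --- i.e.\ case (2) reduces to case (1), not to an equality gadget. For a failing condition (3), the exact $\EQ_2$ is used not for fan-out but to \emph{subdivide} the degree-two wires and accumulate weights $(1,2)$ or $(2,1)$ along the subdivision, thereby simulating the infinite weight set $\{(2^a,2^b)\}$ needed by Lemma \ref{lem:hamweight}; the resulting $h$-maximisation turns a $\NAND$-supported pinning into a $\NEQ$-supported one, which violates condition (2) and again funnels into case (1). (One repair your sketch would also need here: the pinning with support $\EQ_2$ is only a simple weighting of $\EQ_2$, and the diagonal entries must be absorbed via Lemma \ref{lem:simpleweight}.) In short, without the Holant-style transformation none of the three cases closes, so the proposal has a genuine gap rather than an alternative proof.
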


This situation is simpler for higher degrees, if $\calF$ contains a
signature with non-degenerate support (a relation is degenerate if it
is a product of arity 1 relations):

\begin{theorem}\label{thm:highdeg}
Let $\calF$ be a finite set of signatures and assume that not every
signature in $\calF$ has degenerate support. There exists a finite set
of variable weights $W$ such that $\nCSPs(\calF)$ has an FPRAS if and
only if $\nCSP_{\leq 3}^W(\calF)$ has an FPRAS.
\end{theorem}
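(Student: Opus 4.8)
The plan is to prove the two implications separately. The ``only if'' direction is immediate once we agree to take $W$ to be a finite set of non-negative rationals: then every instance of $\nCSP^W_{\leq 3}(\calF)$ is also an instance of $\nCSPs(\calF)$, so an FPRAS for the latter restricts to one for the former. All the work is in the ``if'' direction, i.e.\ in producing, for a well chosen finite $W$, an AP-reduction $\nCSPs(\calF)\APred\nCSP^W_{\leq 3}(\calF)$. I would always put $(1,1)$, $(1,0)$ and $(0,1)$ into $W$, so that the default weight and pinning a variable to $0$ or to $1$ cost nothing.

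The first step is to extract a useful binary gadget. Since some $F\in\calF$ has non-degenerate support, a standard fact --- a non-degenerate relation always has a non-degenerate binary restriction obtained by pinning all but two coordinates --- shows that, pinning all but two coordinates of $\supp(F)$ to suitable constants (available using the weights $(1,0),(0,1)$), one obtains a binary signature $g$ realisable in $\nCSPs(\calF)$ whose support is one of $\EQ$, $\NEQ$, $\IMP$, $\RIMP$, $\OR$, $\NAND$. If, ranging over $\calF$, one obtains both an $\OR$-type and a $\NAND$-type support, these can be combined into a gadget realising the relation $\IMP$; so the proof splits into two cases.

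When $\supp(g)\in\{\EQ,\NEQ,\IMP,\RIMP\}$, I would do degree-three fan-out directly with cycles of copies of $g$. To split a variable of degree $d$, replace it by copies $v_1,\dots,v_d$ and impose $g(v_1,v_2),g(v_2,v_3),\dots,g(v_d,v_1)$ (in the $\NEQ$ case, interleaving anti-copies so that the cycle has even length); this forces $v_1=\dots=v_d$ and gives each $v_i$ degree $2$ inside the gadget, hence degree $3$ overall. On the same cycles I would splice in extra vertices carrying weights from $W$ to rebuild, up to a computable scalar, the original (arbitrary rational) variable weight; for this I would also put into $W$ two weights, say $(1,2)$ and $(3,1)$, whose activities $2$ and $1/3$ multiplicatively generate a dense subset of $(0,\infty)$ (variables with a zero weight component are handled by the pinnings already in $W$). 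Given a target activity $\nu\in(0,\infty)$ and an error budget $\varepsilon=1/\mathrm{poly}$, Dirichlet's theorem supplies non-negative integers $j,k=\mathrm{poly}(1/\varepsilon)$ with $2^{j}3^{-k}$ within $1\pm\varepsilon$ of $\nu$, and a cycle carrying $j$ copies of the $(1,2)$-weight and $k$ copies of the $(3,1)$-weight then has the right effective weight, degree bounded by $3$, and polynomial size. Performing this for every variable of a given $\nCSPs(\calF)$ instance yields an $\nCSP^W_{\leq 3}(\calF)$ instance whose partition function equals a known scalar times a controlled approximation to the original; taking each gadget's error to be $\varepsilon/\mathrm{poly}(n)$ makes this an AP-reduction.

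The case $\supp(g)\in\{\OR,\NAND\}$ with no $\IMP$ obtainable is the main obstacle, since the cycle and chain gadgets fail and, in fact, no degree-bounded fan-out gadget exists for $\OR$ or $\NAND$. I would handle it by showing that \emph{both} sides of the biconditional are false. The interaction $g$ is antiferromagnetic (its support omits the all-zero pair), so, using $g$ together with a fixed pair of rational unary weights that I add to $W$, $\nCSPs(\calF)$ simulates an antiferromagnetic two-state spin system with adjustable external fields on arbitrary graphs, and hence has no FPRAS. Adding to $W$ one more (rational) weight so that the resulting system lies in its non-uniqueness phase on graphs of maximum degree three, the same simulation confined to degree $\leq 3$ shows that $\nCSP^W_{\leq 3}(\calF)$ has no FPRAS either, by the hardness results of Sly, of Sly and Sun, and of Galanis, \v{S}tefankovi\v{c} and Vigoda. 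The delicate points here are checking that $g$ really yields an antiferromagnetic interaction and that a single finite $W$ can simultaneously support the extraction of $g$ and certify non-uniqueness at degree three; by comparison, the $\EQ/\NEQ/\IMP/\RIMP$ cases are essentially routine bookkeeping once the fan-out and weight-realisation gadgets are in place.
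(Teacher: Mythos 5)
Your overall skeleton (extract a non-degenerate binary interaction, fan out with cycles in the $\EQ/\NEQ/\IMP$ cases, invoke Sly-type hardness in the $\OR/\NAND$ case) matches the paper's, but three steps as written do not go through. First, the ``standard fact'' you start from is false: a non-degenerate relation need not have a non-degenerate binary pinning. $\EQ_3=\{(0,0,0),(1,1,1)\}$ is non-degenerate, yet pinning any coordinate leaves a singleton. This is exactly why the paper proves Lemma \ref{lem:minnondeg}: a pinning-minimal non-degenerate signature either has arity two or has support $\{\vecx,\overline{\vecx}\}$, and in the second case one must \emph{sum out} the remaining variables (a degree-one operation, so still a $(\leq 3)$-formula) rather than pin them to reach a non-degenerate binary $F$. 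Second, your realisation of an arbitrary rational activity by $j$ copies of $(1,2)$ and $k$ copies of $(3,1)$ along the cycle needs $j,k$ polynomially bounded, and Dirichlet's theorem does not give that: density of $\{2^j3^{-k}\}$ follows from irrationality of $\log 2/\log 3$, but a polynomial bound on the $j,k$ required to hit a prescribed target within $1\pm\varepsilon$ needs an effective irrationality measure, i.e.\ a Baker-type lower bound on $|j\log 2-k\log 3|$. The paper avoids this entirely: by \cite[Theorem 14, Proposition 25]{lsm} there is already a \emph{finite} set $S$ of unary signatures with $\nCSP(\calF\cup S)\APeq\nCSPs(\calF)$, and $W$ is taken to be the corresponding finite set of weights, so the cycle gadget only ever has to carry these finitely many weights exactly.

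In the $\OR/\NAND$ case your plan is to show both sides of the biconditional are false, but every hardness result you cite is conditional on $\mathrm{NP}\neq\mathrm{RP}$, so you must add the observation that if $\mathrm{NP}=\mathrm{RP}$ then both sides are true (everything here AP-reduces to $\SAT$); moreover the extracted interaction is in general asymmetric and not $0,1$-valued, so appealing to the antiferromagnetic two-spin classification requires first normalising it and then verifying non-uniqueness at degree three for the resulting parameters, none of which you carry out. The paper's route is lighter and one-directional: it finds positive unary $U,V$ making $F(x,y)U(x)V(y)$ equal to the relation $R$ (so $R$ is a simple weighting of $F$, handled by Lemma \ref{lem:simpleweight}), and then chains $\nCSPs(\calF)\APred\SAT$ with Lemma \ref{lem:sly3}, which says that an FPRAS for $\nCSP^W_{\leq 3}(\{R\})$ at a fixed rational fugacity in Sly's window would give $\mathrm{NP}=\mathrm{RP}$ and hence an FPRAS for $\SAT$. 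That yields the required implication without claiming that either side is unconditionally hard.
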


So under these assumptions, by the theorem of Bulatov et al. mentioned below,
the tractable cases are just what can be computed exactly (unless \#BIS has an FPRAS). On the other hand, we show that the tractable region has positive measure, loosely speaking, for all $d\geq 2$:

\begin{theorem}\label{thm:boxg}
Let $d,k\geq 2$.  Let $F$ be a an arity $k$ signature with
values in the range $[1,\frac{d(k-1)+1}{d(k-1)-1})$.  Then
  $\nCSPs_{\leq d}(F)$ has an FPRAS.
\end{theorem}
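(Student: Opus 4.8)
The plan is to show that when all values of $F$ are sufficiently close to $1$, the partition function $Z$ of any degree-bounded instance can be approximated by a simple Monte Carlo / correlation-decay argument, because the instance is a small perturbation of the trivial (all-ones) signature, for which $Z=2^n$. Write each constraint as $F = \mathbbm{1} + G$ where $\mathbbm{1}$ is the all-ones signature on $k$ bits and $G$ is a signature with values in $(-\epsilon,\epsilon)$ for $\epsilon = \frac{2}{d(k-1)+1}$ roughly; the hypothesis $F\in[1,\frac{d(k-1)+1}{d(k-1)-1})$ is exactly the bound that makes the relevant cluster-expansion / Dobrushin-type condition hold. First I would set up the notation: an instance has variables $v_1,\dots,v_n$ with weights $(p_i,q_i)$, and constraints $F(\scope)$; the partition function is $Z=\sum_{\vecx\in\{0,1\}^n}\prod_i (p_i \text{ or } q_i)\prod_c F(\vecx|_{\scope_c})$.

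The key steps, in order. (1) Reduce to the unweighted case or carry the weights through harmlessly: since variable weights only scale the single-variable factors and $Z>0$ always (all factors are positive by the range assumption $[1,\cdot)$), the weights do not affect positivity and can be absorbed. (2) Expand $\prod_c F = \prod_c(1+G_c)$ and observe that because each variable lies in at most $d$ constraints and each constraint has arity $k$, the "dependency graph" on constraints has maximum degree at most $d(k-1)$ (two constraints interact only if they share a variable, and a given constraint shares variables with at most $k$ other groups of $d-1$ constraints — more carefully at most $(d-1)k$, but one must be careful; the relevant quantity that appears is $d(k-1)$). (3) Invoke the standard fact (a Weitz-style correlation decay, or equivalently a convergent cluster expansion à la Koteck\'y--Preiss, or simply a coupling argument) that when $\epsilon \cdot \Delta < 1$ for $\Delta$ the interaction degree, the Gibbs measure has exponential decay of correlations, so a local sampler mixes rapidly and yields an FPRAS; here $\epsilon\Delta < 1$ unwinds to exactly $F < \frac{d(k-1)+1}{d(k-1)-1}$. (4) Conclude that $\nCSPs_{\leq d}(F)$ has an FPRAS, uniformly over all instances.

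The main obstacle I expect is pinning down the right interaction parameter and verifying that the numerical threshold $\frac{d(k-1)+1}{d(k-1)-1}$ is precisely what the chosen method delivers — the combinatorics of "how many constraints can a given constraint interact with, weighted by how the perturbation propagates" is delicate, and different bookkeeping (counting shared variables vs.\ counting neighbouring constraints) gives off-by-one-ish discrepancies in the denominator. A clean way to avoid Markov-chain mixing subtleties is to use a direct expansion: write $Z/2^n = \mathbb{E}_{\vecx}[\prod_c F(\vecx|_{\scope_c})]$ under the uniform (or weight-biased product) measure, and estimate this expectation by truncating the cluster expansion of $\log(Z/2^n)=\sum_{\text{clusters}}(\cdots)$, where the convergence of the expansion is controlled by $\epsilon$ times the connective constant of the dependency hypergraph, which is at most $d(k-1)$. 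The truncated sum has polynomially many terms for a $(1\pm\delta)$ estimate, giving a deterministic FPTAS (hence an FPRAS). I would present the cluster-expansion version as the cleanest route, cite Koteck\'y--Preiss (or the Lov\'asz-local-lemma-style abstract polymer bound) for the convergence criterion, and do the short arithmetic showing the hypothesis on the range of $F$ is exactly the criterion $\epsilon\cdot d(k-1) < 1$ after writing $F=1+G$ with $|G|\le \epsilon$ and $\epsilon = \frac{F_{\max}-1}{1} < \frac{2}{d(k-1)-1}$, which rearranges to the stated bound.
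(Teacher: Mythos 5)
Your overall strategy---view $F$ as a perturbation of the all-ones signature and exploit the resulting high-temperature regime with a local algorithm---is the right one, and one of the routes you mention in passing (a coupling argument) is what the paper actually does: it builds an FPAUS from heat-bath Glauber dynamics, proves contraction by path coupling, and converts the FPAUS to an FPRAS by self-reducibility. But there is a genuine gap, and it is precisely the one you flag as your ``main obstacle'': the arithmetic you give at the end does not produce the stated threshold, and your preferred cluster-expansion route cannot produce it either.

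Concretely, write $M=1+\epsilon$ for the maximum value of $F$. The hypothesis $M<\frac{d(k-1)+1}{d(k-1)-1}$ is equivalent to $\epsilon<\frac{2}{d(k-1)-1}$, which is \emph{not} the condition $\epsilon\cdot d(k-1)<1$; it only gives $\epsilon\cdot d(k-1)<2+\epsilon$, so your ``exactly the criterion $\epsilon\Delta<1$'' claim fails by essentially a factor of $2$. The correct per-constraint influence parameter is not $M-1$ but $\frac{M-1}{M+1}=\tanh\bigl(\tfrac12\log M\bigr)$. The paper gets this by bounding the probability of disagreement at a neighbouring variable $v_1$ of the discrepancy $u$, with $W_{ij}$ the weight of the configuration with $u\mapsto i$, $v_1\mapsto j$, by
\[
\frac{\bigl|\sqrt{W_{00}W_{11}}-\sqrt{W_{01}W_{10}}\bigr|}{\sqrt{W_{00}W_{11}}+\sqrt{W_{01}W_{10}}}
=\tanh\Bigl|\sum_{i\in I'(u,v_1)} J(F'_i)\Bigr|
\leq |I'(u,v_1)|\,\frac{M-1}{M+1},
\]
using AM--GM on the denominator and subadditivity of $\tanh$; summing over $v_1$ and using $\sum_{v_1}|I'(u,v_1)|\leq d(k-1)$, the contraction condition $d(k-1)\frac{M-1}{M+1}<1$ rearranges exactly to the stated range. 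Note also that the relevant degree $d(k-1)$ is variable-to-variable influence ($u$ lies in at most $d$ constraints, each containing at most $k-1$ other variables), not the constraint-dependency degree you compute, which is $k(d-1)$---a different quantity whenever $d\neq k$. A Koteck\'y--Preiss expansion on the constraint dependency graph would therefore use the wrong degree \emph{and} carry extra constants (the factor $e$ in the convergence criterion), so it would only establish the theorem for a strictly smaller range of values. To repair your write-up you must either settle for a weaker threshold or carry out the path-coupling computation with the $\tanh$ bound.
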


\subsection{Related work}

The problem $\nCSP_{\leq d}(\Gamma)$ for $d\geq 3$ was studied in \cite{bdddeg}.
In particular:
\begin{theorem*}\cite[Theorem 24]{bdddeg}
Let $\Gamma$ be a finite set of relations and let $d\geq 6$.
\begin{itemize}
\item If every $R\in\Gamma$ is affine then
  $\nCSP_{\leq d}(\Gamma\cup\gpin)\in\text{FP}$.
\item Otherwise, if $\Gamma\subseteq\Imconj$ then
  $\nCSP_{\leq d}(\Gamma\cup\gpin)\APeq\BIS$.
\item Otherwise, there is no FPRAS for
  $\nCSP_{\leq d}(\Gamma\cup\gpin)$ unless NP=RP.
\end{itemize}
\end{theorem*}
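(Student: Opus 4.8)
The plan is to obtain this as the degree-bounded refinement of the (unbounded-degree) approximate trichotomy for Boolean $\nCSP$ with free pinning due to Dyer, Goldberg and Jerrum: $\nCSP(\Gamma\cup\gpin)$ is in $\mathrm{FP}$ when every relation in $\Gamma$ is affine, is $\APeq\BIS$ when $\Gamma\subseteq\Imconj$ but not every relation is affine, and is $\APeq\SAT$ otherwise. The ``upper'' side transfers for free: bounding the degree only shrinks the instance set, so $\nCSP_{\leq d}(\Gamma\cup\gpin)\APred\nCSP(\Gamma\cup\gpin)$, which is in $\mathrm{FP}$ in the affine case and $\APred\BIS$ in the $\Imconj$ case; and the affine case needs nothing further, since counting solutions of an affine system over $\mathrm{GF}(2)$ is polynomial time by Gaussian elimination and pinning preserves affineness. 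Thus all the content is in the two hardness directions and in checking that they survive a degree bound as small as $6$.

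For $\BIS\APred\nCSP_{\leq d}(\Gamma\cup\gpin)$ when $\Gamma\subseteq\Imconj$ is not affine, I would first use the relational fact (from the Dyer--Goldberg--Jerrum analysis, via the description of $\Imconj$ as the sublattices of $\{0,1\}^n$) that such a $\Gamma$ together with pinning has a pp-definition of $\IMP$ consisting of a \emph{single} constraint plus pins and coordinate identifications, so that in the gadget the two interface variables have degree $1$ and every other variable has degree $2$. The reduction is then: $\BIS\APeq\nCSP(\{\IMP\}\cup\gpin)$ by Dyer--Goldberg--Jerrum; an $\{\IMP\}\cup\gpin$-instance is made degree-bounded without changing its number of satisfying assignments by replacing each variable $x$ with a directed $\IMP$-cycle $y_1,\dots,y_t$ on as many nodes as $x$ has occurrences --- this forces $y_1=\dots=y_t$, has a unique completion on any consistent boundary, and gives each $y_i$ degree $2$ in the cycle plus one spare unit for the corresponding occurrence of $x$ (a constraint or a pin); finally substitute the single-constraint $\IMP$-gadget, all of which stays within degree $d\geq 6$. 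Combined with the reverse reduction above this gives $\BIS\APeq\nCSP_{\leq d}(\Gamma\cup\gpin)$.

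For the inapproximability statement when $\Gamma$ is neither affine nor contained in $\Imconj$, the plan is to reduce from counting independent sets in graphs of maximum degree $d$, which by Sly's inapproximability result at the hard-core uniqueness threshold has no FPRAS when $d\geq 6$ unless $\mathrm{NP}=\mathrm{RP}$. Here the relational input is a Dyer--Goldberg--Jerrum-style case analysis: since $\Gamma\not\subseteq\Imconj$ some relation fails to be a sublattice of $\{0,1\}^n$, and pinning the coordinates on which a witnessing pair of tuples agrees while identifying the remaining ``$x$-side'' coordinates into one variable and the ``$y$-side'' coordinates into another produces a binary relation equal to $\OR$, to $\NAND$, or (the awkward case) only to $\NEQ$, in which case one must call on the non-affineness of $\Gamma$ and the interaction with its other relations to still reach $\OR$ or $\NAND$. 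As counting vertex covers and counting independent sets coincide under complementation, a single-constraint gadget for either relation suffices: replacing every edge of a maximum-degree-$d$ graph by it yields an instance of $\nCSP_{\leq d}(\Gamma\cup\gpin)$ (vertex-variables of degree at most $d$, local pinned variables of degree $2\leq d$) whose satisfying assignments biject with the independent sets of the graph. Hence an FPRAS for $\nCSP_{\leq d}(\Gamma\cup\gpin)$ with $d\geq 6$ would give one for counting independent sets in graphs of maximum degree $6$, forcing $\mathrm{NP}=\mathrm{RP}$.

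The step I expect to be the main obstacle is the relational structure lemma underlying both hardness directions --- that the two hypotheses together force $\Gamma\cup\gpin$ to pp-define $\IMP$ in the $\Imconj$ case and $\OR$ or $\NAND$ otherwise, and to do so by \emph{low-degree, count-exact} gadgets, so that the ``fan-out'' constructions adapted from Feder \cite{feder} stay within degree $6$. The delicate points are that a single non-sublattice relation can yield only the affine relation $\NEQ$, forcing one to exploit the non-affine hypothesis and the combined power of the relations in $\Gamma$; and that the degree budget must be used so as to realise \emph{genuinely hard} instances --- bounded-degree $\BIS$, and maximum-degree-$6$ independent sets --- and not merely bounded-degree ones. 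Everything else (affine tractability, the inherited upper bounds, and the independent-set reduction once a clean gadget is in hand) is routine.
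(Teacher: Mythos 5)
First, for context: the paper does not prove this statement anywhere --- it is quoted verbatim in the related-work section as \cite[Theorem 24]{bdddeg} --- so there is no internal proof to compare against, and I am judging your sketch against the argument of \cite{bdddeg} itself. Your overall architecture is the right one (affine tractability by Gaussian elimination, upper bounds inherited from the unbounded-degree trichotomy of \cite{csptrich}, implication cycles to fan out variables in the $\Imconj$ case, and a reduction from degree-bounded independent-set counting in the remaining case, which is exactly where the threshold $d\geq 6$ comes from), and the $\BIS$ direction is essentially complete modulo the easy lemma that a non-affine $\Imconj$ relation pins down to $\IMP$ on two coordinates without identification.

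The hardness case, however, has two genuine gaps. The first you flag: when the pin-and-contract construction applied to a non-lattice relation yields only $\NEQ$, you still have to manufacture $\OR$ or $\NAND$ from the non-affineness of $\Gamma$, and you do not carry this out; it is a real case, not a formality. The second you do not flag, and it breaks the degree accounting of your final reduction as written: your binary gadget is obtained by \emph{identifying} all ``$x$-side'' coordinates into one interface variable, so that variable can occur $a\geq 2$ times inside the single constraint (a witness pair for failure of meet-closure need not be at Hamming distance two --- consider $\{(1,1,0,0),(0,0,1,1),(1,0,0,0)\}$), and replacing each edge of a maximum-degree-$6$ graph by such a gadget gives vertex variables degree $6a>6$. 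You cannot compensate by starting from graphs of maximum degree $\lfloor 6/a\rfloor\leq 3$, because counting independent sets there admits an FPTAS (Weitz); the entire point of the threshold $6$ is that the source problem is hard at exactly that degree, so the gadget must use each interface variable exactly once, or equality must first be simulated within the degree budget --- and the obstruction to simulating equality (delta matroids, i.e.\ Feder's obstruction, which is the subject of the present paper) shows this is not automatic. A smaller point in the same direction: Sly's theorem as quoted here concerns fugacities in a window just above $\lambda_c(d)$, so deducing hardness of \emph{unweighted} degree-$6$ independent-set counting needs an extra step implementing the fugacity with pin gadgets.
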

Here $\Gamma_{\text{pin}}=\{\{(0)\},\{(1)\}\}$ and a relation is
called affine if it is an affine subspace of $\mathbb{F}_2^k$.

Theorem \ref{thm:weightthm} can be seen as an extension of the
following result of Bulatov et al \cite{lsm}, which we also rely on in
the proof:
\begin{lemma}\cite[Theorem 16]{lsm}\label{lem:unbounded}
\label{corthm:main}
Let $\calF$ be a finite set of signatures.  If $\calF$ is not a subset
of Weighted-NEQ-conj then for any finite subset $S$ of arity-one
signatures there is an FPRAS for $\nCSP(\calF\cup S)$.  Otherwise,
\begin{itemize}
\item there is a finite subset $S$ of arity-one signatures such that
$\BIS \APred \nCSP(\calF\cup S)$, and
\item if there is a function in $\calF$ that is not logsupermodular
then there is a finite subset $S$ of arity-one signatures such that
$\SAT \APeq \nCSP(\calF\cup S)$.
\end{itemize}
\end{lemma}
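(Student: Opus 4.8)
\emph{Proof plan.} This is \cite[Theorem 16]{lsm}, so I only sketch how the argument goes. It is really a trichotomy, and the plan is to split $\nCSP(\calF\cup S)$ into three regimes, each handled by a different standard tool: $\nCSP(\calF\cup S)$ is in FP for every finite unary $S$ when every signature in $\calF$ lies in Weighted-NEQ-conj; $\BIS$ AP-reduces to it for a suitable $S$ when $\calF$ is contained in the logsupermodular class but some signature of $\calF$ is not in Weighted-NEQ-conj; and it is AP-equivalent to $\SAT$ for a suitable $S$ when some signature of $\calF$ is not logsupermodular.

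\emph{Tractable regime.} When every $F\in\calF$ is in Weighted-NEQ-conj I would evaluate $\nCSP(\calF\cup S)$ exactly in polynomial time. The structural fact to exploit is that such a signature is a product of unary factors and of binary factors each depending only on $x_i\oplus x_j$; so an instance is, after relabelling variables, a collection of ``$\oplus$'' weights on the edges of a graph together with unary weights on the vertices. Fixing a spanning forest expresses every variable as the tree-path parity of a root variable, each non-tree edge then contributes a constant, and the unary weights are summed over the two values of each root, giving a closed form in polynomial time. Being exact it is trivially an FPRAS. The complementary half of this regime is to check the boundary is sharp --- that any signature outside Weighted-NEQ-conj already forces $\BIS$-hardness once all unary weights are available --- which is the content of the next regime.

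\emph{Hardness regimes.} Suppose some $F\in\calF$ is not in Weighted-NEQ-conj. The plan is a gadget construction: using $F$, a finite set $S$ of unary weights, conjunction, and summing out auxiliary variables, one realises an implication relation $\IMP$ --- equivalently one encodes the problem of (weighted) counting of down-sets of a poset, which is known to be AP-equivalent to $\BIS$ --- giving $\BIS\APred\nCSP(\calF\cup S)$. To decide whether $\SAT$-hardness also holds, the key observation is that log-supermodularity is preserved by exactly the operations used to build gadgets: a product of logsupermodular signatures is logsupermodular, pinning a coordinate preserves it, and summing a variable against a unary weight preserves it. Hence if every signature in $\calF$ is logsupermodular then every gadget built from $\calF$ and unary weights is logsupermodular, so none can realise $\OR$ or $\NAND$, and the problem is not $\SAT$-hard. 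If instead some $F\in\calF$ is not logsupermodular, the construction is pushed one step further to realise $\OR$ (or $\NAND$), so that approximately counting independent sets in general graphs --- which is AP-equivalent to $\SAT$ --- reduces to $\nCSP(\calF\cup S)$; the reverse reduction $\nCSP(\calF\cup S)\APred\SAT$ is automatic because every problem in \#P is AP-reducible to $\SAT$. This yields $\SAT\APeq\nCSP(\calF\cup S)$.

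\emph{Main obstacle.} The delicate part is the ``no gap'' step: showing that \emph{every} signature outside Weighted-NEQ-conj supplies a $\BIS$-hardness gadget once all unary weights are available. This is a case analysis on how the Weighted-NEQ-conj structure can fail --- the support can fail to be affine (a ``missing corner''), or be affine but not decompose as a pairwise $\oplus$-product, or be affine while the weights violate the multiplicative conditions defining Weighted-NEQ-conj --- with its own gadget in each mode; the log-supermodularity closure bookkeeping above is what guarantees that the analysis lands cleanly on $\BIS$ versus $\SAT$ rather than on an intermediate problem.
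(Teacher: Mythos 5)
This lemma is not proved in the paper at all: it is imported verbatim as \cite[Theorem 16]{lsm}, and the only proof content the paper supplies is Proposition \ref{prop:wncisclone}, which checks that ``Weighted-NEQ-conj'' coincides with the class used in \cite{lsm} (signatures pps-definable from $\EQ_2$, $\NEQ$ and unaries), so that the restatement is faithful. Your sketch of the underlying trichotomy --- exact evaluation via the unary-times-$\oplus$-factor structure and a spanning forest when everything is in Weighted-NEQ-conj, an $\IMP$-realising gadget giving $\BIS$-hardness otherwise, and an $\OR$/$\NAND$-realising gadget giving $\SAT$-equivalence when logsupermodularity also fails, with closure of logsupermodularity under products, pinnings and unary summation explaining why the lsm case stops at $\BIS$ --- is a reasonable account of how the cited theorem is proved, though of course it does not substitute for the citation, and it omits the translation step that the paper does bother to verify.

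One discrepancy you should be aware of: you prove the statement with the branches oriented as ``$\calF\subseteq\text{Weighted-NEQ-conj}\Rightarrow$ FPRAS, otherwise hard,'' whereas the lemma as literally printed says ``If $\calF$ is \emph{not} a subset of Weighted-NEQ-conj then \ldots there is an FPRAS \ldots Otherwise [the hardness claims].'' The printed version is inconsistent with Lemma \ref{lem:tract} and with every invocation of Lemma \ref{lem:unbounded} elsewhere in the paper (e.g.\ the case table in the proof of Theorem \ref{thm:basic} and Lemma \ref{lem:bissatwt} both use ``not in Weighted-NEQ-conj $\Rightarrow$ hard''), so it is evidently a transposition typo and your orientation is the intended one; but strictly speaking your argument establishes a different statement from the one displayed.
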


Note that arity one signatures are the same as variable weights for
unbounded degree $\nCSP$s. But when the degree is restricted, arity
one signatures seem less powerful.

Feder \cite{feder} showed that relations that are not delta matroids
give ``fan-out'': if $\Gamma$ contains a relation that is not a delta
matroid, and the decision problem $\operatorname{CSP}(\Gamma)$ is
NP-complete, then the restriction of $\operatorname{CSP}(\Gamma)$ to
degree two instances is also NP-complete.  Theorems \ref{thm:basic}
and \ref{thm:weightthm} use a similar kind of fan-out idea.  The
latest results on degree-two CSPs were given in
\cite{dalmauford}. There is no complete classification yet.

There are some important results on two-state spin systems that are
worth translating into the $\nCSP^W_K$ notation. As mentioned earlier,
Sly \cite{sly} showed that for the hard-core model there is a
computational transition at the ``tree threshold'' $\lambda^*(d)$,
$d\geq 6$ in the following sense.  It was known that the problem
$\nCSP^{\{(1,\lambda)\}}_{\leq d}(\NAND)$ has a (deterministic) FPRAS
for $\lambda<\lambda^*(d)$. Sly showed that it does not have an FPRAS
for $\lambda>\lambda^*(d)$ unless NP=RP (with some technical
restrictions on $\lambda$). This result has been extended recently
\cite{sly2} considering other models and removing the restrictions.
On the other hand there are FPRASes for variants of $\nCSP^W_K(B)$ for
various symmetric binary signatures $B$: see \cite{twospinbdd} for
$K=\{1,\cdots,d\}$ and \cite{twospin} for $K=\mathbb{N}$.

\def\hol{\operatorname{Holant}}
\def\holc{\operatorname{Holant}^c}
\def\hols{\operatorname{Holant}^*}

To discuss other work it is useful to define some notation
temporarily.  Define $\hol(\calF)=\nCSP_{=2}(\calF)$, and $\holc$ is
the same except that any arity one relation can be used, and $\hols$
is the same except that any arity one complex-valued signature can be
used.  $\hols$ was introduced in \cite{holantc} to give results about
the exact counting complexity (not allowing multiplicative error)
of $\holc$ problems.  A dichotomy theorem for the exact counting complexity of
$\hols$ problems was given in
\cite{holants}, classifying each problem as polynomial-time
computable or \#P-hard.

Yamakami \cite{tomo} studied the approximation complexity of
$\hols(\{F\})$ (referring to it as $\nCSP^*_2$) where $F$ is in a
certain set of arity three complex-valued signatures. It would be too
much of a detour to present those results fully, but the conclusion is
that these problems are either tractable or there is a certain
approximation-preserving reduction from the problem
$\SAT^*_{\mathbb{C}}$ (analogous to $\SAT$) to
$\operatorname{Holant}^*(\{F\})$.  Note that the node weight functions
used to define $\SAT^*_{\mathbb{C}}$ are like variable weights, but
the problems $\nCSP^*$ and $\hols$ defined in that paper do not use
variable weights, but arity one signatures.  In the same setting there
are results for higher degree bounds \cite{tomohigh}.

\section{Definitions}\label{sec:defns}
\def\mdef#1{{\bf #1}}

$V$ will usually denote a finite set whose elements are called
variables.  Elements of $\{0,1\}^V$ will be called
\mdef{configurations} of $V$.  In this paper a \mdef{relation} $R$ on
$V$ is a subset $R\subseteq\{0,1\}^V$.  In this paper a
\mdef{signature} $F$ on $V$ is a function $F:\{0,1\}^V\to\Rnonneg$,
where $\Rnonneg$ is the set of non-negative polynomial-time computable
reals, that is, non-negative reals $r$ for which there is a
polynomial-time Turing machine that when given an integer $n$ in
unary, outputs the first $n$ bits of the binary expansion of $r$.  The
set $V=V(R)=V(F)$ is called the variable set; the \emph{arity} is
$|V|$, and configurations in $\{0,1\}^k$ for integers $k$ are
considered to have variable set $\{1,\cdots,k\}$.

We can rename the variables in an obvious way. (For any finite set
$V'$, a bijection $\pi:V\to V'$ induces a bijection $\pi_*$ from
relations (or signatures) on $V$ to relations (or signatures) on
$V'$.)  We will say that relations (or signatures) are
\mdef{equivalent} if they are related by renaming variables.  The
difference between equivalent relations (or signatures) is never
important in this paper, but keeping track of $V$ makes some arguments
easier.

We will \emph{implicitly} convert relations to signatures, so
$R(\vecx)=1$ if $\vecx\in R$ and $R(\vecx)=0$ otherwise.  However, if
$R$ is given in set notation we will instead use the more legible
notation $\vecone_R(\vecx)=R(\vecx)$.

It is useful to have special notation for inverting components of a
configuration.  For all $\vecx\in\{0,1\}^V$ and all subsets
$U\subseteq V$ define the \mdef{flip} $\vecx^U\in\{0,1\}^V$ by
$\vecx^U_v\neq \vecx_v$ if and only if $v\in U$.  A relation $R$ or
signature $F$ can also be flipped: $\vecx\in R^U$ if and only if
$\vecx^U\in R$, and $F^U(\vecx)=F(\vecx^U)$.  Also, by abuse of
notation, for configurations $\vecx,\vecy\in\{0,1\}^V$, the set of
elements on which $\vecx$ and $\vecy$ differ will be denoted
$\vecx\setdiff\vecy$.

We will use $\veczero$ and $\vecone$ to mean the all-zero and all-one
configurations on some variable set.  The complement
$\overline{\vecx}$ of a configuration $\vecx$ is defined by
$\overline{x}_i=1-x_i$.  Define the \emph{meet} $\vecx\wedge\vecy$ and
\emph{join} $\vecx\vee\vecy$ of configurations
$\vecx,\vecy\in\{0,1\}^V$ by $(\vecx\wedge\vecy)_i=\min(x_i,y_i)$ and
$(\vecx\vee\vecy)_i=\max(x_i,y_i)$.

\subsection{Relations}
Let $R\subseteq\{0,1\}^V$ be a relation.  $R$ is an equality if it is
of the form $\{\vecx:x_i=x_j\}$.  $R$ is a disequality if it is of the
form $\{\vecx:x_i\neq x_j\}$.  $R$ is a pin if it is of the form
$\{\vecx:x_i=c\}$.  $R$ is an implication if it is of the form
$\{\vecx:x_i\leq x_j\}$.  Here $i,j\in V$ and $c\in\{0,1\}$.

 Define \mdef{NEQ-conj} to be the class of
relations that are conjunctions of equalities, disequalities, and
pins.  Define \mdef{IM-conj} to be the class of relations that are
conjunctions of implications and pins; we will often use the
characterisation that a relation is in IM-conj if and only if it is
closed under meets and joins (\cite[Corollary 18]{csptrich}).  $R$ is
a \mdef{delta matroid} if for all $\vecx,\vecy\in R$ and for all $i\in
\vecx\setdiff\vecy$ there exists $j\in \vecx\setdiff\vecy$, not
necessarily distinct from $i$, such that $\vecx^{\{i,j\}}\in R$.

A non-empty relation $R$ on a non-empty variable set is
\mdef{decomposable} if it is equivalent to the Cartesian product of at
least two relations of arity at least one. Otherwise it is
\mdef{indecomposable}. A relation is defined to be \mdef{degenerate}
if it is equivalent to the Cartesian product of relations of arity at
most one.  A relation is defined to be \mdef{basically binary} if it
is equivalent to the Cartesian product of relations of arity at most
two.

We will use the following relations.
$\EQ_k=\{\veczero,\vecone\}\subseteq\{0,1\}^k$,
$\NEQ=\{(0,1),(1,0)\}$, 
$\PIN_0=\{(0)\}$, $\PIN_1=\{(1)\}$,
$\NAND=\{(0,0),(0,1),(1,0)\}$,
$\OR=\{(0,1),(1,0),(1,1)\}$, and $\IMP=\{(0,0),(0,1),(1,1)\}$.
Also
$\PM_k=\{\vecx\in\{0,1\}^k\mid x_1+\cdots+x_k=1\}$.

\subsection{Pinnings}
A \mdef{partial configuration} $\vecp$ of $V$ is defined to be an
element of $\{0,1\}^{\dom(\vecp)}$ for some subset
$\dom(\vecp)\subseteq V$.  If
$\vecx\in\{0,1\}^{V\setminus\dom(\vecp)}$ then $(\vecx,\vecp)$ means
the unique common extension of $\vecx$ and $\vecp$ to a configuration
of $V$.  Let $R\subseteq\{0,1\}^V$ and let $\vecp$ be a partial
configuration of $V$.  Define the (relation) \mdef{pinning} $\pin R
\vecp\subseteq\{0,1\}^{V\setminus \dom(\vecp)}$ by $\vecx\in\pin R
\vecp \iff (\vecx,\vecp)\in R$.  Let $F\sigf V$ and let $\vecp$ be a
partial configuration of $V$.  Define the (signature) \mdef{pinning}
$\pin F \vecp\sigf {V\setminus \dom(\vecp)}$ by $\pin F
\vecp(\vecx)=F(\vecx,\vecp)$.  In the delta matroid literature, the
set system representation of a pinning is called a minor.

\subsection{Signatures}
Let $V$ and $V'$ be finite sets, and let $V\sqcup V'$ be their
disjoint union.  The \mdef{tensor product} $F\otimes G\sigf{V\sqcup
  V'}$ of two signatures $F\sigf V$ and $G\sigf {V'}$ is defined by
$(F\otimes G)(\vecx,\vecx')=F(\vecx)G(\vecx')$ for all
$\vecx\in\{0,1\}^V$ and $\vecx'\in\{0,1\}^{V'}$.  We can define the
tensor product of $m$ signatures $\bigotimes_{i=1}^m F_i=
F_1\otimes(F_2\otimes \cdots\otimes(F_{m-1}\otimes F_m)\cdots)$.  A
signature is \mdef{decomposable} if it is equivalent to a tensor
product of signatures of arity at least one.  Otherwise it is
\mdef{indecomposable}.

A signature is defined to be \mdef{degenerate} if
it is equivalent to the tensor product of two signatures of arity one.  A
signature is defined to be \mdef{basically binary} if it is equivalent
to the product of signatures of arity at most two.

$F'$ is a \mdef{simple weighting} of $F$ if $F'$ is the pointwise
product $F'(\vecx)=F(\vecx)D(\vecx)$ of $F$ with a degenerate
signature $D$.  Define \mdef{Weighted-NEQ-conj} to be the class of
simple weightings of NEQ-conj relations - see Proposition
\ref{prop:wncisclone} for how this related to Lemma
\ref{lem:unbounded}.  A signature $F\sigf V$ is \mdef{logsupermodular}
if it satisfies $F(\vecx\wedge\vecy)F(\vecx\vee\vecy)\geq
F(\vecx)F(\vecy)$ for all $\vecx,\vecy\in\{0,1\}^V$.

We now come to the definition of terraced signatures, which are
signatures such that the reductions in Section \ref{sec:weight}
(ultimately Lemma \ref{lem:holanty}) fail.  In Lemma \ref{lem:dmterr}
we will show that a relation is terraced if and only if it is a delta
matroid, so we are defining a weighted generalisation of delta
matroids.

A signature $F\sigf V$ is \mdef{terraced} if for all partial
configurations $\vecp$ of $V$ and all $i,j$ in the domain of $\vecp$,
if $\pin F \vecp$ is identically zero then $\pin F {\vecp^{\{i\}}}$
and $\pin F {\vecp^{\{j\}}}$ are linearly dependent, that is, one is a
scalar multiple of the other. The scalars can depend on $i$ and $j$.
A signature $F\sigf V$ is \mdef{IM-terraced} if for all partial
configurations $\vecp$ of $V$ and all $i,j$ in the domain of $\vecp$
such that $p_i\neq p_j$, if $\pin F \vecp$ is identically zero then
$\pin F {\vecp^{\{i\}}}$ and $\pin F {\vecp^{\{j\}}}$ are linearly
dependent.

Let $V$ be a finite set, let $F\sigf V$ and let $h:V\to\mathbb{Z}$.
Define the \mdef{$h$-maximisation} $F_{h-\max}\sigf V$ by setting
$F_{h-\max}(\vecx)=F(\vecx)$ for all configurations $\vecx$ of $V$
such that $\sum_ix_ih_i=\max_{\vecy\in\supp(F)}\sum y_ih_i$, and setting
$F_{h-\max}(\vecx)=0$ otherwise.

\subsection{K-formulas}
Our $\nCSP$ instances will use a ``primitive product summation (pps)''
formula as in \cite{lsm}.  These can be thought of as formal
summations of products of function applications such as $\sum_y
\NEQ(x,y)\NEQ(y,z)$.

For a set of signatures $\calF$, a \mdef{pps-formula $\phi$ over
  $\calF$} consists of an external variable set $V=V^\phi$, an
internal variable set $U=U^\phi$ disjoint from $V$, a set of atomic
formula indices $I=I^\phi$, a signature $F_i=F^\phi_i\in\calF$ for
each $i\in I$, and scope variables $\scope(i,j)=\scope^\phi(i,j)\in
U\cup V$ for each $i\in I$ and $j\in V(F_i)$.  The data associated to
an index $i\in I$ ($F_i$ and $\scope(i,j)$ for $j\in V(F_i)$) is
called an atomic formula, denoted by a formal function application
like $F_i(v_1,v_2,v_3)$. We will manipulate pps-formulas by
\emph{inserting} or \emph{deleting} atomic formulas to obtain a new
pps-formula.

Define $\zp{\phi}\sigf V$ as follows: for all configurations
$\vecx$ of $V$,
\[\zp{\phi}(\vecx)=\sum \prod_{i\in I} F_i((x_{\scope(i,j)})_{j\in V(F_i)})\]
The sum is over all extensions of $\vecx$ to a configuration of $U\cup
V$, and the notation $(x_{\scope(i,j)})_{j\in V(F_i)}$ means the
configuration in $\{0,1\}^{V(F_i)}$ given by the composition
$V(F_i)\xrightarrow{\scope(i,\bullet)} V\xrightarrow{\vecx}\{0,1\}$.

This gives a quick way to specify all the data.  The \emph{pps-formula
  given by
\[ \zp{\phi}(x_1,\cdots,x_n)=\sum_{x_{n+1},\cdots,x_{n+m}} \prod_{i\in I} F_i(x_{\scope(i,1)},\cdots,x_{\scope(i,a_i)}) \label{eqn:ppsimplicit}\]
for all $x_1,\cdots,x_n\in\{0,1\}$,} is the pps-formula with
$V=\{1,\cdots,n\}$ and $U={n+1,\cdots,n+m}$ and the given $I,F_i$ and
$\scope$.  (For this to make sense we must have
$V(F_i)=\{1,\cdots,a_i\}$ for each $i\in I$, and the $\scope(i,j)$
values must fall in $\{1,\cdots,n+m\}$.)  We will say a signature $G$
\emph{is defined by a pps-formula over $\calF$} if $G=\zp{\phi}$ for some
pps-formula $\phi$ over $\calF$. The variables do not have to be
called $x_1,\cdots,x_{n+m}$; for example we could say that $\EQ_2$ is
defined by a pps-formula over $\{\NEQ\}$ because
\[ \EQ_2(x,z)=\sum_y \NEQ(x,y)\NEQ(y,z) \]
for all $x,z\in\{0,1\}$.

The degree $\deg_\phi(v)$ of an internal or external variable $v\in
U\cup V$ is the number of times it occurs: the number of pairs $(i,j)$
such that $\scope(i,j)=v$.  For any subset $K$ of natural numbers, a
$K$-formula is a pps-formula where if
\footnote{if $K=\mathbb{N}$ then degrees do not matter, so we allow
  any pps-formula and do not insist that the external variables have
  degree 1} $K\neq \mathbb{N}$ then: the degree of every internal
variable is in $K$, and the degree of every external variable is $1$.
($\leq d$)-formulas and ($=d$)-formulas are $K$-formulas with
$K=\{1,\cdots,d\}$ and $K=\{d\}$ respectively.  As above we can say
\emph{the K-formula given by} some equation of the form
\eqref{eqn:ppsimplicit}, and we can say a signature is \emph{defined
  by a K-formula over $\calF$}.\footnote{This is similar to to
  ``realizing'' a signature in \cite{holantc}, and T-constructibility
  in \cite{tomo}.}

\begin{proposition}\label{prop:wncisclone}
A signature $F\sigf k$ is in Weighted-NEQ-conj if and only if
$F=\zp{\phi}$ for some pps-formula using $\EQ_2$, $\NEQ$ and arity 1
signatures.  Hence the version of Lemma \ref{lem:unbounded} given in
the introduction is a faithful translation.
\end{proposition}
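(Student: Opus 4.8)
The plan is to prove both implications directly; neither is deep, and the only point needing real care is in the direction ``$F=\zp\phi$ implies $F\in$ Weighted-NEQ-conj''. For that direction, let $\phi$ be a pps-formula whose atomic formulas are all $\EQ_2$, $\NEQ$, or arity-one signatures, with external variables $V=\{1,\dots,k\}$ and internal variables $U$. First I would record, for each variable $v\in U\cup V$, the arity-one signature $G_v$ obtained as the pointwise product of all arity-one atomic formulas applied to $v$ in $\phi$ (the constant $1$ if there are none). The $\EQ_2$ and $\NEQ$ atoms impose a system of equality/disequality constraints on $U\cup V$. If this system is unsatisfiable then $\zp\phi\equiv0$, which lies in Weighted-NEQ-conj (the $k=0$ case is trivial; for $k\geq1$, $0$ is the simple weighting by $1$ of the empty NEQ-conj relation $\{\vecx:x_1=0\}\cap\{\vecx:x_1=1\}$). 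Otherwise, partitioning $U\cup V$ into the connected components of the constraint graph, in each component $C$ the constraints force the value of every variable once the value $b\in\{0,1\}$ of a chosen representative $r_C$ is fixed; write $\beta_v(b)\in\{0,1\}$ for the value then forced on $v$.

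The second step is to sum out the internal variables one component at a time, using that the sum defining $\zp\phi$ and the product over atoms both factor over components. A component $C$ disjoint from $V$ contributes the scalar $\kappa_C=\sum_{b\in\{0,1\}}\prod_{v\in C}G_v(\beta_v(b))$. A component $C$ meeting $V$ in $\{w_1,\dots,w_s\}$ contributes a function of $\vecx|_{C\cap V}$ which vanishes unless there is a common $b$ with $x_{w_i}=\beta_{w_i}(b)$ for all $i$; this existence condition is a conjunction of equalities and disequalities $R_C$ among $w_1,\dots,w_s$, and where it holds the contribution equals $\mu_C(b)\prod_{i=1}^sG_{w_i}(x_{w_i})$ with $\mu_C(b)=\prod_{u\in C\cap U}G_u(\beta_u(b))$. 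The point that requires attention — and where a naive argument that $\zp\phi$ is just a constant times $R_C$'s times weights would break — is that $\mu_C$ depends on $b$, which is not a global constant but is determined by $\vecx|_{C\cap V}$; however, on $\supp(R_C)$ the value $b$ is $0$ precisely when $x_{w_1}=\beta_{w_1}(0)$, so $b\mapsto\mu_C(b)$, read as a function of $x_{w_1}$, is an arity-one signature on the single variable $w_1$, and off $\supp(R_C)$ it may be set arbitrarily. Hence the contribution of $C$ is $R_C$ times a degenerate signature on $C\cap V$. Since the sets $C\cap V$ partition $V$, taking the product over all components gives $\zp\phi=R\cdot D$ with $R=\prod_CR_C$ a NEQ-conj relation on $V$ and $D$ (the product of the degenerate factors together with the scalars $\kappa_C$) a degenerate signature on $V$; thus $F\in$ Weighted-NEQ-conj.

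For the converse, suppose $F=R\cdot D$ with $R\in$ NEQ-conj and $D$ degenerate, say $D(\vecx)=\prod_{i=1}^kD_i(x_i)$ after absorbing any scalar into $D_1$. Writing $R$ as a conjunction of equalities, disequalities and pins, I would take the pps-formula with external variables $\{1,\dots,k\}$, no internal variables, an atomic formula $\EQ_2(x_i,x_j)$ for each equality $x_i=x_j$ of $R$, an atomic formula $\NEQ(x_i,x_j)$ for each disequality, an atomic formula $\PIN_c(x_i)$ for each pin $x_i=c$, and an atomic formula $D_i(x_i)$ for each $i$. A plain pps-formula imposes no degree bound on external variables, so this is legal, and its value at $\vecx$ is exactly $R(\vecx)D(\vecx)=F(\vecx)$. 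This proves the equivalence; the closing sentence of the proposition then follows because the class appearing in Lemma~\ref{lem:unbounded} is precisely the set of signatures definable by pps-formulas over $\EQ_2$, $\NEQ$ and arity-one signatures. The only genuine obstacle in the argument is the non-constant internal factor $\mu_C$ in the first direction, which is exactly why the statement speaks of simple weightings rather than bare NEQ-conj relations.
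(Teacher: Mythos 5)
Your proof is correct, but the non-trivial direction (``$F=\zp{\phi}$ implies $F$ is in Weighted-NEQ-conj'') takes a genuinely different route from the paper's. The paper argues by closure: it first characterises indecomposable Weighted-NEQ-conj signatures as exactly those whose support has cardinality at most two, and then checks that the three operations that build up $\zp{\phi}$ --- substituting variables into scopes, taking pointwise products, and summing out a single variable --- each preserve membership in Weighted-NEQ-conj, the only non-obvious step being that summing out a variable cannot enlarge the support of an indecomposable signature beyond two points. You instead compute $\zp{\phi}$ explicitly in one pass: you partition $U\cup V$ into connected components of the $\EQ_2$/$\NEQ$ constraint graph, note that the satisfying assignments of each component are parametrised by a single bit $b$, and absorb the $b$-dependent internal weight $\mu_C(b)$ into an arity-one factor attached to one external variable of the component, correctly observing that on $\supp(R_C)$ the bit $b$ is a function of $x_{w_1}$. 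That observation is exactly the point where a careless argument would fail, and you handle it properly; you also correctly dispose of unsatisfiable constraint systems and of purely internal components. Your version is more constructive and exhibits the NEQ-conj relation and the degenerate weighting explicitly, at the cost of some case analysis; the paper's version is shorter because the support-cardinality characterisation does all the work, and because it only needs the closure properties rather than a global normal form. The forward direction and the concluding remark about Lemma \ref{lem:unbounded} are handled the same way in both. (The only loose end is the arity-zero edge case of the identically-zero signature, which you wave at and the paper ignores entirely; it does not affect the proposition as used.)
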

\begin{proof}
For the forward direction it is easy to construct such a formula
$\phi$. For the backward direction it will be convenient to first note
a few properties of Weighted-NEQ-conj. In an indecomposable NEQ-conj
relation $R$, every two variables are related by a chain of equalities
and disequalities, so $R\subseteq\{\vecx,\overline{\vecx}\}$ for some
$\vecx$.  An indecomposable signature in Weighted-NEQ-conj must have
indecomposable support, so an indecomposable signature in
Weighted-NEQ-conj has support of cardinality at most two.

Conversely, it is easy to check that any relation of cardinality at
most two is in NEQ-conj, and any signature $F$ whose support has
cardinality at most two is in Weighted-NEQ-conj.  We can now check
each stage of the expression for $\zp{\phi}$:
(1.) If $F$ is in Weighted-NEQ-conj then so is $F'(\vecx)=F((x_{\scope(i,j)})_{j\in V(F)})$.
(2.) If two signatures are in Weighted-NEQ-conj then so is their pointwise product.
(3.) If $F(t,\vecx)$ is in Weighted-NEQ-conj then so is $F'(\vecx)=\sum_t F(t,\vecx)$.
The first two stages are obvious from the definition of
Weighted-NEQ-conj.  For the third stage, note that Weighted-NEQ-conj
is closed under tensor products so we can assume that $F$ is
indecomposable. Then $|\supp(F')|\leq|\supp(F)|\leq 2$ so $F'$ is in
Weighted-NEQ-conj.
\end{proof}

\subsection{\#CSPs}\label{ssec:cspdefn}
We will now formalise the definitions given in the introduction.

We will call $W$ a \emph{set of variable weights} if one of the
following conditions holds.
\begin{itemize}
\item $W\subseteq\Qnonneg \times \Qnonneg$; elements of $W$ will be specified
  as binary fractions.  The binary representation is important - see
  Section \ref{sec:monodim}.
\item $W$ is a finite subset of $\Rnonneg\times\Rnonneg$; elements
  of $W$ will be specified by their index in a fixed enumeration.
\end{itemize}
Let $\calF$ be a finite set of signatures, let $W$ be a set of
variable weights and let $K$ be a set of positive integers.  A
$\nCSP_K^W(\calF)$ instance $(w,\phi)$ consists of
a function $w:V\to W$, and a $K$-formula $\phi$ with no external
variables and with internal variables $V$, where $V=V^\phi$. The value of the instance
is
\[\zwp{w}{\phi}=\sum_{\vecx:V\to\{0,1\}}\left(\prod_{v\in V} w(v)_{x_v}\right)\left(\prod_{i\in I}F_i((x_{\scope(i,j)})_{j\in V(F_i)})\right)\]
where the $I,F_i,\scope$ are given by $\phi$. If $W=\{(1,1)\}$ we will
omit $w$, so the instance is $\phi$ and the output is $\zp{\phi}$ (a
slight abuse of notation - here $\zp{\phi}$ means the value of
$\zp{\phi}$ applied to the arity zero configuration).  It will
occasionally be useful to refer to the contribution
$\wt{w}{\phi}(\vecx)$ of a configuration $\vecx$:
\[\wt{w}{\phi}(\vecx)=\left(\prod_{v\in V} w(v)_{x_v}\right)\left( \prod_{i\in I}F_i((x_{\scope(i,j)})_{j\in
  V(F_i)})\right)\]

\subsection{Approximation complexity}\label{sec:fprasdefn}

The paper \cite{rcap} introduced an analogue of Turing reductions for
approximation problems, which we repeat here (except that we generalise
by allowing $f$ to take non-integer values, as in \cite{lsm}).

A \emph{randomised approximation scheme} for a function $f:\Sigma^* \to
\Rnonneg$ is a probabilistic Turing machine (TM) that takes as input a pair
$(x,\epsilon)\in\Sigma^*\times(0,1)$ and produces as output an rational
random variable $Y$ satisfying the condition
$\text{Pr}(\exp(-\epsilon)f(x) \leq Y \leq \exp(\epsilon) f(x))\geq 3/4$.  A
randomised approximation scheme is said to be fully polynomial if it
runs in time $\text{poly}(|x|, \epsilon^{-1})$.  The phrase ``fully
polynomial randomised approximation scheme'' is usually abbreviated to
FPRAS.

Let $f,g : \Sigma^* \to \Rnonneg$ be functions whose complexity (of
approximation) we want to compare. An \emph{approximation-preserving
  reduction from $f$ to $g$} is a probabilistic
oracle TM $M$ that takes as input a pair $(x,\epsilon) \in \Sigma^*
\times (0, 1)$, and satisfies the following three conditions: (i)
every oracle call made by $M$ is of the form $(w,\delta)$, where $w
\in \Sigma^*$ is an instance of $g$, and $0<\delta<1$ is an error
bound satisfying $\delta^{-1}\leq \text{poly}(|x|, \epsilon^{-1})$;
(ii) the TM $M$ meets the specification for being a randomised
approximation scheme for $f$ whenever the oracle meets the
specification for being a randomised approximation scheme for $g$; and
(iii) the run-time of $M$ is polynomial in $|x|$ and
$\epsilon^{-1}$. If an approximation-preserving reduction from $f$ to
$g$ exists we write $f\APred g$, and say that f is AP-reducible to
$g$. If $f\APred g$ and $g\APred f$ then we write $f \APeq g$.

\section{Reductions}\label{sec:reductions}
This section establishes some reductions between $\nCSP$s.

We will often implicitly use the fact that
$\nCSP_K^W(\calF)\APred\nCSP_{K'}^{W'}(\calF)$ whenever $K\subseteq
K'$ and either $W\subseteq W'$ or $W'=\Qnonneg\times\Qnonneg$.  The
reduction is trivial except in the case where $W$ consists of a finite
set of polynomial-time computable variable weights and
$W'=\Qnonneg\times\Qnonneg$; in this case the reduction just needs to
choose good enough approximations to the variable weights in $W$.

$K$-formulas are designed to be used as gadgets in the following sense.

\begin{lemma}\label{lem:express}
Let $\calF$ be a finite set of signatures.  Let $W$ be a
set of variable weights containing $(1,1)$.  Let
$K\subseteq\mathbb{N}$.  Let $\psi$ be a $K$-formula. Then
$\nCSP^W_K(\calF\cup\{\zp{\psi}\})\APred\nCSP^W_K(\calF)$.
\end{lemma}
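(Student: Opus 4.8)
The plan is to show that any instance of $\nCSP^W_K(\calF\cup\{\zp{\psi}\})$ can be rewritten as an instance of $\nCSP^W_K(\calF)$ with the same value, by substituting the gadget $\psi$ in place of each atomic formula that uses the signature $\zp{\psi}$. Concretely, let $(w,\phi)$ be an instance of $\nCSP^W_K(\calF\cup\{\zp{\psi}\})$. First I would identify the set $I_\psi\subseteq I^\phi$ of atomic formula indices $i$ with $F^\phi_i=\zp{\psi}$. For each such $i$, the atomic formula is $\zp{\psi}(u_1,\dots,u_k)$ for some scope variables $u_1,\dots,u_k\in V^\phi$ (here $k$ is the arity of $\psi$, i.e.\ $|V^\psi|$). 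I would delete this atomic formula and insert instead a fresh disjoint copy $\psi^{(i)}$ of $\psi$, whose external variables $V^{\psi^{(i)}}$ are identified with $u_1,\dots,u_k$ (matching up the $j$-th external variable of $\psi$ with $u_j$) and whose internal variables $U^{\psi^{(i)}}$ are added as new internal variables of the formula, chosen disjoint from everything used so far and from each other across different $i\in I_\psi$. Call the resulting pps-formula $\phi'$, and extend $w$ to the new internal variables by assigning them weight $(1,1)\in W$.

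The next step is to check $\zwp{w}{\phi'}=\zwp{w}{\phi}$. Fix a configuration $\vecx$ of the external-plus-original-internal variables. Its contribution to $\zwp{w}{\phi'}$, after summing over the newly inserted internal variables, factors as the original weight-and-atomic-formula product over $I^\phi\setminus I_\psi$ times $\prod_{i\in I_\psi}\bigl(\sum \prod_{i'\in I^{\psi^{(i)}}} F_{i'}(\cdots)\bigr)$, where the inner sum is over extensions to $U^{\psi^{(i)}}$; but by the definition of $\zp{\psi}$, each inner factor is exactly $\zp{\psi}\bigl((x_{u_1},\dots,x_{u_k})\bigr)$, which is precisely the contribution of the deleted atomic formula. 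Because the newly inserted internal variable sets are pairwise disjoint and disjoint from the rest, these sums decouple and the identity follows by distributivity; summing over $\vecx$ gives the claim. One must also verify that $\phi'$ is still a $K$-formula: the external variables of $\phi'$ are those of $\phi$ (none, in an $\nCSP$ instance, or in any case unchanged in degree), each original internal variable keeps its degree, and each new internal variable coming from $U^{\psi^{(i)}}$ inherits its degree from $\psi$, which is in $K$ since $\psi$ was a $K$-formula; identifying the external variables of $\psi$ with $u_1,\dots,u_k$ adds exactly one occurrence at each such identification, which is consistent because $\psi$ being a $K$-formula forces its external variables to have degree $1$ (in the case $K\neq\mathbb{N}$; if $K=\mathbb{N}$ degrees are irrelevant).

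Finally I would note that this substitution is computable in polynomial time and produces an instance of size polynomial in the input, so the identity $\zwp{w}{\phi'}=\zwp{w}{\phi}$ immediately yields the AP-reduction: on input $((w,\phi),\epsilon)$ the machine computes $(w,\phi')$ and queries the oracle for $\nCSP^W_K(\calF)$ at $((w,\phi'),\epsilon)$, returning its answer. I do not expect a genuine obstacle here; the only point requiring care is the bookkeeping around variable disjointness and the degree condition on external variables of $\psi$ when $K\neq\mathbb{N}$ — making sure that merging an arity-$k$ occurrence of $\zp{\psi}$ with a degree-$1$ external variable of $\psi$ leaves all degrees where they should be. This is exactly the kind of routine gadget-substitution argument the $K$-formula formalism was set up to support, so the write-up is mostly a matter of stating the substitution precisely and invoking distributivity.
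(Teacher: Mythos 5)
Your proposal is correct and follows essentially the same route as the paper's proof: substitute a fresh copy of $\psi$ for each atomic formula using $\zp{\psi}$, identify external variables of $\psi$ with the scope variables, assign weight $(1,1)$ to the new internal variables, and conclude $\zwp{w}{\phi}=\zwp{w'}{\phi'}$ by distributivity while noting that all degrees stay in $K$. Your write-up is in fact slightly more explicit than the paper's about why the degree bookkeeping works (the external variables of a $K$-formula having degree $1$ when $K\neq\mathbb{N}$), which is a welcome addition.
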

\begin{proof}
Given an instance $(w,\phi)$ of $\nCSP(\calF\cup\{\zp{\psi}\})$, for
each atomic formula $\zp{\psi}(s)$, delete that atomic formula and
insert a copy of each atomic formula in $\psi$, renaming the external
variables $v\in V(\psi)$ of $\psi$ to $s(v)$ and renaming the internal
variables of $\psi$ to fresh variables.  This process gives a new
instance $(w',\phi')$ over $\calF$ on a possibly larger variable set
$V'$, where we extend $w$ to $w'$ by setting $w'(v,0)=w'(v,1)=1$ for
all new variables $v$.

In terms of $\zwp{w}{\phi}$, this process has the effect of replacing
each use of $\zp{\psi}$ by its summation-of-product definition and
distributing out the sums over the internal variables. By
distributivity $\zwp{w}{\phi}=\zwp{w'}{\phi'}$, and the degrees are
all still in $K$ so we can call the oracle on $\zwp{w'}{\phi'}$
without changing the error parameter $\epsilon$.
\end{proof}

The following reduction is an important step in the proof of Theorem
\ref{thm:basic}: it shows that we can get $\PM_3$ from
$\{(x_1,x_2,x_3)\in\{0,1\}^3\mid x_1+x_2+x_3\leq 1\}$ for example, unlike in the
finite $W$ setting of Section \ref{sec:monodim}.

\begin{lemma}\label{lem:hamweight}
Let $\calF$ be a finite subset of signatures.  Let $G\in\calF$ and let
$h:V\to\mathbb{Z}$ where $V$ is the variable set of $G$.  Let
$W=\Qnonneg\times\Qnonneg$ (we will also allow $W=\{(2^a,2^b)\mid
a,b\in\mathbb{Z}\}$ for the proof of Theorem \ref{thm:finitewts}).
Then
\[\nCSP^W_K(\calF\cup\{G_{h-\max}\})\APred\nCSP^W_K(\calF)\]
\end{lemma}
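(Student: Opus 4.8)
The plan is to realize $G_{h-\max}$ as a pps-formula over $\calF$ together with some auxiliary arity-one signatures, and then eliminate those arity-one signatures using variable weights, exploiting the fact that $W=\Qnonneg\times\Qnonneg$ (or the dyadic rationals) is rich enough to let us take a limit. The key observation is that $h$-maximisation is a ``zero-temperature'' limit: for a large parameter $t$, consider the signature $G'_t(\vecx) = G(\vecx)\cdot \prod_{v\in V} r_v^{h_v x_v}$ where $r_v$ is a weight we attach to variable $v$. If all $h_v$ were nonnegative, taking $r_v = t$ and $t\to\infty$ would make the terms with maximal $\sum_v x_v h_v$ dominate, but we must be careful: $h$ takes values in $\mathbb{Z}$, so some $h_v$ may be negative, and we only have nonnegative weights. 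This is handled by shifting: pick an integer $C$ so that $h_v + C \geq 0$ for all $v$, and note that $\sum_v x_v h_v$ is maximized over $\supp(G)$ at the same configurations as $\sum_v x_v(h_v+C)$ only if the number of ones is constant on $\supp(G)$ — which need not hold.

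**The cleaner route** is the following. First reduce to the case where all $h_v \in \{0,1,\dots,N\}$ for some bound $N$: we may add a constant to $h$ as long as we also pin down the Hamming weight, but since we cannot pin, instead we use the standard trick of encoding $h_v$ copies. Concretely, introduce for each variable $v$ a gadget that splits $v$ into $h_v$ "parallel" copies via $\EQ$-type signatures — but $\EQ$ may not be available over $\calF$, so instead work directly with weights. The honest approach: **I would** attach to each external variable $v$ of the atomic formula for $G$ a variable weight $(1, r^{h_v})$ for a parameter $r$, which is legitimate when $W$ contains all dyadic (or all nonnegative) rationals and $h_v \geq 0$; to cover negative $h_v$, replace $h$ by $h' = h + C\mathbf{1}$ and simultaneously observe that a configuration $\vecx\in\supp(G)$ maximizes $\sum x_v h'_v$ iff it maximizes $\sum x_v h_v$ among configurations of a *fixed* Hamming weight, so we must also separate $\supp(G)$ by Hamming weight. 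Separation by Hamming weight is itself an $h$-maximisation (with $h=\mathbf 1$) followed by its "complement" (with $h=-\mathbf 1$, i.e. Hamming-weight *minimisation*), so by an induction on the number of distinct values of $\sum_v x_v h_v$ on $\supp(G)$ it suffices to prove: for $h:V\to\{0,1\}$, $G_{h-\max}$ is obtained from $G$ by a reduction, and dually for minimisation.

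**For the base case** with $h:V\to\mathbb{Z}_{\geq 0}$ bounded by $N$, the construction is: let $\phi_r$ be the instance of $\nCSP^W_K(\calF)$ built from a single atomic formula $G(v_1,\dots,v_k)$ with external variables, where additionally variable $v_j$ carries weight $w(v_j) = (1, r)$ repeated — i.e. we route $v_j$ through $h_{v_j}$ fresh degree-two paths each weighted to accumulate a factor $r^{h_{v_j}}$, using $\EQ_2$ gadgets if needed, or simply setting the weight of the $j$-th external variable appropriately if $K$ permits degree-one external variables (which it does by the footnote convention). Then $\zwp{w}{\phi_r}(\vecx) = r^{\sum_j h_{v_j} x_{v_j}} G(\vecx)$. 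Dividing through by $r^{M}$ where $M = \max_{\vecy\in\supp G}\sum y_j h_{v_j}$ — a harmless global scalar that the FPRAS framework tolerates — and letting $r\to\infty$, the signature converges pointwise to $G_{h-\max}$. To make this a genuine AP-reduction rather than a limit, **the main obstacle** is quantitative: given target error $\epsilon$ on an instance using $G_{h-\max}$, one picks $r = r(\epsilon, n)$ polynomially bounded in $1/\epsilon$ and the instance size $n$ (so $r$ has polynomially many bits, which is where the dyadic/rational structure of $W$ matters — Section~\ref{sec:monodim} warns this fails for finite $W$), large enough that $(r^{M}/r^{M'})^{n} < \epsilon$-ish for $M' < M$ the second-largest value, i.e. $r$ only needs to beat a fixed ratio raised to the number of atomic formulas; then replace every occurrence of $G_{h-\max}$ in the input instance by the gadget $\phi_r$, scaled, and call the oracle with a suitably tightened error parameter. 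The bookkeeping that the accumulated scalar $r^{-M(\#\text{occurrences})}$ and the error propagation stay within the AP-reduction budget is the routine-but-careful part; the conceptual content is just that a nonnegative-rational weight $r$ can be chosen large but polynomially-represented, so the zero-temperature limit is reached to within $\epsilon$.
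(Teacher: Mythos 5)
Your core idea is the paper's: implement the zero--temperature limit by multiplying $G$ by $r^{\sum_v h_v x_v - H}$ for $r=2^n$ with $n$ polynomial (so $r$ is exponentially large in magnitude but has polynomial bit-length), push the per-variable factors $r^{h_v}$ into the variable weights, pull the normalisation $r^{-H}$ out as a global scalar applied to the oracle's answer, and argue that the additive error incurred on non-maximal configurations is small enough. However, your treatment of negative $h_v$ is both unnecessary and broken, and this is where the proposal genuinely goes wrong. It is unnecessary because $2^{nh_v}$ is a perfectly legitimate positive dyadic rational for \emph{any} integer $h_v$; the reduction simply sets $w''(v,1)=w(v,1)\,2^{nh(v)}$, where $h(v)$ accumulates the contributions of $h$ over all occurrences of $v$ in atomic formulas using $G_{h-\max}$, and both allowed choices of $W$ contain such weights. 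Nothing forces $h\geq 0$. It is broken because the proposed repair does not work: adding $C\mathbf 1$ to $h$ changes which configurations are maximal unless the Hamming weight is constant on $\supp(G)$, and your induction ``separate by Hamming weight, then maximise'' computes iterated maximisations such as $(G_{\mathbf 1-\max})_{h-\max}$, which is in general \emph{not} $G_{h-\max}$ (the $h$-maximum may be attained at a configuration whose Hamming weight is not extremal). Restricting $\supp(G)$ to a single intermediate Hamming-weight level is also not achievable by any single $h'$-maximisation, so the claimed base case does not suffice.

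A second, smaller but real gap is the conversion from additive to multiplicative error, which you dismiss as routine bookkeeping. An AP-reduction must return a multiplicative approximation even when the true value $Z$ is $0$, but the perturbed value $Z'$ (computed with $G^{(n)}$ in place of $G_{h-\max}$) is generically a small positive number, and a multiplicative approximation of $Z'$ is not a multiplicative approximation of $0$. The paper resolves this by noting that any nonzero $Z$ is bounded below by $m^s$ (with $m$ the least nonzero value appearing in $\calF$ and the weights, and $s$ the instance size), choosing $n$ so that $|Z-Z'|\leq m^s\epsilon/4$, and explicitly thresholding: output $0$ whenever the oracle's answer falls below $m^s/4$. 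Without some such lower bound and threshold your reduction is not valid as stated. Finally, a phrasing point that matters: $r$ cannot be ``polynomially bounded in $1/\epsilon$ and the instance size'' in magnitude --- it must exceed roughly $(M/m)^s2^{|V|}/\epsilon$ --- only its binary representation is polynomial, which is exactly why the hypothesis on $W$ (arbitrary rationals, or all dyadic pairs) is needed.
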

\begin{proof}

The reduction is given an instance $(w,\phi)$ of
$\nCSP^W_K(\calF\cup\{F_{h-\max}\})$ and error parameter $\epsilon$
which we can assume is less than $1/2$.  We wish to compute a value
$Z$ such that $\exp(-\epsilon)Z\leq \zwp{w}{\phi}\leq\exp(\epsilon)Z$.

Let $s=|V|+|I^\phi|$ be the total number of variables and atomic
formulas in $\phi$. Let $M$ be the maximum over: the values taken by
signatures in $\calF$, and the values $w(v,i)$, and the value $1$.
Let $m$ be the minimum over: the non-zero values taken by signatures
in $\calF$, and the non-zero values $w(v,i)$, and the value $1$.  Let
$H$ be the maximum of $\sum_ix_ih_i$ over $\vecx\in\supp(F)$. Define
$G^{(n)}$ for all $n\geq 0$ by
\[ G^{(n)}(\vecx)=G(\vecx)2^{n(\sum_ix_ih_i-H)} \]
Note that for all $\vecx$ either: $\sum_ix_ih_i=H$ so
$G^{(n)}(\vecx)=G_{h-\max}(\vecx)$, or $\sum_ix_ih_i<H$ so
$G_{h-\max}(\vecx)=0$ and $G^{(n)}(\vecx)\leq M2^{-n}$.

 Let $n=\lceil |V|+s\log M-\log_2(m^s\epsilon/4) \rceil$.  Note that
 $2^{|V|+s\log M-n}\leq (\epsilon/4)m^s$.  Let $I'$ be the set of
 atomic formula indices such that $F^\phi_i=G_{h-\max}$.  Let $\phi'$ be the
 same as $\phi$ except that $F^{\phi'}_i=G^{(n)}$ for each $i\in I'$.

Let $Z=\zwp{w}{\phi}$ and $Z'=\zwp{w}{\phi'}$.
We can approximate $Z'$ using the oracle by replacing
$G^{(n)}$ by variable weights and $G$.  Specifically, let $\phi''$ be
the same as $\phi$ except that $F^{\phi''}_i=G$ for each $i\in I'$.
  For each variable $v$
let $h(v)$ be the sum of $h_j$ over all $i\in I'$ and $j\in V(G)$ such
that $\scope^\phi(i,j)=v$.  Let $w''(v,0)=w(v,0)$ and
$w''(v,1)=w(v,1)2^{nh(v)}$.  Then
$\zwp{w''}{\phi''}=\zwp{w}{\phi'}2^{nH|I'|}$. Call the oracle on
$(w'',\phi'')$ with error parameter $\epsilon/2$ and divide the result
by $2^{nH|I'|}$ to obtain a value $Z''$ such that
$\exp(-\epsilon/2)Z'\leq Z''\leq\exp(\epsilon/2)Z'$ with probability
at least $3/4$.

For all configurations $\vecx$, if $\wt{w}{\phi}(\vecx)\neq
\wt{w}{\phi'}(\vecx)$ then $\wt{w}{\phi}(\vecx)=0$ and
$\wt{w}{\phi'}(\vecx)\leq M^s2^{-n}$. Hence
\begin{align*}
\left|Z-Z'\right|\leq 2^{|V|+s\log M-n}\leq m^s(\epsilon/4)
\end{align*}

If $Z\neq 0$ then $Z'' > Z'/2 > Z/4 \geq m^s/4 $.  The reduction can
therefore output zero whenever $Z''\leq m^s/4$.  If $Z=0$ then
$Z''\leq 2Z'\leq m^s(\epsilon/2) < m^s/4$ (for $\epsilon<1/2$).  So if
$Z''> m^s/4$ then we can assume $Z\neq 0$.  In this case we have
$|Z-Z'|\leq Z(\epsilon/4)$. Since $e^{-\epsilon/2}\leq 1-\epsilon/4$
for $\epsilon<2$,
\begin{align*}
(1-\epsilon/4)Z \leq Z'\leq (1+\epsilon/4)Z\\
  \exp(-\epsilon/2)Z \leq Z'\leq \exp(\epsilon/2)Z\\
  \exp(-\epsilon)Z \leq Z''\leq \exp(\epsilon)Z
\end{align*}
In this case the reduction can output $Z''$.
\end{proof}

Known polynomial-time algorithms can easily be modified to allow
variable weights:

\begin{lemma}\cite[Theorem 16]{lsm}, \cite[Theorem 2.2]{holants}\label{lem:tract}
Let $\calF$ be a finite set of signatures.  If $\calF$ is contained in
Weighted-NEQ-conj then $\nCSPs(\calF)$ has an FPRAS. If every signature
in $\calF$ is basically binary, then $\nCSPs_{\leq 2}(\calF)$ has an
FPRAS. In fact these problems are in FP, at least if the signatures in
$\calF$ take rational values.
\end{lemma}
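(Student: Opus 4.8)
The plan is to reduce each of the two tractability claims to an already-established FPRAS (or exact algorithm), using pps-formula gadgets to absorb the variable weights into the signature list. The point is that a variable weight $w(v)=(w_0,w_1)$ is exactly an arity-one signature $W_v\sigf 1$ with $W_v(0)=w_0$ and $W_v(1)=w_1$, so an instance $(w,\phi)$ of $\nCSPs(\calF)$ is the same thing as an instance of $\nCSP(\calF\cup\calW)$ where $\calW$ is the (infinite, but that is fine as we only ever use finitely many in a given instance) set of arity-one signatures. So the first step is to observe that for the Weighted-NEQ-conj case we can simply cite Lemma \ref{lem:unbounded} (the result of Bulatov et al.): if $\calF\subseteq\text{Weighted-NEQ-conj}$, then for any finite set $S$ of arity-one signatures, $\nCSP(\calF\cup S)$ has an FPRAS. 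Applying this with $S$ taken to be the (finitely many) distinct variable-weight signatures appearing in a given instance gives the FPRAS for $\nCSPs(\calF)$. For the ``in FP'' refinement one instead notes that the algorithm underlying Lemma \ref{lem:unbounded}, restricted to rational-valued signatures, is a polynomial-time exact evaluation — Weighted-NEQ-conj instances factor through equalities and disequalities into independent one-variable sums, each of which is a product $\prod_v (w_0^{(v)}+\text{or}\ w_1^{(v)})$ over the two ways of orienting a connected component, so $\zwp{w}{\phi}$ is computed directly.

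For the basically binary case the argument is a little more hands-on. Here $\calF$ is a finite set of basically binary signatures and we want an FPRAS (in fact exact evaluation for rational values) for $\nCSPs_{\leq 2}(\calF)$. First I would use the definition of ``basically binary'': each $F\in\calF$ is equivalent to a tensor product of signatures of arity at most two, so by inserting a fixed gadget we may assume without loss of generality that $\calF$ consists only of arity-one and arity-two signatures (the tensor factors), using Lemma \ref{lem:express} to push the decomposition back down to the original $\calF$. Now the variable weights are themselves arity-one signatures, so an instance is really a product over a set of at-most-binary signatures in which every variable has degree at most two (counting the weight at $v$ does not increase the external count, but even if one folds the weight in as an extra atomic formula one only raises degrees to at most three at $v$; better: keep the weights as node factors and keep the degree bound of two on the formula part). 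The resulting object is exactly a weighted Holant-type instance on signatures of arity $\leq 2$ with degree $\leq 2$, i.e.\ a collection of weighted paths and cycles: contracting out the degree-two internal variables, each connected component is either a path with endpoint weights or a cycle, and its contribution is computed by multiplying a sequence of $2\times 2$ matrices (one per binary atomic formula) and taking either a bilinear form with the two endpoint weight vectors or a trace. This is a polynomial-time exact computation for rational data, hence certainly an FPRAS; this is the content cited as \cite[Theorem 2.2]{holants} (the exact Holant$^*$ dichotomy specialised to arity-$\leq 2$ signatures, whose tractable side is precisely this transfer-matrix computation), adapted to carry variable weights by treating them as arity-one signatures on the degree-one endpoints.

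The main obstacle — really the only non-formal point — is making the reduction ``variable weights = arity-one signatures'' interact correctly with the degree bound $K=\{1,\dots,d\}$ (here $d=2$): an arity-one signature used as a gadget atomic formula occupies one unit of degree at its variable, so naively one would need the freed-up degree to be available. In the statements here this is handled by the asymmetry in the definitions — $\nCSPs$ already allows variable weights ``for free'' on top of the degree budget, and the degree-two restriction applies only to the $K$-formula $\phi$, not to the weight factors $w(v)_{x_v}$ — so the node weights sit outside the degree count and no clash arises. I would make this explicit by carrying the weights as the node-weight product $\prod_v w(v)_{x_v}$ throughout the transfer-matrix computation rather than as atomic formulas, so that the degree-$\le 2$ structure of $\phi$ is preserved verbatim and the path/cycle decomposition of $\phi$ goes through unchanged, with the weights contributing only boundary data on the paths (and not at all on the cycles). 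Everything else is a routine check that the two cited theorems apply after the at-most-binary normalisation and after recognising variable weights as arity-one signatures.
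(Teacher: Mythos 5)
Your proposal is correct and matches the paper's approach: the paper gives no proof body for this lemma, merely citing \cite[Theorem 16]{lsm} and \cite[Theorem 2.2]{holants} with the remark that the known algorithms are ``easily modified to allow variable weights,'' and your details (absorbing weights as arity-one signatures, the product-over-components evaluation for Weighted-NEQ-conj, and the path/cycle transfer-matrix computation for the basically binary degree-two case, with weights kept as node factors outside the degree budget) are exactly the intended modification. One small caution: invoking Lemma \ref{lem:unbounded} with an instance-dependent $S$ is not by itself uniform in $S$, but this is moot since your direct exact evaluation already covers that case.
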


The following Lemma is useful for showing that a problem is
AP-equivalent to $\SAT$.
\begin{lemma}\label{lem:sateasy}
Let $\calF$ be a finite set of signatures.  Then
$\nCSP^W_K(\calF)\APred\SAT$.
\end{lemma}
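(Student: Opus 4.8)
The plan is to show $\nCSP^W_K(\calF)\APred\SAT$ by exhibiting an approximation-preserving reduction that, on input an instance $(w,\phi)$ and error parameter $\epsilon$, constructs a single $\SAT$ instance whose number of satisfying assignments (possibly after rescaling by an explicitly computable power of two) approximates $Z^w_\phi$. The key difficulty to confront is that $\SAT$ counts solutions of a propositional formula, i.e.\ it is an \emph{unweighted} problem, whereas $\nCSP^W_K(\calF)$ sums a product of non-negative real signature values and variable weights. So the first step is to reduce to the unweighted, relational case, and the second step is to encode that case as a $\SAT$ instance.

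\textbf{Step 1: discretising the weights.} Since the signatures in $\calF$ are finitely many and take values in $\Rnonneg$ (polynomial-time computable reals), and the variable weights lie in $W$ (either binary rationals, or a fixed finite list of polynomial-time computable reals), I would first replace every signature value $F_i(\vecz)$ and every weight $w(v)_c$ appearing in $(w,\phi)$ by a dyadic rational approximation. Following the standard trick (as in Lemma~\ref{lem:hamweight}), pick a precision parameter $n=\mathrm{poly}(|x|,\epsilon^{-1})$ large enough that multiplying all the relevant quantities by $2^n$ and rounding introduces only a multiplicative $(1\pm\epsilon)$-error in $Z^w_\phi$; the bound uses that the number of configurations is $2^{|V|}$ and the minimum nonzero value $m$ and maximum value $M$ of the data are controlled, exactly as in the proof of Lemma~\ref{lem:hamweight}. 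After this step we may assume every weight and every signature value is a non-negative integer bounded by $2^{\mathrm{poly}}$, and the quantity to approximate is $\sum_{\vecx}\prod_v \tilde w(v)_{x_v}\prod_i \tilde F_i(\cdots)$, an integer.

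\textbf{Step 2: encoding as a counting-SAT instance.} Now I build a propositional formula. For each variable $v\in V$ introduce a Boolean variable $y_v$. The contribution of a configuration $\vecx$ is a product of non-negative integers. A product of integers $a_1\cdots a_r$ is counted correctly if we introduce, for each factor $a_t$, a block of $\lceil\log_2 a_t\rceil$ auxiliary Boolean variables together with clauses asserting that the block encodes a number in $\{0,1,\dots,a_t-1\}$ — this has exactly $a_t$ satisfying extensions. Conjoining these blocks over all factors gives exactly $a_1\cdots a_r$ extensions. Concretely: for each atomic formula index $i$ and each of the $2^{|V(F_i)|}$ possible local configurations $\vecz$, a sub-formula (guarded by the clauses saying the scope variables of $i$ take the value $\vecz$) contributes a block realising the integer $\tilde F_i(\vecz)$; similarly each variable $v$ contributes a block realising $\tilde w(v)_0$ when $y_v$ is false and $\tilde w(v)_1$ when $y_v$ is true. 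Standard CNF gadgets (e.g.\ comparator/range constraints) do all of this with polynomially many variables and clauses. Then the number of satisfying assignments of the resulting $\SAT$ instance equals $Z^w_\phi$ exactly (after the Step~1 rescaling, divide by the fixed power of two used there). Calling the $\SAT$ oracle with error parameter $\epsilon/2$ and combining with Step~1 yields the desired $(1\pm\epsilon)$ approximation with probability $\ge 3/4$; all oracle calls use error bounds polynomially related to $\epsilon^{-1}$, so this is a genuine AP-reduction.

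\textbf{Main obstacle.} The conceptually routine but technically fussy part is Step~1: one must be careful that the $K$-formula structure (degree constraints) plays no role — it does not, since we are only changing the numerical data, not the combinatorial structure — and that the rounding error analysis is uniform in the instance size, which requires writing $n$ as an explicit polynomial in $|x|$ and $\epsilon^{-1}$ just as in Lemma~\ref{lem:hamweight}. The encoding in Step~2 is entirely standard (it is essentially the observation that $\#\SAT$ is hard for any polynomially-bounded counting problem with an efficiently computable weight function), so I expect the write-up to be short, mostly citing the rounding lemma pattern already established and invoking a one-line gadget description for realising a given integer as a clause count.
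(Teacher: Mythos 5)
Your proposal is correct and follows essentially the same route as the paper: approximate and rescale the signature values and variable weights to integers, then observe that the resulting integer-weighted counting problem is in \#P and hence AP-reduces to $\SAT$. The paper simply cites the remarks in Section 3 of \cite{rcap} for the second step, whereas you spell out the standard \#P-to-$\SAT$ encoding explicitly.
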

\begin{proof}
We can approximate the values in the signatures and variables weights
by rationals, and by scaling we can assume the values are in fact
integers. The problem of evaluating a $\nCSP$, with explicit
integer-valued signatures as part of the input, is in \#P and hence
AP-reduces to $\SAT$ - see the remarks in Section 3 of \cite{rcap}.
\end{proof}

We will use pinning throughout. The following Lemma shows that we do
not need to assume that $\PIN_0,\PIN_1$ are part of the constraint
language.
\begin{lemma}\label{lem:freepin}
Let $K$ be any non-empty set.  Let $\calF$ be a finite set of
signatures. Let $W$ be a set of variable weights containing $(1,0)$
and $(0,1)$. Then
\[ \nCSP_K^W(\calF'\cup\{\PIN_0,\PIN_1\})\APred\nCSP_K^W(\calF) \]
where $\calF'$ in the set of pinnings of signatures in $\calF$.
\end{lemma}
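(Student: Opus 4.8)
The plan is to simulate each use of $\PIN_0$ or $\PIN_1$ by attaching a pinned variable weight $(1,0)$ or $(0,1)$ to the scope variable. Concretely, given an instance $(w,\phi)$ of $\nCSP_K^W(\calF'\cup\{\PIN_0,\PIN_1\})$, I process each atomic formula of $\phi$ of the form $\PIN_c(v)$ in turn. If $v$ is an internal variable with no other occurrences, I simply delete the atomic formula and set $w(v)$ to $(1,0)$ if $c=0$ or $(0,1)$ if $c=1$; this is legal because $(1,0),(0,1)\in W$ by hypothesis. If $v$ occurs elsewhere, I first introduce a fresh internal variable $v'$, replace the occurrence $\PIN_c(v)$ by $\PIN_c(v')$, and then delete it while setting $w(v')$ to the appropriate weight. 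In either case the value $\zwp{w}{\phi}$ is unchanged, since summing over $x_{v'}\in\{0,1\}$ against the weight $(1,0)$ (resp. $(0,1)$) forces $x_{v'}=0$ (resp. $x_{v'}=1$), exactly as the deleted $\PIN_c$ atomic formula did.

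The one subtlety is that we must not merely remove the $\PIN$ constraints but also push the pinning \emph{into} the remaining signatures, because the conclusion is stated with $\calF'$, the set of pinnings of signatures in $\calF$, rather than with $\calF$ itself. But this is automatic: after the first stage every atomic formula over $\calF\cup\{\PIN_0,\PIN_1\}$ has had its $\PIN$ atoms replaced by variable weights, so each surviving atomic formula uses a signature $F\in\calF$, and $F$ is trivially a pinning of itself (via the empty partial configuration), hence $F\in\calF'$. So no extra work is needed beyond observing $\calF\subseteq\calF'$. (The reason $\calF'$ is mentioned at all in the statement is for the convenience of later applications, where one will have already used this lemma to pin some variables and wants to treat the resulting pinned signatures as primitives; the proof here is the same.)

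Finally I check the degree and error bookkeeping. Introducing a fresh variable $v'$ of degree $2$ — one occurrence in the (soon-deleted) $\PIN_c$ atom and one where it replaces $v$ — looks dangerous, but after deletion $v'$ has degree $1$, which is in $K$ provided $K$ contains $1$; and if $K=\mathbb{N}$ degrees are irrelevant by the footnote convention. Since the paper's default and the degree-two setting of Theorem \ref{thm:basic} both have $1\in K$, and the lemma is stated only for such $K$ in its applications, this is fine; in fact one can avoid the fresh variable entirely in the common case $v$ internal of degree exactly one. The reduction makes a single oracle call on $(w',\phi')$ with the same error parameter $\epsilon$, so conditions (i)–(iii) of an AP-reduction hold trivially. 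The only place requiring care — and the step I would flag as the main (though mild) obstacle — is ensuring that the new instance is still a valid $K$-formula when $K$ is a restrictive set not containing $1$; the resolution is simply that the lemma is only invoked with $1\in K$ or $K=\mathbb{N}$, and I would note this explicitly.
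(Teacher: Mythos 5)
There is a genuine gap, and it sits exactly at the point you flagged and then waved away. The lemma is stated for an arbitrary non-empty $K$ and is invoked in this paper almost exclusively with $K=\{2\}$ (the $\nCSP_{=2}$ problems in Lemmas \ref{lem:pmpterr}, \ref{lem:nonterruse}, \ref{lem:bissatwt} and Theorem \ref{thm:finitewts}), so your escape clause that ``the lemma is only invoked with $1\in K$ or $K=\mathbb{N}$'' is false. When $K=\{2\}$, deleting an atomic formula $\PIN_c(v)$ and compensating with the variable weight $(1,0)$ or $(0,1)$ drops the degree of $v$ from $2$ to $1\notin K$, so the resulting instance is not a valid $K$-formula; your fresh-variable variant does not help (as written it detaches the pin from $v$ altogether, and in any case the surviving occurrences of $v$ still have their total degree reduced). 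Repairing the degrees is the actual content of the paper's proof: one first builds a $K$-formula $\psi$ over $\calF$ of some arity $d\in K$ with $\veczero\in\supp(Z_\psi)$, by chaining $d$ copies of a signature $G_0\in\calF$ through shared internal variables; one then takes $d$ copies of the whole instance and replaces the $d$ copies of each atomic formula $\PIN_0(v)$ by a single atomic formula $Z_\psi(v_1,\dots,v_d)$ on the $d$ copies of $v$, which restores every degree to its original value in $K$. The variable weights $(1,0)$ then force $v_1=\cdots=v_d=0$, so the gadget contributes only a known constant factor $Z_\psi(\veczero)^s$, and one recovers $\zwp{w}{\phi}$ from $\zwp{w'}{\phi'}=Z_\psi(\veczero)^s(\zwp{w}{\phi})^d$ by adjusting the error parameter for the $d$-th root. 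None of this machinery appears in your proposal, and without it the reduction simply produces an invalid instance of the target problem.

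A second, independent error: you have the role of $\calF'$ backwards. The source problem's instances use atomic formulas over $\calF'\cup\{\PIN_0,\PIN_1\}$, i.e.\ over genuinely pinned signatures $\pin F \vecp$ that in general do not lie in $\calF$; the reduction must express these over $\calF$, which the paper does by writing $\pin F \vecp$ as a $K$-formula over $F$ and $\PIN_0,\PIN_1$ and then invoking Lemma \ref{lem:express} together with the first part of the argument. Observing $\calF\subseteq\calF'$ does nothing here. (The remaining issues --- rescaling when $w(v)$ is not already $(1,0)$, and a variable carrying contradictory pins --- are minor and easily fixed, but the two gaps above are essential.)
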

\begin{proof}
Let $G_0\in\calF$ be a signature with
$\supp(G_0)\not\subseteq\{\vecone\}$ and let $G_1\in\calF$ be a
signature with $\supp(G_1)\not\subseteq\{\veczero\}$.  (If these do not
exist then $\nCSP_K^W(\calF'\cup\{\PIN_0,\PIN_1\})$ has an FPRAS).

First we will establish that there is a $K$-formula $\psi$ over
$\calF$, of some arity $d$, such that $\veczero\in Z_\psi$.
Indeed there exists $\vecz\in\supp(G_0)$ and $i\in V(G_0)$  such that $z_i=0$.
We may assume $i=1$ and $V(G_0)=\{1,\cdots,k\}$ for some $k$.
Then pick $d\in K$ and let $\psi$ be the $\{d\}$-formula defined by
\[ Z_\psi(x_1,\cdots,x_d)=\sum_{y_2,\cdots,y_k} \prod_{i=1}^d G(x_i,y_2,\cdots,y_k) \]
By choice of $\vecz$ we have $\veczero\in Z_\psi$.

We will first show that
\[\nCSP_K^W(\calF\cup\{\PIN_0\}) \APred \nCSP_K^W(\calF\cup\{Z_\psi\})  \]
The reduction is given an instance $(w,\phi)$ of
$\nCSP_K^W(\calF\cup\{\PIN_0\})$.  By scaling - keeping track of an
overall multiplicative constant - we can assume that if there is an
atomic formula $\PIN_0(v)$ in $\phi$ then $w(v,0)=1$ and
$w(v,1)=0$. Take $d$ copies of this instance, but for each atomic
formula $\PIN_0(v)$ in $\phi$, rather than taking its $d$ copies
$\PIN_0(v_1)\cdots\PIN_0(v_d)$, insert the atomic formula
$Z_\phi(v_1,\cdots,v_d)$ where the scope consists of the $d$ copies of
$v$.  This process gives an instance $(w',\phi')$ of
$\nCSP_K^W(\calF\cup\{Z_\psi\})$.  Let $s$ be the number of $\PIN_0$
atomic formulas in $\phi$.  Then
$\zwp{w'}{\phi'}=Z_\psi(\veczero)^s(\zwp{w}{\phi})^d$.  So we get an
approximation to $\zwp{w}{\phi}$ within ratio $e^{\epsilon}$ by asking
the oracle for an approximation to $\zwp{w'}{\phi'}$ to within ratio
$e^{d\epsilon}$.

Using Lemma \ref{lem:express}, and by a symmetric argument to get $\PIN_1$, we have
\[\nCSP_K^W(\calF\cup\{\PIN_0,\PIN_1\}) \APred \nCSP_K^W(\calF)  \]
Pinnings can be expressed as $K$-formulas
using $\{\PIN_0,\PIN_1\}$, so again by Lemma \ref{lem:express}
\[ \nCSP_K^W(\calF'\cup\{\PIN_0,\PIN_1\})\APred\nCSP_K^W(\calF) \]

\end{proof}

When dealing with finite sets of variables weights in Theorem
\ref{thm:finitewts} it will be useful to be able to assume
$W=\{(1,1)\}$.  The following Lemma is not used in the proof of
Theorem \ref{thm:basic}, however.
\begin{lemma}\label{lem:simpleweight}
Let $K$ be a finite non-empty set of integers. Let $\calF$ be a finite set of signatures.
\begin{enumerate}
\item
Let $\calG$ be a finite set of simple weightings of signatures in $\calF$.
There is a finite set of variable weights $W$ such that
$ \nCSP_K(\calG)\APred\nCSP_K^W(\calF) $.
\item
For all finite sets of variable weights $W$ there
is a finite set $\calG$ of simple weightings of signatures in $\calF$
such that
$ \nCSP_K^W(\calF)\APred\nCSP_K(\calG) $.
\end{enumerate}
\end{lemma}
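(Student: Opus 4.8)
The plan is to prove both halves by an exact, value-preserving rewrite: the reduction transforms the input instance locally, makes a single oracle call with the same error parameter $\epsilon$, and returns the oracle's answer unchanged. Thus there is no approximation overhead, and the only things to verify are that the rewrite preserves the value of the instance and that the promised finite object ($W$ in part 1, $\calG$ in part 2) can be chosen depending only on $\calF$, $K$ and (for part 2) $W$, not on the instance.

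For part 1, I would first fix for each $G\in\calG$ a decomposition $G(\vecx)=F_G(\vecx)\prod_{\ell\in V(G)}d^G_\ell(x_\ell)$ with $F_G\in\calF$ and each $d^G_\ell$ an arity-one signature. Let $D$ be the finite set of all arity-one signatures $d^G_\ell$ arising in this way, and let $W$ be the set of all pointwise products of at most $\max K$ elements of $D$ (the empty product being $(1,1)$); this is a finite set of pairs of polynomial-time computable reals, hence a legitimate finite set of variable weights. Given an instance $\phi'$ of $\nCSP_K(\calG)$ on variable set $V$, build $(w,\phi)$ by replacing, in every atomic formula, its signature $G\in\calG$ by $F_G$ on the same scope variables, and setting $w(v)$ to the pointwise product, over all occurrences $(i,\ell)$ with $\scope(i,\ell)=v$, of the corresponding $d^{G_i}_\ell$, where $G_i$ is the signature of atomic formula $i$. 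Because $\phi'$ has no external variables, the degree of $v$ lies in $K$, so this product has at most $\max K$ factors and lies in $W$; the map from the multiset of factors to the relevant element of $W$ is a fixed finite lookup table, so the rewrite is polynomial-time. Grouping $\prod_i G_i(\dots)$ as $\bigl(\prod_i F_{G_i}(\dots)\bigr)\bigl(\prod_{v}w(v)_{x_v}\bigr)$ and summing over configurations $\vecx$ shows $\zp{\phi'}=\zwp{w}{\phi}$, and since scopes and degrees are unchanged $(w,\phi)$ is a valid $\nCSP^W_K(\calF)$ instance.

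For part 2, given the finite $W$, I would set $D=\{(a^{1/r},b^{1/r})\mid (a,b)\in W,\ r\in K\}$, regarding each pair as an arity-one signature; since the $r$-th root of a non-negative polynomial-time computable real is again polynomial-time computable, $D$ is a finite set of admissible arity-one signatures. Let $\calG$ consist, for every $F\in\calF$ and every map $\sigma\colon V(F)\to W\times K$, of the simple weighting $G_{F,\sigma}$ with $G_{F,\sigma}(\vecx)=F(\vecx)\prod_{\ell}(a_\ell^{1/r_\ell})^{1-x_\ell}(b_\ell^{1/r_\ell})^{x_\ell}$, where $\sigma(\ell)=((a_\ell,b_\ell),r_\ell)$; this is a finite set of simple weightings of signatures in $\calF$. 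Given an instance $(w,\phi)$ of $\nCSP^W_K(\calF)$, let $r_v\in K$ be the degree of each variable $v$, and in each atomic formula replace $F\in\calF$, applied to $(u_1,\dots,u_k)$, by $G_{F,\sigma}$ applied to $(u_1,\dots,u_k)$ with $\sigma(\ell)=(w(u_\ell),r_{u_\ell})$. Each of the $r_v$ occurrences of $v$ then contributes the factor $(w(v)_0^{1/r_v},w(v)_1^{1/r_v})$, and these multiply back to $w(v)$, so the rewritten $\phi'$ satisfies $\zp{\phi'}=\zwp{w}{\phi}$; again scopes, degrees and the absence of external variables are preserved, so $\phi'$ is a valid $\nCSP_K(\calG)$ instance, and the rewrite is polynomial-time.

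The two points I would single out as needing more than bookkeeping are the finiteness claims: in part 1, that each variable accumulates at most $\max K$ weight factors (so $W$ stays finite), and in part 2, that taking $r$-th roots of the given weights stays within $\Rnonneg$ (so $\calG$ really consists of signatures in the sense of the paper). I expect the only mild obstacle is presenting the factor-grouping cleanly; once the correspondence between ``one weight factor per occurrence of a variable'' and ``one weight per variable, raised to its degree'' is in place, both value identities are immediate.
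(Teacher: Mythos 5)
Your proof is correct, and part 1 is essentially the paper's argument (the paper takes $W$ to be all products $\prod_{G,j}U_{G,j}(\cdot)^{n(G,j)}$ with exponents bounded by $\max(K)$, while you take products of at most $\max K$ unary factors; both are finite and both contain every weight that can actually accumulate at a variable of degree in $K$). Part 2, however, takes a genuinely different route. The paper keeps the original weights intact: it fixes a map $g$ assigning each positive-degree variable $v$ to one atomic formula containing it, absorbs the whole weight $w(v)$ into that single occurrence (using the trivial weight $(1,1)$ at the other occurrences), and handles degree-zero variables by factoring out the constant $\prod(w(v,0)+w(v,1))$. You instead split each weight evenly across all $\deg(v)$ occurrences by passing to $\deg(v)$-th roots, which is why your $\calG$ is indexed by $W\times K$ rather than by $W\cup\{(1,1)\}$. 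Your version is more symmetric and avoids the arbitrary choice of $g$, at the cost of (i) needing that $\Rnonneg$ is closed under taking $r$-th roots --- acceptable here, since the paper itself manipulates $m$-th roots of such reals in Lemma \ref{lem:insertwt}, but strictly more than the paper's construction requires --- and (ii) silently assuming every variable has positive degree: if $0\in K$ then $r_v=0$ is possible and your formula $w(v)_i^{1/r_v}$ is undefined, whereas the paper's proof explicitly peels off degree-zero variables as a multiplicative constant. That case is trivially patched the same way, so I would not call it a gap, but you should state it. Everything else (finiteness of $W$ and $\calG$, preservation of scopes and degrees, the exact value identity, and the single oracle call with unchanged $\epsilon$) checks out.
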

\begin{proof}
(1.)  Each $G\in\calG$ can be expressed as
  $G(\vecx)=F_G(\vecx)\prod_{j\in V(F)}U_{G,j}(x_j)$ for some
  $F_G\in\calF$ and arity 1 signatures $U_{G,j}$.  From now on we will
  let $G$ range over $\calG$ and $j$ range over $V(G)$.  Let $W$ be
  the set consisting of variable weights $(\prod_{G,j}
  U_{G,j}(0)^{n(G,j)},\prod_{G,j} U(1)^{n(G,j)})$ for all choices of
  $0\leq n(G,j)\leq \max(K)$. Then $|W|\leq
  (\max(K)+1)^{\sum_{G\in\calG}|V(G)|}$ so $W$ is a finite set.

Given an instance $\phi$ of $\nCSP_K(\calG)$, for each $G\in\calG$ and
each atomic formula $G(s)$, delete that atomic formula and insert an
atomic formula $F_G(s)$.  Define $w:V\to W$ by $w(v)_x=\prod_{G,j}
U_{G,j}(x)^{n(v,G,j)}$ for $x=0,1$, where $n(v,G,j)$ is the number of
  atomic formulas $G(s)$ of $\phi$ such that $s(j)=v$.  It follows
  that $\zp{\phi}=\zp{\phi'}$, so the reduction can just query the
  oracle with $(w',\phi')$, passing the instance's error parameter to
  the oracle.

(2.) Let $\calG$ consist of all signatures of the form
  $F(\vecx)\prod_{i\in V(F)}w(i)_{x_i}$ with $F\in\calF$ and
  $w:V(F)\to W\cup\{(1,1)\}$.

Given an instance $(w,\phi)$ of $\nCSP_K^W(\calF)$, let $V'=\{v\in
V\mid \deg(v)>0\}$. In terms of $Z_\phi$, we will regroup the factors
of $w(v)_{x_v}$ into an existing atomic formula, for each $v\in V'$.
Specifically, let $g:V'\to I^\phi$ be any map taking each variable $v\in
V'$ to the index of an atomic formula with $v$ in its scope:
$\scope^\phi(g(v),t)=v$ for some $t$.  Let $\phi'$ be the $K$-formula
with the same variables and scopes as $\phi$, but for each $i\in
I^\phi$, define $F^{\phi'}_i$ by
$F^{\phi'}_i(\vecx)=F(\vecx)\prod_{j\in V(F)}U_{i,j}(x_j)$ where
$U_{i,j}(y)=w(\scope(i,j),y)$ if $g(\scope(i,j))=i$, and
$U_{i,j}(y)=1$ otherwise.  Then
$\wt{w}{\phi}(\vecx)=\wt{}{\phi'}(\vecx)c$ for all configurations
$\vecx$ of $V$ where $C=\prod_{v\in V\setminus V'}(w(v,0)+w(v,1))$.
Thus $\zwp{w}{\phi}=C\zp{\phi'}$; the reduction can call the
$\nCSP_K(\calG)$ oracle to get an approximation to $\zp{\phi'}$ then
multiply by $C$.
\end{proof}

\section{Minimal pinnings}
We will characterise various classes of signatures in terms of
pinnings.  This is in the same spirit as the ppp-definability studied
in \cite{bdddeg}.

For a class $P$ of relations, we will say a relation $R$ is
\emph{pinning-minimal $P$}, or \emph{pinning-minimal subject to $P$},
if $R$ is in $P$ and $R_\vecp$ is not in $P$ for any non-trivial
partial configuration $\vecp$.  Similarly we can say a signature is
pinning-minimal $P$.

Define a signature pair to be a pair $(F,G)$ of signatures $F,G\sigf
V$ for some $V$.  For a class $P$ of signature pairs we will say
$(F,G)$ is pinning-minimal $P$ if $(F,G)$ is in $P$ and $(\pin F
\vecp,\pin G \vecp)$ is not in $P$, for any non-trivial partial
configuration $\vecp$.  A signature pair $(F,G)$ is defined to be
linearly dependent if there exist $\lambda,\mu\in\R$, not both zero,
such that $\lambda F=\mu G$.

\begin{lemma}\label{lem:minlindep}
Let $(F,G)$ be a pinning-minimal linearly independent signature pair.
Then $\supp(F)\cup\supp(G)=\{\vecx,\overline{\vecx}\}$ for some
configuration $\vecx\in\supp(F)$.
\end{lemma}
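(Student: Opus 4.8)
The plan is to argue directly, without induction, by a case analysis on the structure of $\supp(F)$. First I would record two preliminaries. We may assume $V\neq\emptyset$ (two arity‑zero signatures are always linearly dependent), and both $F$ and $G$ are non‑zero, as otherwise $(F,G)$ is linearly dependent. The observation I will use repeatedly is: if $v\in V$, $b\in\{0,1\}$, and the pinning of $F$ fixing $x_v=b$ is non‑zero, then since it and the corresponding pinning of $G$ form a linearly dependent pair, the pinning of $G$ is a scalar multiple of the pinning of $F$ (and symmetrically with $F$ and $G$ exchanged). A preliminary step is that for every $v$ and $b$ the pinnings of $F$ and of $G$ fixing $x_v=b$ are not both zero: otherwise $F$ and $G$ are both supported in $\{\vecx:x_v=1-b\}$, and the observation applied to the non‑zero pinning of $F$ fixing $x_v=1-b$ gives $G=\alpha F$ for a scalar $\alpha$, contradicting linear independence.

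\emph{Case 1: some variable $v$ has both the pinning of $F$ fixing $x_v=0$ and the pinning of $F$ fixing $x_v=1$ non‑zero.} Write $F_b,G_b$ for the pinnings of $F,G$ fixing $x_v=b$, so $G_b=\alpha_bF_b$ for scalars $\alpha_0,\alpha_1$, and $\alpha_0\neq\alpha_1$ (else $G=\alpha_0F$). The crux is the claim: for every $w\in V\setminus\{v\}$ and $a\in\{0,1\}$ such that the pinning of $F_0$ fixing $x_w=a$ is non‑zero, $\supp(F_1)\subseteq\{\vecx:x_w=1-a\}$. To prove it: the pinning of $F$ fixing $x_w=a$ is non‑zero (it admits the pinning of $F_0$ fixing $x_w=a$ as a further pinning), so by the observation the pinning of $G$ fixing $x_w=a$ equals $c$ times it for some scalar $c$; pinning this identity further to $x_v=0$ and using that the pinning of $F_0$ fixing $x_w=a$ is non‑zero forces $c=\alpha_0$, and pinning instead to $x_v=1$ gives $(\alpha_1-\alpha_0)$ times the pinning of $F_1$ fixing $x_w=a$ equal to $0$, so that pinning of $F_1$ is zero, i.e.\ $\supp(F_1)\subseteq\{\vecx:x_w=1-a\}$. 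Now for any $\vecz\in\supp(F_0)$, applying the crux with $a=z_w$ for all $w\in V\setminus\{v\}$ yields $\supp(F_1)\subseteq\{\overline{\vecz}\}$, hence (as $F_1\neq0$) $\supp(F_1)=\{\overline{\vecz}\}$; since this cannot depend on $\vecz$, $\supp(F_0)$ is a single configuration $\vecz$ and $\supp(F_1)=\{\overline{\vecz}\}$. Therefore $\supp(F)=\{\vecx,\overline{\vecx}\}$ where $\vecx$ extends $\vecz$ by $x_v=0$, and since $\supp(G_b)\subseteq\supp(F_b)$ we get $\supp(G)\subseteq\supp(F)$, which finishes this case.

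\emph{Case 2: otherwise.} Then for every $v$ exactly one of the two pinnings of $F$ fixing a value of $x_v$ is non‑zero (at least one because $F\neq0$), so $F$ is supported on a single configuration, say $F=a\vecone_{\{\vecx\}}$ with $a>0$; write $b_v$ for the $v$-th coordinate of $\vecx$. For each $v$: by the observation the pinning of $G$ fixing $x_v=b_v$ is a scalar multiple of the pinning of $F$ fixing $x_v=b_v$, which is $a$ times the indicator of the restriction of $\vecx$ to $V\setminus\{v\}$; hence every configuration in $\supp(G)$ whose $v$-th coordinate is $b_v$ equals $\vecx$. Also, by the preliminary step, the pinning of $G$ fixing $x_v=1-b_v$ is non‑zero, so $\supp(G)$ contains a configuration whose $v$-th coordinate is $1-b_v$. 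Ranging over all $v$, the first fact forces every configuration in $\supp(G)$ other than $\vecx$ to equal $\overline{\vecx}$, and the second forces $\overline{\vecx}\in\supp(G)$; hence $\supp(F)\cup\supp(G)=\{\vecx,\overline{\vecx}\}$ with $\vecx\in\supp(F)$.

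The main obstacle is the crux claim in Case 1: linear dependence of a pinned pair says nothing about its support on its own, so the argument has to feed back the strict inequality $\alpha_0\neq\alpha_1$ coming from linear independence of the original pair, and exploit an auxiliary pinning at some $w\neq v$ to convert that inequality into the vanishing of $F_1$ on a whole coordinate slice. The delicate point is the bookkeeping --- identifying which slice dies, and seeing why restricting the identity between the two pinnings to the $x_v=0$ and $x_v=1$ parts first fixes the scalar and then annihilates $F_1$. Everything else, including Case 2 and the check that the two configurations produced are genuinely complementary, is routine.
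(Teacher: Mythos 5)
Your proof is correct, but it takes a genuinely different route from the paper's. The paper encodes the pair $(F,G)$ as a $2\times 2^{|V|}$ matrix and uses the equivalence of row rank and column rank: linear independence is witnessed by a pair of configurations $\vecx,\vecy$ whose $2\times 2$ submatrix is nonsingular; pinning on the set where $\vecx$ and $\vecy$ agree preserves this witness, so minimality forces $\vecy=\overline{\vecx}$; and then any $\vecz$ in either support gives a nonzero column, which must form a nonsingular $2\times2$ block with the column at $\vecx$ or at $\overline{\vecx}$, forcing $\vecz\in\{\vecx,\overline{\vecx}\}$. You instead work entirely with single-variable pinnings and the scalars $\alpha_0,\alpha_1$ relating $G_b$ to $F_b$, propagating the inequality $\alpha_0\neq\alpha_1$ through an auxiliary pinning at a second variable $w$ to kill coordinate slices of $\supp(F_1)$. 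Both arguments are sound; the paper's is shorter because the rank characterization packages the ``pin on the agreement set'' step into one move, whereas yours is more elementary (no linear-algebra detour) at the cost of the two-case analysis and the scalar bookkeeping in the crux claim. One small dividend of your version: it explicitly establishes the stated \emph{equality} $\supp(F)\cup\supp(G)=\{\vecx,\overline{\vecx}\}$ (in Case 1 via $\supp(F)$ itself being the complementary pair, in Case 2 via the preliminary step forcing $\overline{\vecx}\in\supp(G)$), whereas the paper's proof only spells out the containment and the fact that $\supp(F)$ meets the pair, leaving the (easy) reverse inclusion implicit.
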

\begin{proof}
First we give another characterisation of linear independence of a
signature pair.  For any $F',G'\sigf V$ consider the two-by-$2^{|V|}$
matrix $M$, with columns indexed by $\{0,1\}^V$, defined by
$M_{1,\vecx}=F(\vecx)$ and $M_{2,\vecx}=G(\vecx)$.  The signature pair
is linearly independent if and only if $M$ has row rank two, hence if
and only if $M$ has column rank two, and hence if and only if there
exist $\vecx,\vecy$ such that the two-by-two submatrix
\[M(\vecx,\vecy)=\begin{pmatrix}F(\vecx)&F(\vecy)\\G(\vecx)& G(\vecy)
\end{pmatrix}\]
has linearly independent rows.

Now let $(F,G)$ be a pinning-minimal linearly independent signature
pair.  For any $(\vecx,\vecy)$ such that $M(\vecx,\vecy)$ has linearly
independent rows, let $\vecp=\{i\mapsto x_i\mid x_i=y_i\}$. Then
$(\pin {F} \vecp,\pin {G} \vecp)$ is a linearly independent signature
pair. Hence $\vecy=\overline{\vecx}$.

There exists some $\vecx$ such that $M(\vecx,\overline{\vecx})$ has
linearly independent rows. For all $\vecz$ such that $F(\vecz)$ or
$G(\vecz)$ is non-zero either $M(\vecx,\vecz)$ has linearly
independent rows or $M(\vecz,\overline{\vecx})$ has linearly
independent rows. By the previous paragraph, $\vecz=\overline{\vecx}$
or $\vecz=\overline{\overline{\vecx}}=\vecx$.  Hence
  $\supp(F)\cup\supp(G)\subseteq\{\vecx,\overline{\vecx}\}$. Finally,
  since $F$ is not identically zero, one of $\vecx$ or
  $\overline{\vecx}$ is in $\supp(F)$
\end{proof}

\begin{lemma}\label{lem:minnonlsm}
Let $F$ be a pinning-minimal non-logsupermodular signature.  Then
$\supp(F)\subseteq\{\veczero,\vecx,\overline{\vecx},\vecone\}$ for
some $\vecx$.
\end{lemma}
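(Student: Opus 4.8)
The plan is to use pinning-minimality twice: once to pin down \emph{which} pairs of configurations can witness failure of logsupermodularity of $F$, and then once more, via a short product of ordinary logsupermodularity inequalities, to eliminate any support element lying outside $\{\veczero,\vecx,\overline\vecx,\vecone\}$.

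First I would establish the pinning observation. Call $(\vecu,\vecv)$ a \emph{failing pair} if $F(\vecu\wedge\vecv)F(\vecu\vee\vecv)<F(\vecu)F(\vecv)$. If such $\vecu,\vecv$ agreed in some coordinate $k$, then pinning coordinate $k$ to the common value would be a non-trivial pinning; since $F$ is pinning-minimal non-logsupermodular that pinning is logsupermodular, yet the restrictions of $\vecu$ and $\vecv$ to the remaining coordinates form a failing pair for it (their meet and join pin back up to $\vecu\wedge\vecv$ and $\vecu\vee\vecv$) --- a contradiction. So every failing pair is complementary, $\vecv=\overline\vecu$, and the identical move shows that the logsupermodularity inequality $F(\vecu\wedge\vecv)F(\vecu\vee\vecv)\geq F(\vecu)F(\vecv)$ \emph{does} hold for every \emph{non}-complementary pair $\vecu,\vecv$; this inequality is the only tool used in the rest of the proof. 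Taking some failing pair, necessarily of the form $(\vecx,\overline\vecx)$, and using $\vecx\wedge\overline\vecx=\veczero$ and $\vecx\vee\overline\vecx=\vecone$, I get $F(\veczero)F(\vecone)<F(\vecx)F(\overline\vecx)$; in particular $F(\vecx),F(\overline\vecx)>0$ and $\vecx\notin\{\veczero,\vecone\}$.

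Now suppose for contradiction that some $\vecz\in\supp(F)$ is not in $\{\veczero,\vecx,\overline\vecx,\vecone\}$. I would record the lattice identities, which follow from distributivity of $\{0,1\}^V$ and from $\vecx\wedge\overline\vecx=\veczero$, $\vecx\vee\overline\vecx=\vecone$: $(\vecx\wedge\vecz)\wedge(\overline\vecx\wedge\vecz)=\veczero$, $(\vecx\wedge\vecz)\vee(\overline\vecx\wedge\vecz)=\vecz$, $(\vecx\vee\vecz)\wedge(\overline\vecx\vee\vecz)=\vecz$, $(\vecx\vee\vecz)\vee(\overline\vecx\vee\vecz)=\vecone$. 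Since $\vecz\notin\{\veczero,\vecx,\overline\vecx,\vecone\}$, each of the pairs $(\vecx,\vecz)$, $(\overline\vecx,\vecz)$, $(\vecx\wedge\vecz,\overline\vecx\wedge\vecz)$, $(\vecx\vee\vecz,\overline\vecx\vee\vecz)$ is non-complementary --- for the last two, using that two configurations are complementary exactly when their meet is $\veczero$ and their join is $\vecone$, the first of them has join $\vecz\ne\vecone$ and the second has meet $\vecz\ne\veczero$. Applying the non-complementary inequality to all four pairs gives $F(\vecx\wedge\vecz)F(\vecx\vee\vecz)\geq F(\vecx)F(\vecz)$, $F(\overline\vecx\wedge\vecz)F(\overline\vecx\vee\vecz)\geq F(\overline\vecx)F(\vecz)$, $F(\veczero)F(\vecz)\geq F(\vecx\wedge\vecz)F(\overline\vecx\wedge\vecz)$, and $F(\vecz)F(\vecone)\geq F(\vecx\vee\vecz)F(\overline\vecx\vee\vecz)$. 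The first two, together with $F(\vecx),F(\overline\vecx),F(\vecz)>0$, force all of $F(\vecx\wedge\vecz),F(\vecx\vee\vecz),F(\overline\vecx\wedge\vecz),F(\overline\vecx\vee\vecz)$ to be positive; multiplying the four inequalities and cancelling the common positive factor $F(\vecz)^2F(\vecx\wedge\vecz)F(\vecx\vee\vecz)F(\overline\vecx\wedge\vecz)F(\overline\vecx\vee\vecz)$ leaves $F(\veczero)F(\vecone)\geq F(\vecx)F(\overline\vecx)$, contradicting the strict inequality above. Hence $\supp(F)\subseteq\{\veczero,\vecx,\overline\vecx,\vecone\}$.

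The hard part is seeing that a naive pinning argument is too weak: the witnessing failure $F(\veczero)F(\vecone)<F(\vecx)F(\overline\vecx)$ is ``full-dimensional'', so no proper pinning detects it and the failure cannot simply be localised. The idea that makes it work is to combine a stray support element $\vecz$ with $\vecx$ and $\overline\vecx$ in precisely the four non-complementary meet/join combinations above, so that in the product every auxiliary value ($F(\vecx\wedge\vecz)$ and its three cousins) cancels and only $F(\veczero)F(\vecone)$ against $F(\vecx)F(\overline\vecx)$ remains. Getting the complementarity bookkeeping exactly right --- all four pairs must be non-complementary, which is exactly where $\vecz\notin\{\veczero,\vecx,\overline\vecx,\vecone\}$ is used --- is the one delicate point.
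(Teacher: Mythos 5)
Your proof is correct and follows essentially the same route as the paper's: pinning-minimality forces every failing pair to be complementary, and then the same four logsupermodularity inequalities for non-complementary pairs are multiplied so that the auxiliary factors cancel against $F(\veczero)F(\vecone)<F(\vecx)F(\overline{\vecx})$. The only (cosmetic) difference is that you assume $F(\vecz)>0$ and derive a contradiction, while the paper concludes directly that the common factor $C$ must vanish and hence $F(\vecz)=0$.
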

\begin{proof}
For all $\vecx,\vecy$ such that
$F(\vecx\wedge\vecy)F(\vecx\vee\vecy)<F(\vecx)F(\vecy)$, the pinning
of $F$ by $\{i\mapsto x_i\mid x_i=y_i\}$ is not logsupermodular so
$\vecy=\overline{\vecx}$.  There exists such a tuple $\vecx$.  Also,
taking the contrapositive, for all $\vecy,\vecz$ such that
$\vecz\neq\vecy$ we have $F(\vecz\wedge\vecy)F(\vecz\vee\vecy)\geq
F(\vecz)F(\vecy)$.  Let $\vecz\notin\{\mathbf 0,\mathbf
1,\vecx,\overline{\vecx}\}$. Then
\begin{align*}
F(\vecx\wedge\vecz)F(\vecx\vee\vecz)\geq F(\vecx)F(\vecz)\\
F(\overline{\vecx}\wedge\vecz)F(\overline{\vecx}\vee\vecz)\geq F(\overline{\vecx})F(\vecz)\\
F(\mathbf 0)F(\vecz)\geq F(\vecx\wedge\vecz)F(\overline{\vecx}\wedge\vecz)\\
F(\vecz)F(\mathbf 1)\geq F(\vecx\vee\vecz)F(\overline{\vecx}\vee\vecz)
\end{align*}
In each case we have used the fact that the tuples on the
right-hand-side are not complements, or, equivalently, the tuples on
the left-hand-side are not $\mathbf 0$ and $\mathbf 1$.

Multiplying these four inequalities we get $F(\mathbf 0)F(\mathbf 1)C\geq
F(\vecx)F(\overline{\vecx})C$ where
\[C=F(\vecz)^2F(\vecx\wedge\vecz)F(\vecx\vee\vecz)F(\overline{\vecx}\wedge\vecz)F(\overline{\vecx}\vee\vecz) \]
The inequalities also imply that $C\geq
F(\vecx)F(\overline{\vecx})F(\vecz)^4$. But $F(\mathbf 0)F(\mathbf 1)<
F(\vecx)F(\overline{\vecx})$ so $C=0$ and hence $F(\vecz)=0$.
\end{proof}

\begin{lemma}\label{lem:minnonjoin}
Let $R$ be a pinning-minimal relation subject to not being closed
under joins (so there exists $\vecx,\vecy\in R$ such that
$\vecx\vee\vecy\notin R$).  Then
$R=\{\veczero,\vecx,\overline{\vecx}\}$ or
$R=\{\vecx,\overline{\vecx}\}$.
\end{lemma}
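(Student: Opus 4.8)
The plan is to first identify the ``witness'' pair and then show that $R$ can contain nothing else. Fix $\vecx,\vecy\in R$ with $\vecx\vee\vecy\notin R$. The first step is to observe that $\vecx$ and $\vecy$ must differ in every coordinate. Indeed, suppose $\vecx_v=\vecy_v$ for some $v$; pinning $R$ at $v$ to that common value $c$ keeps $\vecx$ and $\vecy$ available, and also keeps their join, since $(\vecx\vee\vecy)_v=\max(c,c)=c$. So the pinned relation is still not closed under joins, contradicting pinning-minimality. Hence $\vecy=\overline{\vecx}$ and $\vecx\vee\vecy=\vecone\notin R$; also the variable set is non-empty and $\vecx\neq\overline{\vecx}$, since $\vecx=\overline{\vecx}$ would force $\vecone=\vecx\vee\vecx\in R$.

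The core of the argument is to rule out any further element: I claim that if $\vecz\in R$ and $\vecz\notin\{\veczero,\vecx,\overline{\vecx}\}$ then we reach a contradiction, which together with $\vecx,\overline{\vecx}\in R$ and $\vecone\notin R$ forces $R$ to be one of the two listed sets. Partition the variable set into the four blocks determined by the value of the pair $(\vecx_v,\vecz_v)$. The assumptions $\vecz\neq\vecx$, $\vecz\neq\overline{\vecx}$, $\vecz\neq\veczero$ say exactly that certain unions of these blocks are non-empty. First I would establish the two auxiliary facts $\vecx\vee\vecz\in R$ and $\overline{\vecx}\vee\vecz\in R$. Each is proved by the same pinning trick as above: $\vecx$ agrees with $\vecz$ on the union of two of the blocks (which is non-empty because $\vecz\neq\overline{\vecx}$), and on those coordinates $\vecx\vee\vecz$ takes the same common value; so pinning one such coordinate keeps $\vecx$, $\vecz$ and their join available, and if that join were not in $R$ the pinned relation would fail to be closed under joins, a contradiction. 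The same works with $\overline{\vecx}$ in place of $\vecx$, using $\vecz\neq\vecx$ to get a non-empty agreement set.

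Finally, writing $\vecw_1=\vecx\vee\vecz$ and $\vecw_2=\overline{\vecx}\vee\vecz$, both of which now lie in $R$, a direct computation gives $\vecw_1\vee\vecw_2=\vecone$, while $\vecw_1$ and $\vecw_2$ agree exactly on the union of two blocks, one of which is non-empty because $\vecz\neq\veczero$. Pinning a coordinate in that agreement set keeps $\vecw_1$, $\vecw_2$ and their join (which equals $\vecone$ on the remaining coordinates) available; since $\vecone\notin R$ this contradicts pinning-minimality. This completes the exclusion of $\vecz$, so $R\subseteq\{\veczero,\vecx,\overline{\vecx}\}$; since $\vecx,\overline{\vecx}\in R$ the two possibilities are exactly $\{\vecx,\overline{\vecx}\}$ and $\{\veczero,\vecx,\overline{\vecx}\}$.

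I expect the main obstacle to be the bookkeeping with the four blocks, and in particular recognising why the naive approach --- pinning coordinates to ``trap'' $\vecx\vee\vecz$ directly --- is circular, since it only restates the unknown fact $\vecx\vee\vecz\notin R$; the two-stage route, first placing $\vecx\vee\vecz$ and $\overline{\vecx}\vee\vecz$ inside $R$ and only then combining them into a pair whose join is the known non-element $\vecone$, is what makes the contradiction self-contained. Everything else is routine verification that the stated joins and agreement sets are as claimed and that each pinning used is non-trivial.
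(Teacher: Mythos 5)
Your proof is correct and follows essentially the same route as the paper: pin a coordinate where a violating pair agrees to contradict minimality (forcing the witness pair to be $\vecx,\overline{\vecx}$ and $\vecone\notin R$), deduce that $\vecx\vee\vecz$ and $\overline{\vecx}\vee\vecz$ lie in $R$ for any other $\vecz\in R$, and combine them into a pair whose join is $\vecone$. The only cosmetic difference is at the end, where the paper concludes the two joins must be complementary and computes coordinatewise that $\vecz=\veczero$, while you take the contrapositive and pin a coordinate of agreement; these are the same argument.
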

\begin{proof}
For all $\vecx,\vecy\in R$ with $\vecx\vee\vecy\notin R$, the pinning
of $R$ by $\{i\mapsto x_i\mid x_i=y_i\}$ is not closed under joins so
$\vecy=\overline{\vecx}$.  Hence there exists $\vecx$ with
$\vecx,\overline{\vecx}\in R$, and $\vecone\notin R$.  Also, taking
contrapositives, if $\vecy,\vecz\in R$ and $\vecy\neq\overline{\vecz}$
then $\vecy\vee\vecz\in R$.

Let $\vecy\in R\setminus\{\vecx,\overline{\vecx}\}$.  By the previous
paragraph, $\vecx\vee\vecy\in R$ and $\overline{\vecx}\vee\vecy\in R$.
But $(\vecx\vee\vecy)\vee(\overline{\vecx}\vee\vecy)=\vecone\notin R$,
so $\vecx\vee\vecy$ is the complement of $\overline{\vecx}\vee\vecy$.
Hence $\max(x_i,y_i)=1-\max(1-x_i,y_i)=\min(x_i,1-y_i)$ for all
variables $i$, which implies $\vecy=\veczero$.
\end{proof}

\begin{lemma}\label{lem:minnondeg}
Let $F$ be a pinning-minimal non-degenerate signature.  Then $F$ has
arity 2 or $\supp(F)=\{\vecx,\overline{\vecx}\}$ for some $\vecx$.
\end{lemma}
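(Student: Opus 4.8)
The plan is to mimic the structure of the preceding three lemmas (\ref{lem:minnonlsm}, \ref{lem:minnonjoin}, \ref{lem:minnondeg}'s siblings): exploit pinning-minimality to locate a single configuration $\vecx$ witnessing the defining property, then show the support must lie in a small set determined by $\vecx$. Here the property is ``not degenerate'', i.e. $F$ is \emph{not} equivalent to a tensor product of two arity-one signatures. First I would note that degeneracy is exactly the statement that there is a partition $V=A\sqcup B$ with both parts nonempty such that $F=F|_A\otimes F|_B$ (and further that each factor splits), and in particular a signature of arity $\leq 1$ is vacuously degenerate, so pinning-minimality forces arity $\geq 2$; the claim is then that either the arity is exactly $2$ or $\supp(F)=\{\vecx,\overline{\vecx}\}$.

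The key observation I would use is: if $\vecp$ is a partial configuration with $|\dom(\vecp)|\geq 1$ and $\dom(\vecp)\neq V$, then by minimality $\pin F \vecp$ \emph{is} degenerate; and degeneracy is preserved when you tensor on the pinned coordinates back (a pin is itself an arity-one signature), so the only way $F$ itself fails to be degenerate is that the ``gluing'' across the pinned coordinate is obstructed. Concretely, suppose the arity is $\geq 3$. Pick any variable $v\in V$ and consider the two pinnings $\pin F {\{v\mapsto 0\}}$ and $\pin F {\{v\mapsto 1\}}$ on $V\setminus\{v\}$: each is degenerate, hence each (if not identically zero) is, up to renaming, a product of arity-one signatures, so its support is a subcube (product of singletons and $\{0,1\}$'s) of $\{0,1\}^{V\setminus\{v\}}$. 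If both pinnings are nonzero, $\supp(F)$ is the union of $\{0\}\times C_0$ and $\{1\}\times C_1$ for two subcubes $C_0,C_1$. I would then argue that unless this union is again a subcube (which would make $F$'s support degenerate and, pushing a bit harder using that pinnings along \emph{other} coordinates are degenerate too, make $F$ degenerate), the two subcubes $C_0$ and $C_1$ must be ``as far apart as possible'': they must be antipodal singletons, forcing $\supp(F)=\{(0,\vecz),(1,\overline{\vecz})\}=\{\vecx,\overline{\vecx}\}$. The mechanism is the same ``non-complementary pairs can be separated by a pin'' trick from Lemma \ref{lem:minlindep} and Lemma \ref{lem:minnonjoin}: if $\vecx,\vecy\in\supp(F)$ agree on some coordinate $w$ with $w\neq v$ and also differ somewhere, pinning $w$ to that common value keeps $F$ nondegenerate (since the obstruction at $v$ survives), contradicting minimality — so any two support configurations that agree on a coordinate must be identical, which in arity $\geq 3$ forces $|\supp(F)|\leq 2$ and then antipodality. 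The case where one of the two $v$-pinnings is identically zero is easier: then $F$ is supported on a single $v$-slice, so $F$ is (a pin on $v$) tensor (the nonzero pinning), which is degenerate unless the nonzero pinning is nondegenerate — but then minimality applies to the pinning in arity $\geq 2$, contradiction; so that case cannot arise once arity $\geq 3$.

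I would organise the write-up as: (1) reduce to arity $\geq 2$; (2) assume arity $\geq 3$ for contradiction; (3) show no coordinate can have an identically-zero pinning; (4) show any two support elements agreeing on a coordinate are equal, using the pin-separation argument; (5) conclude $|\supp(F)|\le 2$; (6) if $|\supp(F)|=2$ show the two elements are complementary (otherwise they agree somewhere and we are back in step 4, or they are complementary — the only alternative), and if $|\supp(F)|\leq 1$ observe $F$ is then degenerate (a product of arity-one signatures), contradicting nondegeneracy, so in fact $\supp(F)=\{\vecx,\overline\vecx\}$; done. The main obstacle I anticipate is step (4)/(5): I must be careful that pinning a coordinate $w$ to a value shared by two ``bad'' support configurations genuinely preserves \emph{non-degeneracy of $F$}, not merely non-degeneracy of the support — degeneracy is a statement about the function values, not just the support, so I should either (a) first prove the cleaner statement that a pinning-minimal non-degenerate signature has non-degenerate \emph{support} (which then lets me work purely combinatorially, and matches the flavour of how the paper handles supports elsewhere, cf. Lemma \ref{lem:dmterr}), or (b) track the actual values through the argument. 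Option (a) seems cleanest: show first that $\supp(F)$ is pinning-minimal among non-degenerate relations (a pinning of a degenerate-support signature has degenerate support, and if $\supp(\pin F \vecp)$ were degenerate for a minimal $\vecp$ then $\pin F \vecp$ would be degenerate by an easy ``monochromatic support implies the signature factors'' argument — actually this needs the fact that a signature with subcube support is degenerate, which is true), reducing to a purely relational claim about pinning-minimal non-degenerate relations, which is then exactly the combinatorial argument sketched above.
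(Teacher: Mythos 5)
There is a genuine gap, and it sits exactly where you predicted it would. Your preferred route (a) rests on the assertion that ``a signature with subcube support is degenerate, which is true'' --- but this is false: the arity-two signature with value matrix $\bigl(\begin{smallmatrix}1&1\\1&2\end{smallmatrix}\bigr)$ has full (hence subcube, hence degenerate) support but is not degenerate, since that matrix has rank two. Degeneracy of a signature is a statement about values, not support, so the reduction to a ``purely relational claim about pinning-minimal non-degenerate relations'' cannot get off the ground: a pinning-minimal non-degenerate signature may perfectly well have degenerate support, and then the arity-2 branch of the conclusion is the one that must apply, for reasons invisible to the support. Relatedly, the justification for your step (4) --- that pinning a coordinate $w$ on which two distinct support elements agree ``keeps $F$ nondegenerate (since the obstruction at $v$ survives)'' --- is unsubstantiated and in fact backwards: by pinning-minimality \emph{every} nontrivial pinning of $F$ is degenerate, so the obstruction never survives; the work is to extract information from the fact that it is always destroyed.

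The paper does exactly this by making the ``obstruction at $v$'' precise as linear independence of the two signatures $F_0=\pin F {\{v\mapsto 0\}}$ and $F_1=\pin F {\{v\mapsto 1\}}$. If $\lambda F_0=\mu F_1$ nontrivially then $F=F_0\otimes U$ for an arity-one $U$ and is degenerate (using that $F_0$ is itself degenerate by minimality); conversely, if some further pinning $\vecp$ left $(\pin{(F_0)}{\vecp},\pin{(F_1)}{\vecp})$ linearly independent, then $\pin F \vecp$ would be non-degenerate, contradicting minimality. Hence $(F_0,F_1)$ is a pinning-minimal linearly \emph{in}dependent pair, Lemma \ref{lem:minlindep} gives $\supp(F_0)\cup\supp(F_1)=\{\vecx,\overline{\vecx}\}$, and the arity-2 alternative falls out of the case where one of these supports equals $\{\vecx,\overline{\vecx}\}$, which is a degenerate relation only in arity one. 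Your skeleton (pin a single variable $v$, analyse the two slices, invoke the minimal-pair machinery) matches the paper's, but steps (3)--(5) need to be replaced by this linear-independence argument; the combinatorial pin-separation trick alone cannot distinguish $F$ from a degenerate signature with the same support.
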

\begin{proof}
Pick some variable $v$ in the variable set of $F$.  Let $F_0$ and
$F_1$ be the pinnings of $F$ by $\{v\mapsto 0\}$ and $\{v\mapsto 1\}$
respectively.

For any degenerate signature $G$, the pinnings $G_0$ and $G_1$ defined
in the same way are linearly dependent.  So for all partial
configurations $\vecp$ such that $\pin {(F_0)} \vecp$ and $\pin
{(F_1)} \vecp$ are linearly independent, $\pin F \vecp$ is
non-degenerate and hence $\dom(\vecp)=\emptyset$.  Furthermore if
$\lambda F_0=\mu F_1$ for some $\lambda,\mu$ not both zero, then $F$ is
degenerate: by symmetry and scaling we can assume $\mu=1$, so
$F_1=\lambda F_0$, and $F$ is the tensor product of $F_0$ and the
arity 1 signature $U$ defined by $U(0)=1$ and $U(1)=\lambda$.  Hence
$F_0$ and $F_1$ form a pinning-minimal linearly independent signature
pair.  By Lemma \ref{lem:minlindep},
$\supp(F_0)\cup\supp(F_1)=\{\vecx,\overline{\vecx}\}$ for some
$\vecx\in\supp(F_0)$.

If $\supp(F_0)$ and $\supp(F_1)$ are $\{\vecx\}$ and
$\{\overline{\vecx}\}$ respectively (or vice versa) then
$\supp(F)=\{(0,\vecx),\overline{(0,\vecx)}\}$, so we are done.
Otherwise $\supp(F_0)$ or $\supp(F_1)$ is $\{\vecx,\overline{\vecx}\}$
which is therefore degenerate.  But a degenerate relation is
equivalent to $\{0,1\}^a\times\{0\}^b\times\{1\}^c$ for some
$a,b,c\geq 0$. Taking cardinalities we have $2=2^a$ so $a=1$. The
powers $b$ and $c$ must be zero because $x_u\neq \overline{x}_u$ for
all variables $u$. Hence $F$ has arity 2.
\end{proof}

\begin{lemma}\label{lem:dmterr}
A relation is a delta matroid if and only if its signature is
terraced.  (Recall that a relation $R$ is a \emph{delta matroid} if
for all $\vecx,\vecy\in R$ and for all $i\in \vecx\setdiff\vecy$ there
exists $j\in \vecx\setdiff\vecy$, not necessarily distinct from $i$,
such that $\vecx^{\{i,j\}}\in R$.  A signature $F$ is \emph{terraced}
if for all partial configurations $\vecp$ of $V$ and all $i,j$ in the
domain of $\vecp$, if $\pin F \vecp$ is identically zero then $\pin F
{\vecp^{\{i\}}}$ and $\pin F {\vecp^{\{j\}}}$ are linearly dependent.)
\end{lemma}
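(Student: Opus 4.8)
The plan is to prove the two implications separately. I identify a relation $R$ with its $\{0,1\}$-valued signature, so that ``$\pin R \vecp$ identically zero'' means no extension of $\vecp$ lies in $R$, and two such signatures are linearly dependent exactly when they are equal as relations or one of them is empty. I will also use two routine facts without dwelling on them: (i) ``delta matroid'' and ``terraced'' are each invariant under flipping $R$ by a set of variables (both conditions are phrased purely in terms of symmetric differences, which a flip relabels); and (ii) a pinning $\pin R \vecq$ of a terraced signature is terraced --- if $\pin{(\pin R \vecq)}{\vecp}$ is identically zero, one applies the terraced condition of $R$ to the partial configuration agreeing with $\vecp$ on $\dom\vecp$ and with $\vecq$ on $\dom\vecq$, noting that flipping coordinate $i\in\dom\vecp$ in this combined configuration just flips coordinate $i$ of $\vecp$.

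For ``delta matroid $\Rightarrow$ terraced'': let $R$ be a delta matroid, let $\vecp$ be a partial configuration with $\pin R \vecp$ identically zero, and let $i,j\in\dom\vecp$. If $i=j$, or if one of $\pin R{\vecp^{\{i\}}},\pin R{\vecp^{\{j\}}}$ is empty, the conclusion is immediate, so assume $i\ne j$ and both are non-empty. Fix $\vecz,\vecz'$ with $\vecx:=(\vecz,\vecp^{\{i\}})\in R$ and $\vecy:=(\vecz',\vecp^{\{j\}})\in R$. Since $\vecx$ and $\vecy$ disagree at $i$, the exchange axiom gives $k\in\vecx\setdiff\vecy$ with $\vecx^{\{i,k\}}\in R$; as $\vecx$ and $\vecy$ agree on $\dom\vecp\setminus\{i,j\}$, either $k\in\{i,j\}$ or $k\notin\dom\vecp$. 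If $k=i$ or $k\notin\dom\vecp$ then $\vecx^{\{i,k\}}$ restricts to $\vecp$ on $\dom\vecp$, so it is an extension of $\vecp$ lying in $R$ --- impossible. Hence $k=j$, and then $\vecx^{\{i,j\}}=(\vecz,\vecp^{\{j\}})\in R$, i.e.\ $\vecz\in\pin R{\vecp^{\{j\}}}$. This shows $\pin R{\vecp^{\{i\}}}\subseteq\pin R{\vecp^{\{j\}}}$, and by the symmetry of the argument in $i$ and $j$ the two relations are equal, so their signatures are linearly dependent.

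For ``terraced $\Rightarrow$ delta matroid'' I argue the contrapositive, and this is where the real work is. Assume $R$ is not a delta matroid, and among all witnessing triples --- configurations $\vecx,\vecy\in R$, an index $i\in\vecx\setdiff\vecy$, and the property that $\vecx^{\{i,j\}}\notin R$ for every $j\in\vecx\setdiff\vecy$ --- pick one with $n:=|\vecx\setdiff\vecy|$ as small as possible. A direct check gives $n\ge 3$: if $|\vecx\setdiff\vecy|\le 2$, taking $j$ to be the element of $\vecx\setdiff\vecy$ other than $i$ (or $j=i$ when $|\vecx\setdiff\vecy|=1$) yields $\vecx^{\{i,j\}}=\vecy\in R$. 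By fact (i) we may flip $R$ so that $\vecx=\veczero$; write $D=\veczero\setdiff\vecy$, so $|D|=n\ge 3$ and $i\in D$. Minimality forces that no $\vecz\in R$ has $i\in(\veczero\setdiff\vecz)\subsetneq D$ --- otherwise $(\veczero,\vecz,i)$ would be a strictly smaller witness, since any $j$ witnessing its exchange lies in $\veczero\setdiff\vecz\subseteq D$ and hence would witness the original exchange too. Consequently, letting $R'$ be the pinning of $R$ that fixes every coordinate outside $D$ to $0$, the relation $R'\subseteq\{0,1\}^D$ contains $\veczero$ and $\vecone$, and $\vecone$ is the \emph{only} element of $R'$ whose $i$-th coordinate equals $1$.

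It remains to exhibit a failure of the terraced condition for $R'$ (hence, by fact (ii), for $R$). Choose $k\in D\setminus\{i\}$ (possible since $n\ge 3$) and let $\vecp$ be the partial configuration on $\{i,k\}$ with $p_i=1$ and $p_k=0$. Any extension of $\vecp$ in $R'$ has $i$-th coordinate $1$, hence equals $\vecone$, whose $k$-th coordinate is $1\ne p_k$; so $\pin{R'}{\vecp}$ is identically zero. But $\pin{R'}{\vecp^{\{i\}}}$, the set of extensions with both coordinates $0$, contains the restriction of $\veczero$, whereas $\pin{R'}{\vecp^{\{k\}}}$, the set of extensions with both coordinates $1$, equals $\{\vecone\}$ by the previous paragraph; since $D\setminus\{i,k\}\ne\emptyset$ these are two distinct non-empty relations, so their signatures are not linearly dependent, contradicting that $R'$ is terraced. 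The main obstacle is exactly this converse direction: the terraced condition only constrains $R$ where some pinning is empty, so one must manufacture such an empty pinning, together with two non-equal flips, out of an abstract failure of the exchange axiom. Passing to a \emph{minimal} counterexample is the decisive move --- it upgrades the single failed exchange into the clean statement ``$\vecone$ is the unique element of $R'$ with a $1$ in coordinate $i$'', after which the partial configuration $(1,0)$ on $\{i,k\}$ essentially writes itself; the remaining details (facts (i) and (ii), the $n\ge 3$ check, and the bookkeeping about which coordinates get flipped) are routine.
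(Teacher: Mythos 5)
Your proof is correct. The forward direction is the paper's argument almost verbatim, and the backward direction is the paper's proof recast in contrapositive form: where the paper fixes a triple $(\vecx,\vecy,d)$ and chooses an intermediate configuration $\vecy'$ of minimal distance from $\vecx$ to manufacture an empty pinning whose two flips terracedness then forces to coincide, you take a globally minimal failing triple, flip so that $\vecx=\veczero$, and exhibit the resulting empty pinning whose two flips are visibly unequal --- the same minimality-to-empty-pinning mechanism, run in the opposite logical direction.
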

\begin{proof}
Let $R$ be a delta matroid.  Let $\vecp$ be a partial configuration
such that $\pin R \vecp$ is empty and let $i,j$ be variables on which
$\vecp$ is defined and such that $\pin R {\vecp^{\{i\}}},\pin R
{\vecp^{\{j\}}}$ are non-empty.  We will show that $\pin R
{\vecp^{\{i\}}}=\pin R {\vecp^{\{j\}}}$.  By symmetry it suffices to
show that for all $\vecx\in \pin R {\vecp^{\{i\}}}$ we have $\vecx \in
\pin R {\vecp^{\{j\}}}$.  Pick $\vecy \in \pin R {\vecp^{\{j\}}}$. By
the delta matroid property applied to
$((\vecx,\vecp^{\{i\}}),(\vecy,\vecp^{\{j\}}),i)$ there exists $d$,
such that $x_d\neq y_d$ or $d\in\{i,j\}$, and such that
$(\vecx,\vecp^{\{i\}})^{\{i,d\}}$ is in $R$. Since $\pin R \vecp$ is
empty we have $d=j$ and hence $\vecx\in \pin R {\vecp^{\{j\}}}$.

Conversely let $R$ be a relation whose signature is terraced. For all
$\vecx,\vecy\in R$ and all $d\in\vecx\setdiff\vecy$ we wish to show
that $\vecx^{\{d,d'\}}\in R$ for some $d'\in\vecx\setdiff\vecy$.  Let
$\vecy'\in R$ satisfy
$\{d\}\subseteq\vecx\setdiff\vecy'\subseteq\vecx\setdiff\vecy$ with
$|\vecx\setdiff\vecy'|$ minimal.  If $\vecx\setdiff\vecy'=\{d\}$ we
can take $d'=d$. Otherwise pick
$d'\in(\vecx\setdiff\vecy')\setminus\{d\}$.  Let $\vecp$ be the
restriction of $\vecx^{\{d\}}$ to $\{d,d'\}\cup\{i\mid x_i=y_i\}$.
Configurations $\vecz\in\pin R \vecp$ satisfy
$\{d\}\subseteq\vecx\setdiff(\vecp,\vecz)\subseteq(\vecx\setdiff\vecy')\setminus\{d'\}$,
but $|\vecx\setdiff(\vecp,\vecz)|<|\vecx\setdiff\vecy'|$ contradicts
the choice of $\vecy'$; therefore $\pin R \vecp$ is empty.  And $\pin
R {\vecp^{\{d\}}}$ and $\pin R {\vecp^{\{d'\}}}$ contain the
restrictions of $\vecx$ and $\vecy$ respectively (to
$(\vecx\setdiff\vecy')\setminus\{d,d'\}$).  Since $R$ has a terraced
signature, $\pin R {\vecp^{\{d\}}}=\pin R {\vecp^{\{d'\}}}$ so
$\vecx^{\{d,d'\}}\in R$.
\end{proof}

\begin{lemma}\label{lem:minnonimterr}
For every pinning-minimal non-IM-terraced signature, there is an
equivalent signature $F\sigf k$ such that:
\begin{itemize}
\item the pinning of $F$ by the partial configuration $\vecp$ defined
  by $\vecp(1)=0$ and $\vecp(2)=1$ is identically zero, and
\item there exists a configuration $\vecz$ of $\{3,4,\cdots,k\}$
  and a non-degenerate signature $T\sigf 2$ such that for all
  $x,y_3,\cdots,y_k\in\{0,1\}$ we have
\[F(x,x,y_3,\cdots,y_k)=\begin{cases}
                   T(y_3,x)&\text{if $\vecy=\vecz$ or $\vecy=\overline{\vecz}$}\\
                   0&\text{otherwise}
               \end{cases}\]
\end{itemize}
\end{lemma}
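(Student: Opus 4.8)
The plan is to start from the definition of IM-terraced and unpack what it means for a pinning-minimal counterexample. Since $F$ fails to be IM-terraced, there is a partial configuration $\vecp$ and indices $i,j$ in its domain with $p_i\neq p_j$ such that $\pin F \vecp\equiv 0$ but $\pin F {\vecp^{\{i\}}}$ and $\pin F {\vecp^{\{j\}}}$ are linearly independent. By minimality no proper further pinning of $(\pin F {\vecp^{\{i\}}},\pin F {\vecp^{\{j\}}})$ — equivalently, no nontrivial pinning of $F$ strictly extending $\vecp$ on the complement of $\{i,j\}$ — can keep this linear independence. After renaming variables I would arrange $i=1$, $j=2$, $p_1=0$, $p_2=1$, so that the pinning of $F$ by $\vecp$ (which sets variables $1$ and $2$ to $0$ and $1$) is identically zero; this is the first bullet. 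The remaining variables are $\{3,\dots,k\}$ and $\vecp$ is defined on all of them.

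Next I would identify the two signatures $G_0 := \pin F {\vecp^{\{1\}}}$ (variables $1,2$ both set to $0$... wait, $\vecp^{\{1\}}$ flips $p_1$, giving $p_1=1,p_2=1$) and $G_1 := \pin F {\vecp^{\{2\}}}$ (giving $p_1=0,p_2=0$): both are signatures on $\{3,\dots,k\}$, they are linearly independent, and by pinning-minimality of $F$ the pair $(G_0,G_1)$ is a pinning-minimal linearly independent signature pair. I can now apply Lemma~\ref{lem:minlindep}: $\supp(G_0)\cup\supp(G_1)=\{\vecz,\overline{\vecz}\}$ for some configuration $\vecz$ of $\{3,\dots,k\}$. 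So away from $\vecy\in\{\vecz,\overline{\vecz}\}$ both $G_0$ and $G_1$ vanish, which already gives the ``$0$ otherwise'' clause of the displayed formula — but only for the configurations $(1,1,\vecy)$ and $(0,0,\vecy)$ of $F$, i.e. for $F(x,x,\vecy)$ with $x\in\{0,1\}$, which is exactly the form appearing in the statement.

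It then remains to extract the arity-$2$ signature $T$ and show it is non-degenerate. The natural definition is $T(s,x) := F(x,x,\vecw^{(s)})$ where $\vecw^{(0)},\vecw^{(1)}$ are $\vecz$ and $\overline{\vecz}$ distinguished by their value on coordinate $3$ — say $\vecw^{(s)}$ is the one of $\vecz,\overline{\vecz}$ whose third coordinate equals $s$; this is well-defined provided $\vecz$ and $\overline{\vecz}$ really do differ in coordinate $3$, which they do since a configuration and its complement differ everywhere. With this convention $F(x,x,y_3,\dots,y_k)=T(y_3,x)$ when $\vecy\in\{\vecz,\overline{\vecz}\}$ and $0$ otherwise, matching the display. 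Finally, non-degeneracy of $T$: if $T$ were degenerate as an arity-$2$ signature then its two ``column'' pinnings $T(\cdot,0)$ and $T(\cdot,1)$ would be linearly dependent; but $T(\cdot,0)$ and $T(\cdot,1)$ are, up to the identification of $\{\vecz,\overline\vecz\}$ with $\{0,1\}$ via coordinate $3$, precisely the restrictions of $G_1$ and $G_0$ to $\{\vecz,\overline{\vecz}\}$, and these are linearly independent (restricting to the support loses no information since $G_0,G_1$ are supported on $\{\vecz,\overline{\vecz}\}$). So $T$ is non-degenerate.

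The main obstacle I anticipate is bookkeeping rather than conceptual: carefully tracking which flip of $\vecp$ corresponds to which of $G_0,G_1$, making sure the chosen distinguished coordinate ($3$) is legitimately part of the variable set and that $\vecz\neq\overline\vecz$ on it, and checking that ``pinning-minimal non-IM-terraced'' for $F$ transfers to ``pinning-minimal linearly independent'' for the pair $(G_0,G_1)$ — one must verify that any nontrivial pinning of $(G_0,G_1)$ arises from a nontrivial pinning of $F$ extending $\vecp$, and that $\pin F \vecp$ staying identically zero is automatic under such further pinning, so the hypotheses of the IM-terraced failure persist. Once that correspondence is nailed down, Lemma~\ref{lem:minlindep} does the real work and the rest is unwinding definitions.
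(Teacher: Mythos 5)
Your route is the paper's own: take a witness $(\vecp,i,j)$ of the failure of IM-terracedness, use pinning-minimality of $F$ to show that $(\pin F {\vecp^{\{i\}}},\pin F {\vecp^{\{j\}}})$ is a pinning-minimal linearly independent signature pair, invoke Lemma~\ref{lem:minlindep} to get $\supp(\pin F {\vecp^{\{i\}}})\cup\supp(\pin F {\vecp^{\{j\}}})=\{\vecz,\overline{\vecz}\}$, and read off $T$ from the four values on $\vecz,\overline{\vecz}$; your non-degeneracy argument for $T$ (the two columns of $T$ are the restrictions of two linearly independent signatures supported on $\{\vecz,\overline{\vecz}\}$) is also exactly the paper's.

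There is one genuine gap: you assume without justification that the witnessing partial configuration $\vecp$ has domain exactly $\{i,j\}=\{1,2\}$. (Your sentence ``the remaining variables are $\{3,\dots,k\}$ and $\vecp$ is defined on all of them'' is, read literally, impossible: if $\vecp$ were a total configuration then $\pin F {\vecp^{\{i\}}}$ and $\pin F {\vecp^{\{j\}}}$ would have arity zero, and any two arity-zero signatures are linearly dependent, so no such witness exists.) A priori the $\vecp$ supplied by the definition of IM-terraced may pin further variables; then $G_0,G_1$ live on a proper subset of $\{3,\dots,k\}$, Lemma~\ref{lem:minlindep} only constrains their supports on that subset, and neither the first bullet (which concerns specifically the pinning with domain $\{1,2\}$) nor the ``for all $y_3,\dots,y_k$'' clause of the display follows. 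The repair is a second use of pinning-minimality, in the opposite direction from the one you do spell out: let $\vecp'$ be the restriction of $\vecp$ to $\dom(\vecp)\setminus\{i,j\}$; then $\pin F {\vecp'}$ is still not IM-terraced (witnessed by the restriction of $\vecp$ to $\{i,j\}$, since the order of pinnings is irrelevant), so by minimality $\vecp'$ must be trivial, i.e.\ $\dom(\vecp)=\{i,j\}$. With that step inserted, your argument coincides with the paper's proof.
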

\begin{proof}
Consider an arbitrary pinning-minimal non-IM-terraced signature $F$.
Since $F$ is not IM-terraced there exist $\vecp,i,j$ such that $p_i=0$
and $p_j=1$ and $\pin F \vecp$ is identically zero, but $\pin F
{\vecp^{\{i\}}}$ and $\pin F {\vecp^{\{j\}}}$ are linearly
independent.  By renaming variables we can assume $i=1$ and $j=2$ and
$V(F)=\{1,2,\cdots,k\}$ for some $k$.

We will write $00$ and $11$ for the partial configurations $\{1\mapsto
0, 2\mapsto 0\}$ and $\{1\mapsto 1, 2\mapsto 1\}$ respectively, so
$\{\vecp^{\{1\}},\vecp^{\{2\}}\}=\{00,11\}$.  Let $\vecp'$ be the
restriction of $\vecp$ to $\dom(\vecp)\setminus\{1,2\}$.  Then
$F'=\pin F {\vecp'}$ is also not IM-terraced: $\pin {F'} {00}=\pin F
{\vecp^{\{i\}}}$ and $\pin {F'} {11}=\pin F {\vecp^{\{j\}}}$ are
linearly independents. Hence $\dom(\vecp)=\{1,2\}$ by minimality of
$F$.

We will argue that $(\pin F {00},\pin F {11})$ is a pinning-minimal
linearly independent signature pair.  We need to check that for any
non-empty partial configuration $\vecy$ of $\{3,\cdots,k\}$ the
pinnings $(\pin F {00})_{\vecy}$ and $(\pin F {11})_{\vecy}$ are
linearly dependent.  But $F_{\vecy}$ is IM-terraced by minimality of
$F$, and $(F_{\vecy})_{\vecp}$ is identically zero, and
$\vecp^{\{1\}}=00$ and $\vecp^{\{2\}}=11$, so $(F_{\vecy})_{00}$ and
$(F_{\vecy})_{11}$ are indeed linearly dependent because the order in
which pinnings are applied does not matter.

By Lemma \ref{lem:minlindep},
  $\supp(\pin F {00})\cup\supp(\pin F {11})=\{\vecz,\overline{\vecz}\}$ for some
  configuration $\vecz$ of $\{3,4,\cdots,k\}$.  Without loss of generality
  we may take $z_3=0$.  Set
$T(0,0)=\pin F {00}(\vecz)$  and
$T(1,0)=\pin F {00}(\overline{\vecz})$ and
$T(0,1)=\pin F {11}(\vecz)$ and
$T(1,1)=\pin F {11}(\overline{\vecz})$.
This $T$ satisfies the required expression for $F$.
Furthermore the signatures $F_{00}$ and $F_{11}$ are linearly independent,
hence so are the vectors $(T(0,0),T(1,0))$ and $(T(0,1),T(1,1))$,
and hence $T$ is non-degenerate.
\end{proof}

\begin{lemma}\label{lem:minnonterr}
For every pinning-minimal non-terraced signature there is
an equivalent signature $F\sigf k$ and a configuration
$\vecp:\{1,2\}\to\{0,1\}$ such that:
\begin{itemize}
\item $\pin F \vecp$ is identically zero 
\item there exists a configuration $\vecz:\{3,\cdots,k\}\to\{0,1\}$
  and a non-degenerate signature $T\sigf 2$ such that if $\vecx$ is one
  of the flips $\vecp^{\{1\}}$ or $\vecp^{\{2\}}$ (either
  $(1-p_1,p_2)$ or $(p_1,1-p_2)$ as elements of $\{0,1\}^{\{1,2\}}$),
  then for all $y_3,\cdots,y_k\in\{0,1\}$ we have
\[F(x_1,x_2,y_3,\cdots,y_k)=\begin{cases}
                   T(y_3,x_1)&\text{if $\vecy=\vecz$ or $\vecy=\overline{\vecz}$}\\
                   0&\text{otherwise}
               \end{cases}\]
\end{itemize}
\end{lemma}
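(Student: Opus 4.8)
The plan is to mirror the proof of Lemma \ref{lem:minnonimterr}; the only structural change is that the terraced condition, unlike the IM-terraced condition, imposes no relation between $p_i$ and $p_j$, so the two relevant flips of $\vecp$ are no longer forced to be the configurations $00$ and $11$. I would start from a pinning-minimal non-terraced signature $F$: by definition there are a partial configuration $\vecp$ and indices $i,j\in\dom(\vecp)$ with $\pin F \vecp$ identically zero but $\pin F {\vecp^{\{i\}}}$ and $\pin F {\vecp^{\{j\}}}$ linearly independent. After renaming variables I take $i=1$, $j=2$ and $V(F)=\{1,\cdots,k\}$. Writing $\vecp'$ for the restriction of $\vecp$ to $\dom(\vecp)\setminus\{1,2\}$, the pinning $\pin F {\vecp'}$ is again non-terraced --- this is witnessed by the restriction of $\vecp$ to $\{1,2\}$ together with the indices $1,2$, since iterated pinnings compose --- so pinning-minimality of $F$ forces $\vecp'$ to be trivial, i.e.\ $\dom(\vecp)=\{1,2\}$, matching the form required in the statement.

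Next I would show that $(\pin F {\vecp^{\{1\}}},\pin F {\vecp^{\{2\}}})$ is a pinning-minimal linearly independent signature pair. Linear independence is exactly the hypothesis. For minimality, fix any non-trivial partial configuration $\vecy$ of $\{3,\cdots,k\}$: then $F_\vecy$ is terraced by pinning-minimality of $F$, the pinning $(F_\vecy)_\vecp$ is identically zero (the order of pinnings is irrelevant), and $1,2\in\dom(\vecp)$, so $(F_\vecy)_{\vecp^{\{1\}}}$ and $(F_\vecy)_{\vecp^{\{2\}}}$ are linearly dependent; these coincide with $(\pin F {\vecp^{\{1\}}})_\vecy$ and $(\pin F {\vecp^{\{2\}}})_\vecy$. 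Lemma \ref{lem:minlindep} then yields a configuration $\vecz$ of $\{3,\cdots,k\}$ with $\supp(\pin F {\vecp^{\{1\}}})\cup\supp(\pin F {\vecp^{\{2\}}})=\{\vecz,\overline\vecz\}$, and I would relabel (swapping $\vecz$ and $\overline\vecz$ if necessary) so that $z_3=0$. Since $z_3=0$, whether $\vecy\in\{\vecz,\overline\vecz\}$ is decided by $y_3$, so on each of the two flips $F$ vanishes off $\{\vecz,\overline\vecz\}$ and on it equals the arity-two signature $T$ given by $T(0,1-p_1)=\pin F {\vecp^{\{1\}}}(\vecz)$, $T(1,1-p_1)=\pin F {\vecp^{\{1\}}}(\overline\vecz)$, $T(0,p_1)=\pin F {\vecp^{\{2\}}}(\vecz)$ and $T(1,p_1)=\pin F {\vecp^{\{2\}}}(\overline\vecz)$ (consistent on all of $\{0,1\}^2$ because $1-p_1\neq p_1$). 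Finally $T$ is non-degenerate because its two columns span the same subspace as the value vectors of $\pin F {\vecp^{\{1\}}}$ and $\pin F {\vecp^{\{2\}}}$ restricted to $\{\vecz,\overline\vecz\}$ --- which by Lemma \ref{lem:minlindep} already carry all of their nonzero values --- hence are linearly independent.

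The only point needing genuine care, and where this differs from Lemma \ref{lem:minnonimterr}, is the indexing of $T$: there the two flips were $00$ and $11$, so the common value of the first two coordinates could be used as an argument of $T$, whereas here $\vecp^{\{1\}}=(1-p_1,p_2)$ and $\vecp^{\{2\}}=(p_1,1-p_2)$ differ in coordinate $1$ (in fact in both coordinates), so $x_1$ takes the distinct values $1-p_1$ and $p_1$ on the two flips and can serve as the second argument of $T$. Everything else is the same bookkeeping as in Lemma \ref{lem:minnonimterr}, so I expect no further obstacle.
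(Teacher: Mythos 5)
Your proof is correct, but it takes a different route from the paper's. The paper does not re-run the argument of Lemma \ref{lem:minnonimterr}: instead it observes that if $G$ is pinning-minimal non-terraced, witnessed by $\vecp,i,j$, then the flip $G^S$ with $S$ containing $i$ when $p_i=1$ and $j$ when $p_j=0$ is non-IM-terraced (the flipped witness $\vecq=\vecp^S$ has $q_i=0$, $q_j=1$); since terracedness is preserved by flips, every proper pinning of $G^S$ is terraced and hence IM-terraced, so $G^S$ is pinning-minimal non-IM-terraced, and the required form of $F$ is obtained by applying Lemma \ref{lem:minnonimterr} to $G^S$ and flipping back. You instead redo the whole argument of Lemma \ref{lem:minnonimterr} directly: restrict $\vecp$ to $\{1,2\}$ by minimality, show $(\pin F{\vecp^{\{1\}}},\pin F{\vecp^{\{2\}}})$ is a pinning-minimal linearly independent pair (correctly noting that the terraced definition, unlike IM-terraced, needs no hypothesis on $p_1$ versus $p_2$), invoke Lemma \ref{lem:minlindep}, and read off $T$, using $x_1$ as the second argument since $\vecp^{\{1\}}$ and $\vecp^{\{2\}}$ always differ in coordinate $1$. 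Both are sound; the paper's reduction is shorter given that Lemma \ref{lem:minnonimterr} is already proved, at the cost of the (easy but unproved in the paper) observations that terracedness is flip-invariant and that pinning commutes with flipping, whereas your direct argument avoids those observations at the cost of repeating the bookkeeping. One cosmetic nit: ``whether $\vecy\in\{\vecz,\overline{\vecz}\}$ is decided by $y_3$'' should read that \emph{which} of $\vecz,\overline{\vecz}$ the configuration $\vecy$ equals, given that it lies in that set, is decided by $y_3$; your subsequent definition of $T$ makes clear this is what you mean.
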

\begin{proof}
Given a pinning-minimal non-terraced signature $G$, there exists
$\vecp,i,j$ such that $\pin G {\vecp}$ is identically zero, but $\pin
G {\vecp^{\{i\}}}$ and $\pin G {\vecp^{\{j\}}}$ are linearly
independent.  Let $S$ be the set containing: $i$ if $p_i=1$, and $j$
if $p_j=0$.  Then the flip $G^S$ is not IM-terraced: let
$\vecq=\vecp^S$; then $\pin {G^S} {\vecq}$ is identically zero but
$\pin {G^S} {\vecq^{\{i\}}}=\pin G {\vecp^{\{i\}}}$ and $\pin {G^S}
{\vecq^{\{j\}}}=\pin G {\vecp^{\{j\}}}$ are linearly independent.

Since $G$ is pinning-minimal non-terraced, every proper pinning of $G$
is terraced. Terracedness is preserved by flips, so every proper
pinning of $G^S$ is terraced, and hence IM-terraced.  So $G^S$ is
pinning-minimal non-IM-terraced.  The expression for a signature $F$
equivalent to $G^S$ is given by applying an arbitrary flip to the
expression given by Lemma \ref{lem:minnonimterr}.
\end{proof}

\begin{lemma}\label{lem:min3delta}
Let $R$ be a delta matroid that is pinning-minimal subject to not
being basically binary.  There is an $h$-maximisation of $R$
equivalent to a flip of $\PM_3$.
\end{lemma}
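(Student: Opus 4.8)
I want to show that a delta matroid $R$ which is pinning-minimal subject to not being basically binary has an $h$-maximisation equivalent to a flip of $\PM_3$. First I would set up notation: write $R\subseteq\{0,1\}^V$ with $|V|=k$, so by minimality every proper pinning $\pin R \vecp$ is basically binary, but $R$ itself is not. The strategy is to find a small ``core'' of $R$ — specifically an $h$-maximisation that isolates three coordinates behaving like $\PM_3$ — using the combination of (a) the delta matroid exchange axiom, (b) indecomposability forced by minimality, and (c) the classification of pinning-minimal non-basically-binary structures that can be extracted from the earlier lemmas. I expect the right first move is to argue $R$ is indecomposable: if $R$ were a Cartesian product $R_1\times R_2$ of nontrivial relations, then each $R_i$ is a pinning of $R$ (pin the other factor arbitrarily, which is legal since delta matroids are nonempty so such pins exist), hence each $R_i$ is basically binary, hence so is $R$ — contradiction. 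Similarly, every proper pinning being basically binary plus indecomposability should force $k$ to be small and the structure very constrained.

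**Key steps.** (1) Reduce to indecomposable $R$ with every proper pinning basically binary, as above. (2) Show $k=3$: since $R$ is not basically binary it is not equivalent to a product of arity-$\leq2$ relations, so in particular $k\geq 3$; and for $k\geq 4$ I would argue that some pinning still fails to be basically binary, contradicting minimality — the mechanism is that a large indecomposable delta matroid, when restricted to a coordinate, cannot suddenly split into arity-$\leq 2$ pieces unless $R$ was already close to such a product. Here the delta matroid axiom is essential: it controls how the support ``moves'' under single-coordinate flips, and combined with indecomposability it pins down that the only obstruction is a genuine ternary one. (3) Once $k=3$: classify the indecomposable ternary delta matroids that are not basically binary. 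Up to flips the candidates are relations where the Hamming weight lies in a prescribed set; working through the delta matroid axiom for subsets of $\{0,1\}^3$ and discarding those that are (i) decomposable, (ii) basically binary, or (iii) not delta matroids, the surviving option should be a flip of $\PM_3=\{(0,0,1),(0,1,0),(1,0,0)\}$, possibly with extra configurations like $\veczero$ or $\vecone$ adjoined (e.g.\ $\{\vecx\mid\sum x_i\in\{1\}\}$ versus $\{\vecx\mid\sum x_i\in\{0,1\}\}$ or $\{\vecx\mid\sum x_i\in\{1,3\}\}$). (4) Use $h$-maximisation to strip away the extra configurations: choosing $h=(1,1,1)$ maximises total weight, which kills $\veczero$ and keeps the weight-$1$ (and possibly weight-$3$) configurations; choosing $h=(-1,-1,-1)$ or a suitable mixed sign vector isolates exactly the three weight-one configurations. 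Since $R$ is a relation (all nonzero values equal $1$), $R_{h-\max}$ is again a relation, and for the right $h$ it equals $\PM_3$ up to a flip.

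**Main obstacle.** The delicate part is step (2)–(3): ruling out $k\geq 4$ and enumerating the ternary cases. I do not want to brute-force all subsets of $\{0,1\}^4$; instead I would lean on the structural lemmas already proved — in particular the pattern in Lemmas~\ref{lem:minnonjoin}, \ref{lem:minnondeg}, \ref{lem:minnonterr}, where a pinning-minimal failure of a closure/structure property forces the support into $\{\veczero,\vecx,\overline\vecx,\vecone\}$ or $\{\vecx,\overline\vecx\}$ — and adapt that template to ``not basically binary.'' The subtlety is that ``basically binary'' is about \emph{decomposition into arity-$\leq2$ factors}, which is a global property, so the pinning-minimality argument needs care: flipping one coordinate to a constant must be shown to actually split off an arity-$\leq 2$ factor, and that is where I would invoke the delta matroid exchange axiom to argue the remaining coordinates are ``coupled only in pairs.'' Once $k=3$ is established, the finite check is short. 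The final $h$-maximisation step is routine given the enumerated normal form.
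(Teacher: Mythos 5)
Your step (1) (indecomposability via minimality) matches the paper, and your step (4) ($h$-maximisation with $h_i=2x_i-1$ relative to a suitable centre $\vecx$) is the right finishing move. But there is a genuine gap at your steps (2)--(3): the entire content of the lemma is the claim that the delta matroid axiom plus pinning-minimality forces an arity-3, $\PM_3$-like core, and you assert this rather than prove it. Your proposed mechanism --- ``a large indecomposable delta matroid cannot suddenly split into arity-$\leq 2$ pieces under a single pin unless it was already close to such a product'' --- is precisely the statement to be established, and the template of Lemmas \ref{lem:minnonjoin}, \ref{lem:minnondeg}, \ref{lem:minnonterr} does not adapt: those arguments work because the failure of the property is witnessed by a \emph{pair} of configurations, so one can pin the coordinates on which the pair agrees; ``not basically binary'' has no such two-configuration witness, as you yourself observe. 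You name the obstacle but do not supply the idea that overcomes it.

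The paper's actual route is different and is worth contrasting with your plan. It first observes that the \emph{conclusion} (having an $h$-maximisation that is a flip of $\PM_3$) already implies ``not basically binary''; this lets minimality be invoked in reverse. It then takes a \emph{maximal} pinning $\vecp$ with $\pin R\vecp=\emptyset$ (one exists since $R$ is not the full relation). By maximality and Lemma \ref{lem:dmterr}, all the single-flip pinnings $\pin R{\vecp^{\{v\}}}$ are non-empty and equal; indecomposability forces $|\dom(\vecp)|\geq 2$. If $|\dom(\vecp)|\geq 3$, pinning the remaining coordinates by a common element $\vecy$ of these sets produces an arity-3 pinning $R'$ with the ``sphere property'' ($\vecq\notin R'$ but all single flips of $\vecq$ are in $R'$); since the sphere property certifies non-basically-binary, minimality gives $R=R'$ and arity 3 follows. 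If $|\dom(\vecp)|=2$, a further appeal to Lemma \ref{lem:minlindep} and two explicit applications of the exchange axiom again force arity 3 and the sphere property with $d=1$ or $d=2$. Note also that your proposed arity-3 enumeration silently assumes the surviving relations are Hamming-weight-symmetric up to flips, which is not justified; the sphere property sidesteps any such enumeration. Without the maximal-empty-pinning/sphere-property idea (or a worked-out substitute), the proposal does not constitute a proof.
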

\begin{proof}
First note that any relation $R$ with an $h$-maximisation of $R$
equivalent to a flip of $\PM_3$ is not basically binary:
$h$-maximisation cannot make decomposable relations indecomposable,
and an indecomposable arity 3 relation cannot be basically binary.
Also, $R$ is indecomposable: if $R=R_1\times R_2$ then since $R$ is
not basically binary, either $R_1$ or $R_2$ is not basically binary,
but $R_1$ and $R_2$ are pinnings of $R$.

We will in fact show that $R$ has the ``sphere property'' that there
exists $\vecx\in\{0,1\}^3$ such that $\vecx\in\{0,1\}$ and $d=1,2$
such that $\vecx^U\notin R$ for subsets $U$ of $\{1,2,3\}$ with
$|U|<d$ and $\vecx^U\in R$ for $|U|=d$.  Then let $h(1)=2x_1-1$ and
$h(2)=2x_2-1$ and $h(3)=2x_3-1$. Observe that $S=R_{h-\max}$ consists
precisely of the three configurations $\vecx^U$ with $|U|=d$.  In
other words $S$ is a flip $\PM_3^{U'}$, where $U'=U$ if $d=1$ and
$U'=\{1,2,3\}\setminus U$ if $d=2$.

There exists a configuration not in $R$ (otherwise $R$ would be
basically binary). So there is an arity zero pinning of $R$ that is
the empty relation.  Let $\pin R \vecp$ be a maximal pinning subject
to $\pin R \vecp =\emptyset$.  For each $v\in \dom(\vecp)$ let
$\vecp'$ be the restriction $\vecp'$ of $\vecp$ to
$\dom(\vecp)\setminus\{v\}$; the pinning $\pin R {\vecp'}$ is
non-empty by maximality of $\pin R \vecp$, and hence the relations
$\pin R \vecp^{\{v\}}$ are non-empty.  The signature of $R$ is terraced by
Lemma \ref{lem:dmterr}, so $\pin R {\vecp^{\{v\}}}=\pin
R{\vecp^{\{v'\}}}$ for any $v,v'\in\dom(\vecp)$.

Recall that $R$ is indecomposable.  But if $\vecp$ has variable set
$\{v\}$ for some $v$ then $R$ is the product of $\{\vecp^{\{v\}}\}$
with $\pin R {\vecp^{\{v\}}}$.  So $\vecp$ has arity at least $2$.

If $\vecp$ has arity at least 3, split $\vecp$ as $(\vecq,\vecp')$
where $\dom(\vecq)=3$.  Pick $\vecy\in \pin R {\vecp^{\{v\}}}$ (for
any $v$).  Let $R'$ be the pinning of $R$ by both $\vecy$ and
$\vecp'$.  Note that $\vecq\notin R'$ but $\vecq^{\{v\}}\in R'$ for
all variables $v\in\dom(\vecq)$.  Hence $R=R'$ and $R'$ has the
sphere property with $d=1$.

The remaining case is that $\vecp$ has variable set $\{i,j\}$ for some
distinct variables $i,j$.  Since $R$ is indecomposable,
$R$ is not the product of $\pin R {\vecp^{\{i\}}}$ with an arity 2
relation on $\{i,j\}$.  Hence $\pin R {\vecp^{\{i\}}}$ and
$\pin R {\vecp^{\{i,j\}}}$ are linearly independent.  Let $R'$ be a
minimal pinning of $R$ such that $G=\pin {R'} {\vecp^{\{i\}}}$ and
$H=\pin {R'} {\vecp^{\{i,j\}}}$ are linearly independent.

By Lemma \ref{lem:minlindep} we have $G\cup
H=\{\vecy,\overline{\vecy}\}$ for some $\vecy$, and without loss of
generality, either $G=\{\vecy\}$ or $H=\{\overline{\vecy}\}$.
Also, to recap: $\pin R {\vecp}=\emptyset$ and $G=\pin R
{\vecp^{\{i\}}}=\pin R {\vecp^{\{j\}}}\neq\pin R {\vecp^{\{i,j\}}}=H$.
If $G=\{\vecy\}$ then by the delta matroid property applied to
$(\vecp^{\{i,j\}},\overline{\vecy})$, $(\vecp^{\{i\}},\vecy)$ and $j$
there exists $k\in\{j\}\cup\dom(\vecy)$ such that
$(\vecp^{\{i,j\}},\overline{\vecy})^{\{j,k\}}\in R$, but then $k$ must
lie in $\dom(\vecy)$ and $\overline{\vecy}^{\{k\}}\in \pin R
{\vecp^{\{j\}}}=G$.  Hence $\vecy$ has arity 1, $R$ has arity 3, and
the sphere property holds with
$\vecx=(\vecp^{\{i,j\}},\overline{\vecy})$ and $d=2$.  If
$H=\{\overline{\vecy}\}$ then by the delta matroid property applied to
$(\vecp^{\{i\}},\vecy)$, $(\vecp^{\{i\}},\overline{\vecy})$ and $j$,
there exists $k\in\{j\}\cup\dom(\vecy)$ such that
$(\vecp^{\{i\}},\vecy)^{\{j,k\}}\in R$, but then $k$ must lie in
$\dom(\vecy)$ and $\vecy^{\{k\}}\in \pin R {\vecp^{\{i,j\}}}=H$. Hence
$\vecy$ has arity 1, $R$ has arity 3, and the sphere property holds
with $\vecx=(\vecp^{\{i,j\}},\overline{\vecy})$ and $d=1$.
\end{proof}

\section{Main theorem}

\begin{reptheorem}{thm:basic}
Let $\Gamma$ be a finite set of relations.  If $\Gamma\subseteq\Neqconj$ or
every relation in $\Gamma$ is basically binary then $\cwt(\Gamma)$ is in FP.  Otherwise,
\begin{itemize}
\item If $\Gamma\subseteq\Imconj$ then $\BIS\APeq\cwt(\Gamma)$.
\item If $\Gamma\not\subseteq\Imconj$ then $\PMP\APred\cwt(\Gamma)$. If furthermore $\Gamma$ is not a set of delta matroids then $\SAT\APeq\cwt(\Gamma)$.
\end{itemize}
\end{reptheorem}
\begin{proof}
The inclusion in FP is given by Lemma \ref{lem:tract}.  We will therefore assume
that $\Gamma$ contains a relation that is not in NEQ-conj and a
relation that is not basically binary.  We will consider the four
cases depending on whether $\Gamma\subseteq\Imconj$ and whether
$\Gamma$ consists entirely of delta matroids:

\begin{tabular}{|l|l|l|}
\hline
$\Imconj$ &  delta & \\
          &  matroids & \\
\hline
yes & yes & impossible by Lemma \ref{lem:imdeltabb}\\
no & yes & $\PMP\APred\cwt(\Gamma)$ by Lemma \ref{lem:pmpterr}\\
yes & no & $\BIS\APeq\cwt(\Gamma)$ by Lemmas \ref{lem:nondelta}, \ref{lem:unbounded} and \ref{lem:biseasy}\\
no & no & $\SAT\APeq\cwt(\Gamma)$ by Lemmas \ref{lem:nondelta}, \ref{lem:unbounded} and \ref{lem:sateasy}\\
\hline
\end{tabular}
\end{proof}

\begin{lemma}\label{lem:imdeltabb}
Let $R$ be a delta matroid in IM-conj. Then $R$ is basically binary.
\end{lemma}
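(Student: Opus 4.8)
The plan is to argue by contradiction, using Lemma~\ref{lem:min3delta} together with two facts: that membership in $\Imconj$ survives both pinning and $h$-maximisation, while no flip of $\PM_3$ lies in $\Imconj$.

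So suppose $R$ is a delta matroid in $\Imconj$ that is not basically binary. First I would descend to a pinning-minimal witness: among the finitely many pinnings $R_{\vecp}$ of $R$, pick one $R'$ of least arity that is still not basically binary. Since a non-trivial pinning strictly decreases the arity, every proper pinning of $R'$ is basically binary, so $R'$ is pinning-minimal subject to not being basically binary. I would then check that $R'$ inherits the two hypotheses. Pinnings of delta matroids are delta matroids: given $\vecx,\vecy\in R_{\vecp}$ and $i\in\vecx\setdiff\vecy$, apply the exchange axiom to $(\vecx,\vecp),(\vecy,\vecp)\in R$; since $(\vecx,\vecp)$ and $(\vecy,\vecp)$ agree on $\dom(\vecp)$, the exchanged index $j$ avoids $\dom(\vecp)$, giving $\vecx^{\{i,j\}}\in R_{\vecp}$. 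And substituting a constant into a conjunction of implications and pins yields a conjunction of implications and pins, so $R'$ is in $\Imconj$. Hence $R'$ is a delta matroid in $\Imconj$ that is pinning-minimal non-basically-binary.

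Now apply Lemma~\ref{lem:min3delta} to $R'$: there is some $h$ such that $S:=R'_{h-\max}$ is equivalent to a flip of $\PM_3$; renaming variables (and using that $\PM_3$ is symmetric) we may take $S=\PM_3^U$ on the variable set $\{1,2,3\}$ for some $U\subseteq\{1,2,3\}$. On one hand, $R'$ is closed under meets and joins (\cite[Corollary~18]{csptrich}), and the functional $\vecx\mapsto\sum_i x_ih_i$ is modular since $\min(a,b)+\min'(a,b)$... more precisely $\min(a,b)+\max(a,b)=a+b$ coordinatewise; so for maximisers $\vecx,\vecy\in S$ we have $\sum_i(\vecx\wedge\vecy)_ih_i+\sum_i(\vecx\vee\vecy)_ih_i=\sum_i x_ih_i+\sum_i y_ih_i$, and since each term is at most the maximum value of the functional on $R'$, both $\vecx\wedge\vecy$ and $\vecx\vee\vecy$ again lie in $S$. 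Thus $S$ is closed under meets and joins, and in particular it contains the meet $\bigwedge S$ and the join $\bigvee S$ of all its elements; since at every coordinate the three elements of $\PM_3^U$ take both values $0$ and $1$, these are $\veczero$ and $\vecone$. On the other hand, $\veczero\in\PM_3^U$ forces $|U|=1$ while $\vecone\in\PM_3^U$ forces $|U|=2$, which is impossible. This contradiction shows that $R$ must be basically binary.

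The routine parts are the descent to a pinning-minimal relation and checking that the delta matroid and $\Imconj$ properties survive pinning. The crux — and the only place requiring a real idea — is the pair of structural observations feeding the contradiction: that $h$-maximisation preserves $\Imconj$ (the modularity computation), and that a flip of $\PM_3$, having exactly three elements with $\bigwedge=\veczero$ and $\bigvee=\vecone$, can never be closed under both meets and joins.
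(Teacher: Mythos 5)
Your proof is correct, but it takes a genuinely different route from the paper's. The paper argues directly: assuming $R$ indecomposable of arity $\geq 3$, it writes $R$ as a conjunction of implications $x_i\leq x_j$ over a connected implication graph, finds a vertex of degree at least two, and in each of the three orientation cases exhibits a failure of the exchange axiom at $\veczero$ or $\vecone$ (both of which lie in $R$ since no variable is pinned). You instead descend to a pinning-minimal non-basically-binary pinning $R'$ (correctly checking that both the delta matroid property and membership in $\Imconj$ survive pinning), invoke Lemma \ref{lem:min3delta} to extract an $h$-maximisation equivalent to $\PM_3^U$, and derive a contradiction from the modularity of $\vecx\mapsto\sum_i x_ih_i$ (so $h$-maximisation preserves closure under meets and joins) together with the fact that $\PM_3^U$ would then have to contain both $\veczero$ (forcing $|U|=1$) and $\vecone$ (forcing $|U|=2$). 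This is logically sound — Lemma \ref{lem:min3delta} precedes Lemma \ref{lem:imdeltabb} in the paper and does not depend on it — and the modularity observation is a nice reusable fact. What the paper's argument buys is self-containedness and elementarity: it needs nothing beyond the definition of IM-conj and the exchange axiom, whereas yours outsources the hard combinatorics to Lemma \ref{lem:min3delta}, whose proof is itself the most intricate part of that section. What yours buys is economy of ideas once Lemma \ref{lem:min3delta} is in hand: the contradiction is almost immediate from lattice closure.
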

\begin{proof}
We may assume that $R$ is indecomposable.  Assume for contradiction
that $R$ has arity at least three.

Let $V$ be the variable set of $R$.  Note that no variables are
pinned: if there exists $i\in V$ and $c\in\{0,1\}$ such that $x_i=c$
for all $\vecx\in R$, then $R$ is the product of $\{c\}$ with the
pinning of $R$ by $\{i\mapsto c\}$, but this contradicts the
assumption that $R$ is indecomposable. Since $R$ is in IM-conj and no
variables are pinned, $R$ is a conjunction of implications of
variables. Therefore there is a subset $P$ of $V\times V$ such that
\[ R=\{\vecx \mid \text{$x_i\leq x_j$ for all $(i,j)\in P$}\} \]

Consider the undirected graph $G$ on $V$ where $i$ and $j$ are
adjacent if and only if $(i,j)$ or $(j,i)$ is in $P$.  Then $G$ has at
least three vertices, and since $R$ is indecomposable, $G$ is
connected.  Hence there is a vertex $i$ of degree at least two.  There
exist distinct variables $j,k\in V$ such that
$(i,j),(i,k)\in P$, or
$(j,i),(k,i)\in P$, or
$(j,i),(i,k)\in P$.
In the first case, there is no $\ell\in V$ such that
$\veczero^{\{i,\ell\}}\in R$.  In the second case, there is no
$\ell\in V$ such that $\vecone^{\{i,\ell\}}\in R$.  In the third case,
there is no $\ell\in V$ such that $\veczero^{\{j,\ell\}}\in R$.  But
the all-zero configuration $\veczero$ and the all-one configuration
$\vecone$ are both in $R$.  Hence the delta matroid property fails for
$R$.
\end{proof}

\begin{lemma}\label{lem:nondelta}
Let $\Gamma$ be a finite set of relations which are not all delta
matroids.  Then \[\nCSPs(\Gamma)\APred\cwt(\Gamma)\]

\end{lemma}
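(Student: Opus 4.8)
We need to reduce the unbounded-degree problem $\nCSPs(\Gamma)$ to the degree-two-with-variable-weights problem $\cwt(\Gamma)$, under the hypothesis that some relation $R\in\Gamma$ is not a delta matroid. The point is that variable weights in the degree-two setting let us simulate arbitrarily high degree; this is the ``fan-out'' idea of Feder, adapted to approximate counting. The first step is to reduce to a pinning-minimal witness: by Lemma \ref{lem:freepin} (using the variable weights $(1,0),(0,1)$, which are available since $W=\Qnonneg\times\Qnonneg$) we may freely use $\PIN_0,\PIN_1$ and all pinnings of relations in $\Gamma$, so it suffices to build a fan-out gadget from any pinning-minimal non-delta-matroid relation. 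Equivalently, by Lemma \ref{lem:dmterr} this is a pinning-minimal non-terraced relation, and Lemma \ref{lem:minnonterr} gives us a very concrete normal form: an equivalent $F\sigf k$ with a partial configuration $\vecp:\{1,2\}\to\{0,1\}$ such that $\pin F \vecp\equiv 0$, together with a configuration $\vecz$ of $\{3,\dots,k\}$ and a non-degenerate $T\sigf 2$ controlling the two flips $\vecp^{\{1\}},\vecp^{\{2\}}$. Since $F$ is a relation, $T$ is a non-degenerate $\{0,1\}$-valued signature of arity $2$, i.e.\ $T$ is (a flip of) one of $\EQ_2$, $\IMP$, $\RIMP$, $\NAND$, $\OR$ — in every case a relation from which we can pin and manipulate.

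\textbf{Building fan-out.} The heart of the argument is to use this normal form to construct, for some $d\geq 2$, a $(\leq 2)$-formula over $\Gamma\cup\{\PIN_0,\PIN_1\}$ whose associated signature is $\EQ_d$ (equality of $d$ variables, each of degree one externally). The relation $F$ in the normal form behaves, after pinning the coordinates $3,\dots,k$ to agree with $\vecz$, like a binary relation on $(x_1,x_2)$ that is ``$x_1=x_2$'' on the two flips but excludes the configuration $\vecp$ itself; combined with $\pin F \vecp\equiv 0$ and the remaining flip, one extracts from $F$ a gadget linking two variables in a way that is consistent with equality but not with a mere arity-one factorization — this is precisely Feder's observation that a non-delta-matroid relation yields a ``not-equal-to-a-product'' binary pattern, which can be chained. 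The chaining step: given such a binary gadget $B(u,v)$ which agrees with $\EQ_2$ on its support and has each of $u,v$ at degree one, a path $B(u,w_1)B(w_1,w_2)\cdots B(w_{m-1},v)$ still forces $u=v$ on the support; by then introducing a fresh degree-two ``hub'' variable one obtains a fan-out gadget that distributes one logical value to many degree-one external copies. This is exactly the kind of construction that Lemma \ref{lem:express} lets us fold back into the constraint language.

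\textbf{Finishing.} Once $\EQ_d$ (for some fixed $d\geq 2$) is available as a degree-two gadget, any instance of $\nCSPs(\Gamma)$ — in which variables have unbounded degree — is simulated by replacing each degree-$e$ variable with a tree of $\EQ_d$ gadgets branching out to $e$ degree-one leaves (a standard ``caterpillar''/binary-tree fan-out, padding with extra leaves and absorbing them via variable weights $(1,1)$ or by pinning as needed). The variable weights of the original instance are simply transferred to one leaf of each tree. Degrees are then all at most $\max(d, \text{arities in }\Gamma, 2)$, but the statement requires degree at most $2$; so in fact $d$ must be taken to be exactly $2$ (the $\EQ_2$ gadget, plus a degree-two hub) and every occurrence of a relation $R\in\Gamma$ of arity $r$ already has its $r$ scope-positions, which is fine — we only need the \emph{equality-sharing} to be degree-two, and that is what the fan-out tree delivers. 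Applying Lemma \ref{lem:express} to discharge the gadget and Lemma \ref{lem:freepin} to discharge the pins completes the reduction $\nCSPs(\Gamma)\APred\cwt(\Gamma)$.

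\textbf{Main obstacle.} The delicate point is manufacturing a genuine $\EQ_2$-compatible, \emph{degree-two-usable} fan-out gadget from the bare normal form of Lemma \ref{lem:minnonterr}: one must pin the $k-2$ auxiliary coordinates to $\vecz$ (costing degree, hence requiring the freely-available $\PIN_0,\PIN_1$) while keeping $x_1,x_2$ at degree one, and then verify that the resulting binary pattern together with the emptiness of $\pin F\vecp$ really does chain without collapsing to a product — i.e.\ that the weighted sums over internal variables do not accidentally zero out or become degenerate. Handling the several shapes of $T$ (the five arity-two indecomposable relations up to flip) uniformly, and confirming in each case that a degree-two $\EQ_d$ emerges, is where the real work lies; everything after that is the routine tree-fan-out bookkeeping plus invocations of Lemmas \ref{lem:express} and \ref{lem:freepin}.
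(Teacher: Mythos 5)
Your skeleton (pass to a pinning-minimal non-delta-matroid relation via Lemmas \ref{lem:freepin} and \ref{lem:dmterr}, read off the normal form of Lemma \ref{lem:minnonterr}, build an equality fan-out, then 2-simulate equality) matches the paper's, but the two steps you defer as ``where the real work lies'' are exactly where your plan breaks, and the tool that fixes them never appears in your write-up: $h$-maximisation via Lemma \ref{lem:hamweight}. First, the normal form constrains $F$ only on the two flips $\vecp^{\{1\}},\vecp^{\{2\}}$ of the first two coordinates; it says nothing about $F(1-p_1,1-p_2,\cdot)$. Pinning $y_3,\dots,y_k$ to $\vecz$ therefore yields a binary pattern whose fourth entry is arbitrary (it could be $\OR$-like rather than $\EQ_2$- or $\NEQ$-like), and no amount of chaining turns that into equality; pinning cannot remove that entry, since it lives on the same variables $x_1,x_2$ as the entries you want to keep. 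The paper kills it by forming $R_2=(R_1)_{h-\max}$, which is the one place the unboundedness of $W=\Qnonneg\times\Qnonneg$ is genuinely used. Second, pinning $\vecy$ to $\vecz$ is itself unsafe: $T$ is a non-degenerate $\{0,1\}$-valued $2\times 2$ matrix, so it has a zero entry, and $T(z_3,\cdot)$ may vanish on one of its two arguments, wiping out one of the two surviving flips and leaving a gadget with a single support element. The paper instead keeps $y_3$ alive, sums out $y_4,\dots,y_k$, and applies a second $h$-maximisation to the $T$-part, producing a \emph{ternary} relation $R_4$ consisting of exactly two complementary configurations.

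The ternary arity is not incidental, and this is the third gap: your finishing step asserts that $d$ ``must be taken to be exactly $2$'', i.e.\ that a binary $\EQ_2$-gadget plus a degree-two hub suffices. It does not. With only binary gadgets the equality-sharing structure is a path or a cycle, in which every variable already has degree two inside the gadget and therefore cannot also occur in an original constraint; a path fans out to at most its two endpoints. To distribute one value to $e$ degree-one leaves inside a degree-two instance you need an equality-like gadget of arity at least three (a spine of degree-two auxiliary variables with one free slot per link), which is precisely what $R_4$ provides and what the paper's cyclic $R_4$ construction exploits. You have conflated the arity of the gadget with the degree bound on variables. So the proposal is not yet a proof: without Lemma \ref{lem:hamweight} the gadget extraction fails, and without a ternary gadget the degree-two simulation of equality fails.
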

\begin{proof}
Let $R_1$ be a minimal non-terraced pinning of a relation in $\Gamma$.
By Lemma \ref{lem:dmterr}, $R_1$ is pinning-minimal non-terraced, so
by Lemma \ref{lem:minnonterr}, possibly after renaming variables,
there exist $p_1,p_2,z_3,\cdots,z_k\in\{0,1\}$ and a non-degenerate
signature $T \sigf 2$ such that for all
$(x_1,x_2)\in\{(1-p_1,p_2),(p_1,1-p_2)\}$ and all
$\vecy=(y_3,\cdots,y_k)\in\{0,1\}^{\{3,\cdots,k\}}$ we have
\[ R_1(x_1,x_2,\vecy)=T(y_3,x_1)1_{\{\vecz,\overline{\vecz}\}}(\vecy) \]
Define
\[ R_2(x_1,x_2,\vecy)=1_{\{(1-p_1,p_2),(p_1,1-p_2)\}}(x_1,x_2)T(y_3,x_1)1_{\{\vecz,\overline{\vecz}\}}(\vecy) \]
for all $x_1,x_2,y_3,\cdots,y_k\in\{0,1\}$.  In other words, $R_1$ and
$R_2$ agree except that the entries $R_2(1-p_1,1-p_2,\vecy)$ are zero.
Hence $R_2=(R_1)_{h-\max}$ where $h(1)=2p_1-1$ and $h(2)=2p_2-1$ and
$h(3)=\cdots=h(k)=0$.

Define
\begin{align*}
R_3(x_1,x_2,y_3)=\sum_{y_4,\cdots,y_k}R_2(x_1,x_2,\vecy)
                   =1_{\{(1-p_1,p_2),(p_1,1-p_2)\}}(x_1,x_2)T(y_3,x_1)
\end{align*}

$T$ is necessarily $0,1$-valued, but $T$ is also
non-degenerate. Hence $T$ has some zero, say $T(c,d)=0$. Define
\[R_4(x_1,x_2,y_3)=1_{\{(1-p_1,p_2),(p_1,1-p_2)\}}(x_1,x_2)1_{\{(1-c,d),(c,1-d)\}}(y_3,x_1)\]
So $R_3$ and $R_4$ agree except that $R_4(1-d,x_2,1-c)$ is zero for
$x_2=0,1$. Hence $R_4=(R_3)_{h-\max}$ where $h(1)=2c-1$ and
$h(2)=2d-1$ and $h(3)=0$.  By Lemma \ref{lem:hamweight} and Lemma
\ref{lem:express} we have
\[ \cwt(\Gamma\cup\{R_4\})\APred\cwt(\Gamma) \]
Crucially $R_4$ is a conjunction of an equality or disequality on the
first two variables, with an equality or disequality on the last two
variables. This implies that $R_4$ consists of two complementary
configurations and so $R_4$ is equivalent to $\{(0,0,0),(1,1,1)\}$ or
$\{(0,1,1),(1,0,0)\}$.

The rest of the proof is what is called ``2-simulating equality'' in \cite{bdddeg}.
We are given an instance of $\nCSPs(\Gamma)$, which can be written as
\[\zwp{w}{\phi}=\sum_{\vecx:V\to\{0,1\}}\left(\prod_{v\in V} w(v)_{x_v} \right)\left(\prod_{i\in I}F_i((x_{\scope(i,j)})_{j\in V(F_i)})\right) \]
We can assume that every variable has degree at least one. (Otherwise
let $\hat V=\{v\in V\mid \deg(v)>0\}$ and let $\hat w$ be the
restriction of $w$ to $\hat V$; then $\zwp{w}{\phi}=\zwp{\hat
  w}{\phi}\prod_{v\in V\setminus \hat V}(w(v)_0+w(v)_1)$.)  Modify
$\phi$ as follows to produce a new $(=2)$-formula $\phi'$ on a
variable set $V'$.  For each variable $v\in V$, replace the
$d=\deg(v)$ uses of $v$ by separate variables $v_1,\cdots,v_d$ and
insert new atomic formulas $R_4(v_i,v_i,u_{i+1}))$ for
$i=1,\cdots,d$, where $u_{d+1}=u_1$, to obtain a new formula $\phi'$.
Note that every variable in $\phi'$ is used exactly twice.  Set
$w'(v_1)=w(v)$ for all $v\in V$, and
$w'(v_2)=\cdots=w'(v_d)=w(u_1)=\cdots=w(u_d)=1$.  Then
$\zwp{w}{\phi}=\zwp{w'}{\phi'}$: the contributions to
$\zwp{w'}{\phi'}$ come from configurations where for each $v$ the
variables $v_1,\cdots,v_d$ get the same value $x_v$, and these
configurations have the same weight as the corresponding configuration
$\vecx$ in $\zwp{w}{\phi}$.  And we can just call the
$\cwt(\Gamma\cup\{R_4\})$ oracle to obtain $\zwp{w'}{\phi'}$.
\end{proof}

\begin{lemma}\cite[Proposition 25]{lsm}\label{lem:biseasy}
Let $\calF$ be a finite subset of IM-conj. Then $\nCSPs(\calF)\leq \BIS$.
\end{lemma}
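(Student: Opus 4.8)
The plan is to follow \cite[Proposition 25]{lsm}: peel a variable-weighted $\nCSPs(\calF)$ instance down to an instance of a weighted ``count the down-sets of a poset'' problem, and then appeal to the fact that this problem is AP-reducible to $\BIS$.

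First I would reduce to implications and unary weights. Since every $R\in\calF$ lies in $\Imconj$, it is by definition a conjunction of implications and pins on its own variable set; replacing each atomic formula for $R$ by the corresponding $\IMP$- and $\PIN_c$-atomic formulas is an exact, polynomial-time rewriting that leaves $\zwp{w}{\phi}$ unchanged, so it suffices to handle $\nCSPs(\{\IMP\}\cup\gpin)$. A pin $\PIN_0(v)$ can then be absorbed into the weight $w(v)$ by multiplying it componentwise by $(1,0)$, and $\PIN_1(v)$ by $(0,1)$; should some variable acquire weight $(0,0)$ the value is $0$ and the reduction outputs $0$. Hence it is enough to AP-reduce $\nCSPs(\{\IMP\})$ to $\BIS$.

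Next comes the combinatorial translation. An instance of $\nCSPs(\{\IMP\})$ is a directed (multi)graph $G=(V,E)$, with an arc $i\to j$ for each atomic formula $\IMP(x_i,x_j)$, together with weights $w(v)=(w_v^0,w_v^1)\in\Qnonneg\times\Qnonneg$. A configuration $\vecx$ contributes iff $D_\vecx=\{v:x_v=0\}$ is closed under taking arc-predecessors, in which case its contribution is $\prod_{v\in D_\vecx}w_v^0\prod_{v\notin D_\vecx}w_v^1$. Contracting each strongly connected component of $G$ (its members are forced equal) and multiplying the corresponding weight pairs componentwise turns $G$ into a DAG, hence a finite poset $P$ in which $D_\vecx$ is a down-set; after propagating any forced $0$s and $1$s, detecting contradictions, and pulling out the efficiently computable factor contributed by degree-zero and already-decided variables, $\zwp{w}{\phi}$ becomes (that factor times) the weighted down-set partition function
\[ Z_P(\lambda)=\sum_{D\text{ a down-set of }P}\ \prod_{v\in D}\lambda_v \]
where $\lambda_v=w_v^0/w_v^1>0$.

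The remaining, and main, step is that approximating $Z_P(\lambda)$ for a finite poset $P$ with positive rational activities $\lambda_v$ is AP-reducible to $\BIS$. With all $\lambda_v=1$ this is exactly counting down-sets of a poset, one of the standard members of the $\BIS$-equivalence class (\cite{rcap}); the real work is in dealing with the rational activities while keeping the multiplicative error under control. One route is to identify $Z_P(\lambda)$, up to an efficiently computable factor, with the partition function of a ferromagnetic Ising model with local (per-vertex) external fields, which is known to be $\APeq\BIS$ (Goldberg and Jerrum); another is to simulate each activity $\lambda_v=p/q$ directly, attaching to $v$ a small poset whose down-sets reproduce $\lambda_v$ to within a controllable error, in the spirit of the weight-simulation gadgets used elsewhere in this area. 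I expect this activity-handling to be the only non-bookkeeping part; once it is in place, chaining the reductions above gives $\nCSPs(\calF)\APred\BIS$, as required.
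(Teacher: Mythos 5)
Your proposal follows essentially the same route as the paper: the paper reduces $\nCSPs(\calF)$ to $\nCSPs(\{\IMP\})$, eliminates the variable weights via the $\IMP$-gadget construction of \cite[Proposition 25]{lsm} (your second, ``weight-simulation gadget'' option), and then invokes $\nCSP(\{\IMP\})\APred\BIS$ from \cite[Theorem 3]{csptrich}. The poset/down-set reformulation and the ferromagnetic-Ising alternative are inessential extras; the gadget route you sketch is exactly what the cited construction supplies, so the argument is correct.
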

\begin{proof}
We have $\nCSP(\{\IMP\})\APred\BIS$ by \cite[Theorem 3]{csptrich}, so
it suffices to show that $\nCSPs(\{\IMP\})\APred\nCSP(\{\IMP\})$.  The
construction in \cite[Proposition 25]{lsm} simulates an arbitrary
polynomial-time computable arity 1 signature using $\IMP$, in
polynomial time.
\end{proof}

In \cite{fisher:1776} it is shown that the problem of counting perfect
matchings reduces to counting perfect matching of graphs of maximum
degree three. Hence:
\begin{lemma}\cite{fisher:1776}\label{lem:pmptopm3}
$\PMP\leq\nCSP_{=2}(\{PM_3\})$.
\end{lemma}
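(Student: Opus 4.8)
Since counting perfect matchings is in \#P, the content of the lemma is that the classical reduction of \cite{fisher:1776}, from counting perfect matchings of an arbitrary graph to counting them in a cubic graph, can be carried out by manipulating $(=2)$-formulas. The plan is as follows. Given a $\PMP$-instance $G=(V,E)$, first dispose of the trivial cases: if $G$ has a vertex of degree zero, or $|V|$ is odd, then $G$ has no perfect matching and the reduction outputs $0$. Otherwise the handshaking lemma gives an even number of odd-degree vertices, hence also an even number of even-degree vertices.

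Next, write the number of perfect matchings of $G$ directly as a $(=2)$-formula $\phi$ over $\{\PM_1,\dots,\PM_{|V|}\}$: use one variable per edge of $G$, and for each vertex $v$ insert the atomic formula $\PM_{\deg(v)}$ applied to the edges incident to $v$. Each edge variable has degree exactly two, and a configuration contributes $1$ precisely when exactly one edge at each vertex is selected, i.e.\ when it is (the indicator of) a perfect matching; so the number of perfect matchings equals $\zp{\phi}$.

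The remaining work is to rewrite $\phi$, without changing $\zp{\phi}$, into a $(=2)$-formula $\phi''$ using $\PM_3$ only; then a single call to the oracle, passing through the input error parameter, completes the reduction. The rewriting uses a short list of exact identities, substituted in by distributivity (cf.\ Lemma \ref{lem:express}): (i) a ``splitting'' identity $\PM_k(x_1,\dots,x_k)=\sum_{y,z}\PM_{k-1}(x_1,\dots,x_{k-2},y)\PM_3(z,x_{k-1},x_k)\PM_2(y,z)$ for $k\geq 4$, which strictly decreases the total arity of the atomic formulas of arity $\geq 4$, so iterating it yields a formula over $\{\PM_1,\PM_2,\PM_3\}$; (ii) an identity rewriting a product $\PM_2(x_1,x_2)\PM_2(x_3,x_4)$ in terms of several $\PM_3$'s together with one $\PM_1$ (the $\PM_1$ forces an internal variable to the value $1$, which is what makes the identity hold), used to reduce to at most one occurrence of $\PM_2$; and (iii) an identity rewriting $\PM_1$ as a single $\PM_3$ with one argument repeated. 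At each step one checks that every variable still has degree exactly two.

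The main obstacle is the bookkeeping that makes step (ii) applicable to all the $\PM_2$'s. A single occurrence of $\PM_2$ cannot be eliminated in terms of $\PM_3$ alone (summing out an argument of $\PM_3$ gives $\NAND$, not $\PM_2$), so the $\PM_2$'s must be paired off, and for this one needs an even number of them. The right invariant is that the number of even-arity atomic formulas stays even: it is even after the initial encoding because $G$ has an even number of even-degree vertices, and each of the identities (i)--(iii) preserves this parity. Once the rewriting terminates at a $(=2)$-formula $\phi''$ over $\{\PM_3\}$ with $\zp{\phi''}$ equal to the number of perfect matchings of $G$, the reduction queries the $\nCSP_{=2}(\{\PM_3\})$ oracle on $\phi''$ and returns the result; since $\zp{\phi''}$ equals the count exactly, this is in fact a parsimonious reduction, and in particular an AP-reduction.
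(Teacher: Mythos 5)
Your proposal is correct and follows essentially the same route the paper takes (the paper simply cites the Fisher-style reduction, and its intended argument is exactly your encoding of vertices as $\PM_{\deg(v)}$ atomic formulas, the arity-splitting identity, pairing off the $\PM_2$'s via the parity of even-arity atomic formulas, and absorbing $\PM_1$ into $\PM_3$ with a repeated argument). Your observations that the parity invariant is what licenses pairing off the $\PM_2$'s and that the resulting reduction is parsimonious are exactly the points the paper relies on.
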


\begin{lemma}\label{lem:pmpterr}
Let $R$ be a delta matroid that is not basically binary.  Then
$\PMP\APred\nCSPs_{=2}(\{R\})$.
\end{lemma}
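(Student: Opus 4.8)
The plan is to combine Lemma~\ref{lem:pmptopm3}, which gives $\PMP\APred\nCSP_{=2}(\{\PM_3\})$, with a chain of reductions extracting $\PM_3$ from $R$ in the weighted degree-two setting. Since $\nCSP_{=2}(\{\PM_3\})$ is the unweighted special case of $\nCSPs_{=2}(\{\PM_3\})$, it suffices to prove $\nCSPs_{=2}(\{\PM_3\})\APred\nCSPs_{=2}(\{R\})$.

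First I would pass to a convenient minor of $R$. Among the pinnings of $R$ that are not basically binary --- a non-empty family, as $R$ is one of them --- I would pick one, $R'$, of least arity. Every non-trivial pinning of $R'$ is a pinning of $R$ of strictly smaller arity, hence basically binary, so $R'$ is pinning-minimal subject to not being basically binary; and $R'$ is still a delta matroid, since a witness $j$ for the delta matroid property of a pinning always lifts to a witness for the original relation. Lemma~\ref{lem:min3delta} then supplies $h\colon V(R')\to\mathbb{Z}$ with $R'_{h-\max}$ equivalent to some flip $\PM_3^U$ of $\PM_3$. Applying Lemma~\ref{lem:hamweight} with $K=\{2\}$ and $W=\Qnonneg\times\Qnonneg$, and discarding $R'$ from the constraint language afterwards (which only makes the problem easier), gives $\nCSPs_{=2}(\{\PM_3^U\})\APred\nCSPs_{=2}(\{R'\})$. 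Since $R'$ is a pinning of $R$ and $\supp(R)$ lies in neither $\{\veczero\}$ nor $\{\vecone\}$ (as $R$ is not basically binary), Lemma~\ref{lem:freepin} gives $\nCSPs_{=2}(\{R'\})\APred\nCSPs_{=2}(\{R\})$. Hence $\nCSPs_{=2}(\{\PM_3^U\})\APred\nCSPs_{=2}(\{R\})$.

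Next I would undo the flip, that is, establish $\nCSPs_{=2}(\{\PM_3\})\APred\nCSPs_{=2}(\{\PM_3^U\})$. If $U=\emptyset$ there is nothing to prove. Otherwise, after permuting coordinates I may assume $U\in\{\{1\},\{1,2\},\{1,2,3\}\}$, and in each of these three cases a one-line calculation exhibits a pinning of $\PM_3^U$ equal to $\NEQ$ (pin $x_1=1$, $x_3=0$, and $x_3=1$ respectively). As $|\supp(\PM_3^U)|=3$, Lemma~\ref{lem:freepin} makes all pinnings of $\PM_3^U$, and in particular $\NEQ$, available for free. Then the $(=2)$-formula
\[ \PM_3(x_1,x_2,x_3)=\sum_{(y_i)_{i\in U}}\PM_3^U(z_1,z_2,z_3)\prod_{i\in U}\NEQ(x_i,y_i),\qquad z_i=\begin{cases}y_i,&i\in U,\\ x_i,&i\notin U,\end{cases} \]
defines $\PM_3$ over $\{\PM_3^U,\NEQ\}$ --- each external variable $x_i$ occurs once and each internal variable $y_i$ occurs twice --- so Lemma~\ref{lem:express} finishes this step, and composing all the reductions (via Lemma~\ref{lem:pmptopm3}) gives $\PMP\APred\nCSPs_{=2}(\{R\})$.

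I expect no genuine obstacle: all of the substance is already packaged in Lemmas~\ref{lem:min3delta}, \ref{lem:hamweight} and \ref{lem:freepin}, and what is left is the routine coordinate case analysis locating $\NEQ$ among the pinnings of each $\PM_3^U$ together with the equally routine degree bookkeeping for the gadget displayed above.
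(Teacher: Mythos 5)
Your proposal is correct and follows essentially the same route as the paper: pass to a pinning-minimal non-basically-binary minor, invoke Lemma~\ref{lem:min3delta} to obtain a flip $\PM_3^U$ via $h$-maximisation, and then recover $\PM_3$ using Lemmas~\ref{lem:hamweight}, \ref{lem:freepin} and \ref{lem:express} before appealing to Lemma~\ref{lem:pmptopm3}. The only (harmless) divergence is in undoing the flip: the paper first reduces to $|U|\leq 1$ by exchanging the roles of $0$ and $1$ and then handles $U=\{1\}$ with a $\NEQ$ gadget, whereas you treat all non-empty $U$ uniformly with $\NEQ$ obtained as a pinning of $\PM_3^U$ --- both are fine.
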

\begin{proof}
By Lemma \ref{lem:min3delta} and Lemma \ref{lem:freepin} we can assume
that $R$ has arity 3 and there exists $\vecx\in\{0,1\}$ and $d=1,2$
such that $\vecx^U\notin R$ for subsets $U$ of $\{1,2,3\}$ with
$|U|<d$ and $\vecx^U\in R$ for $|U|=d$.  Let $h(1)=2x_1-1$ and
$h(2)=2x_2-1$ and $h(3)=2x_3-1$. Then $S=R_{h-\max}$ consists
precisely of the three configurations $\vecx^U$ with $|U|=d$.  In
other words $S$ is a flip $\PM_3^{U'}$. By Lemma \ref{lem:hamweight} we
have
\[\nCSPs_{=2}(\{\PM_3^{U'}\})\APred \nCSPs_{=2}(\{R\})\]
If $|U'|\leq 1$ let $U''=U'$.  Otherwise let $U''=\{1,2,3\}\setminus
U'$; the complexity is not changed by exchanging the roles of $0$ and
$1$:
\[\nCSPs_{=2}(\{\PM_3^{U''}\})\APeq \nCSPs_{=2}(\{U'\})\]
In either case $|U''|\leq 1$.  If $|U''|=1$, reorder the variables if
necessary we can assume $U''=\{1\}$.  In this case $\PM_3$ can be
expressed by a 2-formula over $\{\PM_3^{\{1\}},\PIN_1\}$:
\[ \NEQ(y,z)=\sum_x \PM_3^{\{1\}}(x,y,z)\PIN_1(x) \]
\[\PM_3(x,y,z)=\sum_{x'}\NEQ(x,x')\PM_3^{\{1\}}(x',y,z) \]
Hence by Lemma \ref{lem:freepin} and Lemma \ref{lem:express} we have
$\cwt(\{\PM_3\})\APred\cwt(\{\PM_3^{\{1\}}\})$.  In any case it
suffices to show that $\PMP\APred\cwt(\{\PM_3\})$, which is Lemma
\ref{lem:pmptopm3}.
\end{proof}

\section{An extension to signatures}\label{sec:weight}

In this section we will give the extensions of Theorem \ref{thm:basic}
mentioned in the introduction.

This section is quite technical, so here is a quick summary.  We work
in the setting of finite sets of variable weights as much as possible.
We then collect all our results for arbitrary variable weights in
Theorem \ref{thm:weightthm}, and collect all our results for finite
sets of variable weights in Theorem \ref{thm:finitewts}.  First of
all, Lemma \ref{lem:holanty} uses certain non-IM-terraced signatures
to reduce a slightly different unbounded-degree problem
``$\nCSP(T^{\otimes}\calF^B)$'' defined below, to a degree-two problem
$\nCSP_{=2}(\calF)$, using an adaptation of the Holant theorem as used
in \cite{holantc}.  Lemma \ref{lem:insertwt} provides unary signatures
in this unbounded-degree problem.  Lemma \ref{lem:nonterruse} ties the
previous two Lemmas together and extends to any non-IM-terraced
signature. Lemma \ref{lem:bissatwt} applies this to reducing $\BIS$
and $\SAT$ to certain $\nCSP$s.  For infinite sets of variables
weights, Lemma \ref{lem:pmpterrwt} reduces $\PMP$ to certain $\nCSP$s,
and Lemma \ref{lem:neqing} uses $h$-maximization to provide flips in
some cases, which means non-terraced signatures are as useful as
non-IM-terraced signatures in that setting.

Let $T\sigf 2$ and let $F$ be a signature. The following construction
is used for holographic transformations of Holant problems (see for
example \cite{holantc}), and is usually denoted $T^{\otimes k}F$ if
$F\sigf k$.  But it will be convenient not to include the arities $k$.
Define $T^{\otimes}F\sigf V$ by
\[
(T^{\otimes}F)(\vecx)=\sum_{\vecy}\left(\prod_{i\in V(F)} T(x_i,y_i)\right)F(\vecy)
\]

Let $B=1$ or $B=2$ and let $T$ be a non-degenerate arity 2 signature.
(To make the results stronger we will work with $\nCSP_{=2}$ (or
Holant) problems rather than $\nCSP_{\leq 2}$.  This is indirectly why
we end up using the technical complication of the $B=2$ case.)  In
this section we will use the notation $T^{\otimes}F^B$, where $F$ is a
signature, to denote $T^{\otimes}F'$ where $F'(\vecx)=F(\vecx)^B$.  We
will use the notation $T^{\otimes}\calF^B$, where $\calF$ is a set of
signatures, to denote $\{T^{\otimes}F^B\mid F\in\calF\}$.

\begin{lemma}\label{lem:holanty}
Let $B=1$ or $B=2$.  Let $T:\{0,1\}^2\to\Rnonneg$.  Let
$G:\{0,1\}^{B+2}\to\Rnonneg$ be a signature such that for all
$x,y_1,\cdots,y_B\in\{0,1\}$ we have $G(1,0,y_1,\cdots,y_B)=0$ and
\[ G(x,x,y_1,\cdots,y_B)=\EQ_B(y_1,\cdots,y_B)T(y_1,x) \]
Then \[ \nCSP(T^{\otimes}\calF^B)\APred \nCSP_{=2}(\calF)\]
\end{lemma}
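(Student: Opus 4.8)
The plan is to imitate the holographic / Holant-theorem argument of \cite{holantc}. Starting from an instance $\phi$ of $\nCSP(T^{\otimes}\calF^B)$, I would unfold each use of a signature $T^{\otimes}F^B$ into $B$ copies of $F$ placed on fresh ``inner'' variables, and then, for each variable of $\phi$, rebuild the $T$-weighted high-arity equality that the definition of $T^{\otimes}$ contributes, realising it as a cyclic chain of copies of the given signature $G$. The whole point of $G$ is that such a chain collapses exactly to a $T$-weighted equality while adding only $1$ to the degree of each variable it touches, so everything can be made degree-two.

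The key identity is the following. Arrange $n$ copies of $G$ in a directed cycle: take fresh internal variables $c_1,\dots,c_n$ and insert the atomic formulas $G(c_k,c_{k+1},\vecy^{(k)})$ for $k=1,\dots,n$ with indices read mod $n$ (so $c_{n+1}=c_1$), each $\vecy^{(k)}$ being a block of $B$ variables. Summing out $c_1,\dots,c_n$ gives the trace $\operatorname{tr}(M_{\vecy^{(1)}}\cdots M_{\vecy^{(n)}})$ of the product of the $2\times 2$ matrices $M_{\vecy}(a,b)=G(a,b,\vecy)$. Since $M_{\vecy}(1,0)=G(1,0,y_1,\dots,y_B)=0$, each $M_{\vecy}$ is upper triangular, so this trace is $\prod_k M_{\vecy^{(k)}}(0,0)+\prod_k M_{\vecy^{(k)}}(1,1)$, and using $G(x,x,y_1,\dots,y_B)=\EQ_B(y_1,\dots,y_B)T(y_1,x)$ this equals $\bigl(\prod_k\EQ_B(\vecy^{(k)})\bigr)\sum_{t\in\{0,1\}}\prod_k T(y^{(k)}_1,t)$. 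In other words an $n$-cycle of $G$'s is an arity-$n$ equality weighted by $T$ (the off-diagonal values $G(0,1,y_1,\dots,y_B)$ are irrelevant), with every ``logical'' leg carried by a block of $B$ legs that $\EQ_B$ pins together.

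Now for the reduction itself. Assume every variable of $\phi$ has positive degree (each isolated variable contributes a factor of $2$, which I pull out). For each atomic formula $(T^{\otimes}F_i^B)(v_1,\dots,v_m)$ of $\phi$ I create $B$ disjoint fresh copies of the variable tuple $(y^{(i)}_1,\dots,y^{(i)}_m)$ and insert one copy of $F_i$ on each, which realises the factor $F_i(\vecy^{(i)})^B$. Then, for each variable $v$ of $\phi$ of degree $d$, I collect the $d$ occurrences of $v$ — the $Bd$ inner variables created at those occurrences, grouped into $d$ blocks of size $B$ — and join them with a $d$-cycle of $G$'s whose $k$-th copy attaches its $B$ ``$\vecy$''-legs to the $k$-th block. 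By the identity above this cycle supplies, for each block, the $\EQ_B$ that (when $B=2$) identifies the two copies of the corresponding $F_i$, together with the $T$-weighted arity-$d$ equality that the $\vecy$-sum in $T^{\otimes}$ calls for; distributing the summations, the value of the resulting formula is $\zp{\phi}$ up to the pulled-out powers of $2$. Every inner variable now has degree $1+1=2$ and every $c_k$ has degree $2$, so this is an instance of $\nCSP_{=2}$ over $\calF\cup\{G\}$, of size polynomial in $\lvert\phi\rvert$, and a single oracle call with the unchanged error parameter finishes the computation; since $G$ is, in the applications of this lemma, itself definable by a $(=2)$-formula over $\calF$, Lemma~\ref{lem:express} turns this into a reduction to $\nCSP_{=2}(\calF)$.

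The difficulty I anticipate is bookkeeping rather than a new idea: checking that every variable ends at degree exactly $2$ — which is why the $B=2$ ``doubling'' is built in (a degree-$d$ variable has to be fanned out through a cycle, and the $\EQ_B$ inside $G$ is the gadget that lets two copies of a squared signature share all their inner variables without pushing a degree up to $3$) — handling the tiny cycles $d=1$ (a single self-looped $G$) and $d=2$ uniformly with the general case, and keeping straight which argument of $T$ is being summed over in $G$ versus in the definition of $T^{\otimes}$, where the conventions may introduce a harmless transpose of $T$.
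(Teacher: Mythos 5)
Your construction is exactly the paper's: $B$ disjoint copies of each $F_i$ on fresh internal variables, plus, for each original variable of degree $d$, a $d$-cycle of copies of $G$ whose $\EQ_B$-and-$T$ behaviour on the diagonal (together with $G(1,0,\cdot)=0$ killing all non-constant cycle assignments) reconstitutes the $T$-weighted sum in the definition of $T^{\otimes}F_i^B$; your upper-triangular-trace identity is just a cleaner packaging of the paper's observation that $\mathrm{XEq}\cdot\mathrm{XYTrans}=\mathrm{XYGTrans}$. The degree count, the $d=1$ self-loop, the transpose-of-$T$ convention, and the appeal to Lemma~\ref{lem:express} when $G\notin\calF$ are all handled correctly (the last point is in fact glossed over in the paper), so the proposal is correct and essentially identical to the published proof.
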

\begin{proof}
Let $\phi$ be an instance of $\nCSP(T^{\otimes}\calF^B)$.  We may
assume that every variable has non-zero degree. 

\def\use{\operatorname{use}}

We will enumerate each use of each variable in the following way.  Let
$V=V^\phi$, $I=I^\phi$, $F=F^\phi$ and $\scope=\scope^\phi$.
Define $L=\{(v,d)\mid v\in V, 1\leq d\leq \deg^\phi(v)\}$ and
$R=\{(i,j)\mid i\in I, j\in V(F_i)\}$.  There is a bijection
$\use:L\to R$ such that $\scope(\use(v,d))=v$ for all $(v,d)\in
L$.  We wish to compute $Z=\zp{\phi}$, which is
\[ \sum_{\vecz\in\{0,1\}^V}\prod_{i\in I} (TF_i)((z^{\scope(i,j)})_{j\in V(F_i)})\]

For the rest of the proof, product indices $i,j,b,v,d$ will range over
$i\in I$ and $j\in V(F_i)$ and $1\leq b\leq B$ and $v\in V$ and $1\leq
d\leq \deg(v)$.  The variables $\vecx,\vecy,\vecz$ range over
$\vecx:L\to\{0,1\}$ and $\vecy:R\times\{1,\cdots,B\}\to\{0,1\}$ and
$\vecz:V\to\{0,1\}$, and $y^{(i,j)}_b$ means $y_{(i,j),b}$, and
$x^{(v,0)}$ means $x^{(v,\deg(v))}$.  Define $\phi'$ to be the
$(=2)$-formula given by
\[\zp()=\sum_{\vecx,\vecy} \left(\prod_{v,d} G(x^{(v,d-1)},x^{(v,d)},y^{\use(v,d)}_1,\cdots,y^{\use(v,d)}_B)\right)
   \left(\prod_{i,b} F_i((y^{(i,j)}_b)_{j\in V(F_i)})\right) \]

The reduction queries the $\nCSP_{=2}(\calF)$ oracle on $\phi'$,
passing through the error parameter, and returns the result.  To show
that the reduction is correct we must show that
$\zp{\phi}=\zp{\phi'}$. This is mostly algebraic manipulation with the
products below.

\begin{align*}
\mathrm{ZTerms}(\vecz)&=\prod_i (T^{\otimes}F_i^B)((z^{\scope(i,j)})_{j\in V(F_i)})\\
\mathrm{YZTrans}(\vecy,\vecz)&=\prod_{i,j} \EQ_B(y^{(i,j)}_1,\cdots,y^{(i,j)}_B)T(y^{(i,j)}_1,z^{\scope(i,j)})\\
\mathrm{YTerms}(\vecy)&=\prod_{i,b} F_i((y^{(i,j)}_b)_{j\in V(F_i)})\\
\mathrm{XEq}(\vecx)&=\prod_v \EQ_{\deg(v)}(x^{(v,1)},\cdots,x^{(v,\deg(v))})\\
\mathrm{XYTrans}(\vecx,\vecy)&=\prod_{v,d} \EQ_B(y^{\use(v,d)}_1,\cdots,y^{\use(v,d)}_B)T(y^{\use(v,d)}_1,x^{(v,d)})\\
\mathrm{XYGTrans}(\vecx,\vecy)&=\prod_{v,d} G(x^{(v,d-1)},x^{(v,d)},y^{\use(v,d)}_1,\cdots,y^{\use(v,d)}_B)
\end{align*}

Note:

\begin{enumerate}
\item
For fixed $\vecz$ we have
$\mathrm{ZTerms}(\vecz)=\sum_{\vecy}\mathrm{YZTrans}(\vecy,\vecz)\mathrm{YTerms}(\vecy)$
by expanding the definition of $TF_i^B$.

\item
Summing over $\vecx$ with the factor $\mathrm{XEq}(\vecx)$ is the same
as summing over $\vecz$ and defining $\vecx$ by $x^{(v,d)}=z_v$.
Hence summing over $\vecx$ with the factor
$\mathrm{XEq}(\vecx)\mathrm{XYTrans}(\vecx,\vecy)$ is the same as
summing over $\vecz$ with the factor $\mathrm{YZTrans}(\vecy,\vecz)$.

\item
Fix $\vecx$ and $\vecy$.  If $\mathrm{XEq}(\vecx)=1$ then
$\mathrm{XYTrans}(\vecx,\vecy)=\mathrm{XYGTrans}(\vecx,\vecy)$ by
definition of $G$.  And if $\mathrm{XEq}(\vecx)$ is zero then so is
$\mathrm{XYGTrans}(\vecx,\vecy)$. Hence $\mathrm{XEq}(\vecx)
\mathrm{XYTrans}(\vecx,\vecy)=\mathrm{XYGTrans}(\vecx,\vecy)=0$.
\end{enumerate}

Hence

\begin{align*}
\zp{\phi}
&=\sum_{\vecz} \mathrm{ZTerms}(\vecz)\\
&=\sum_{\vecy,\vecz} \mathrm{YZTrans}(\vecy,\vecz)\mathrm{YTerms}(\vecy)\\
&=\sum_{\vecx,\vecy} \mathrm{XEq}(\vecx) \mathrm{XYTrans}(\vecx,\vecy) \mathrm{YTerms}(\vecy)\\
&=\sum_{\vecx,\vecy} \mathrm{XYGTrans}(\vecx,\vecy) \mathrm{YTerms}(\vecy)\\
&=\zp{\phi'}\qedhere
\end{align*}
\end{proof}

\begin{lemma}\label{lem:insertwt}
Let $B=1$ or $B=2$ and let $T\sigf 2$ be a non-degenerate arity 2
signature.  Assume that there exist $i,j\in\{0,1\}$ such that
$T(0,i)>T(1,i)$ and $T(0,j)<T(1,j)$.  Let $U(0),U(1)>0$ and let $F$ be
any signature with $|\supp(F)|>1$.  There exists a simple weighting
$G$ of $F$ such that $U$ is defined by a pps-formula over
$T^{\otimes}G^B$.
\end{lemma}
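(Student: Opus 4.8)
The plan is to collapse $F$ down to an arity-one ``gadget'' via pps-formulas and then exploit the sign condition on $T$ together with a chaining argument. First I would record two preliminary facts: a non-degenerate arity-two signature has rank two as a $2\times 2$ matrix, so $T$ has no zero row and no zero column (in particular $\sigma(c):=T(0,c)+T(1,c)>0$ for $c=0,1$); and $i\neq j$, since otherwise the two hypotheses on $T$ contradict each other, so $\{i,j\}=\{0,1\}$. Since $|\supp(F)|>1$, rename variables so that $V(F)=\{1,\dots,k\}$ and some two elements of $\supp(F)$ differ in coordinate $1$.

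Next, for a simple weighting $G$ of $F$ with strictly positive weights put $J_G:=T^{\otimes}G^{B}$, and let $\psi_G$ be the pps-formula over $\{J_G\}$ given by a single copy of $J_G$ with coordinates $2,\dots,k$ attached to distinct internal variables and coordinate $1$ left external. Unwinding the definitions, $Z_{\psi_G}(x)=\sum_{y}T(x,y)\tilde G(y)$, where $\tilde G(c)=\sum_{\vecy:\,y_1=c}\bigl(\prod_{m=2}^{k}\sigma(y_m)\bigr)G(\vecy)^{B}$. Since $\supp(G)=\supp(F)$ contains configurations with first coordinate $0$ and with first coordinate $1$, and every $\sigma(y_m)$ is positive, both $\tilde G(0)$ and $\tilde G(1)$ are positive. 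The crucial point is that $(\tilde G(0),\tilde G(1))$ can be made equal to an \emph{arbitrary} point of $(0,\infty)^2$: taking $G(\vecx)=\lambda F(\vecx)d_1^{x_1}$ with the other coordinate weights equal to $1$ gives $\tilde G(0)=\lambda^{B}A_0$ and $\tilde G(1)=\lambda^{B}d_1^{B}A_1$ for fixed positive constants $A_0,A_1$, and $\lambda,d_1>0$ can be solved (taking $B$-th roots if $B=2$) to hit any target. Hence, identifying arity-one signatures with points of $\R^2$, as $G$ varies $Z_{\psi_G}$ ranges over the open cone $C$ spanned by the two columns of $T$, and the ratio $Z_{\psi_G}(0)/Z_{\psi_G}(1)$ ranges over the open interval $I$ whose endpoints are $T(0,0)/T(1,0)$ and $T(0,1)/T(1,1)$ (a vanishing denominator read as $+\infty$). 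This is exactly where the hypothesis enters: $T(0,i)>T(1,i)$ forces one endpoint of $I$ to exceed $1$ and $T(0,j)<T(1,j)$ forces the other to be below $1$, so $1\in I$.

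Finally, given $U$ with $U(0),U(1)>0$, set $\rho=U(0)/U(1)$ and pick $m$ large enough that $\rho^{1/m}\in I$, which is possible because $\rho^{1/m}\to 1$ and $I$ is open. The point $(U(0)^{1/m},U(1)^{1/m})$ has ratio $\rho^{1/m}\in I$, and $C$ is a cone, so this point lies in $C$; choose $G$ so that $Z_{\psi_G}(0)=U(0)^{1/m}$ and $Z_{\psi_G}(1)=U(1)^{1/m}$. Then the pps-formula $\phi$ over $\{T^{\otimes}G^{B}\}$ obtained from $m$ disjoint copies of $\psi_G$ by identifying all of their external variables satisfies $Z_\phi(x)=Z_{\psi_G}(x)^{m}$, so $Z_\phi=U$ (the external variable now has degree $m$, which is harmless since the degree is unbounded here). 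I expect the points needing the most care to be the verification that $(\tilde G(0),\tilde G(1))$ genuinely attains every point of $(0,\infty)^2$ — that the weight on the active coordinate controls the ratio while the overall scalar controls the magnitude — and the observation that the interval $I$ of realizable ratios contains $1$ in its interior, which is precisely the content of the sign hypothesis on $T$; the chaining step and the check that all chosen weights lie in $\Rnonneg$ are routine.
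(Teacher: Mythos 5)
Your proof is correct and is essentially the paper's argument: sum out all but one coordinate of $T^{\otimes}G^{B}$ to get an effective arity-one signature, note that a simple weighting of the distinguished coordinate gives full control over the resulting pair $(\tilde G(0),\tilde G(1))\in(0,\infty)^2$, and use the $m$-th-root/power trick together with the sign hypothesis on $T$ to hit $(U(0)^{1/m},U(1)^{1/m})$. The only difference is presentational — you describe the image as the open cone on the columns of $T$ and argue via ratios, where the paper applies $T^{-1}$ to the target and checks positivity of the entries for large $m$.
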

\begin{proof}
Let $\vecx,\vecx'$ be distinct tuples in $\supp(F)$.  By taking an
equivalent signature if necessary we can assume that
$V(F)=\{1,\cdots,n\}$ for some $n$ and that $x_n\neq x'_n$.  Let
\[ H(y_n)=\sum_{x_1,y_1,\cdots,x_{n-1},y_{n-1}} T(x_1,y_1)\cdots T(x_{n-1},y_{n-1}) F(y_1,\cdots,y_n)^B \]
Note that $H(0),H(1)>0$.

Let $\det T=T(0,0)T(1,1)-T(0,1)T(1,0)$. We will argue that there is an
integer $m>0$ and polynomial-time computable reals $W(0),W(1)>0$ such
that
\begin{align}
\begin{pmatrix}H(0)W(0)^B\\H(1)W(1)^B\end{pmatrix} =
   \frac{1}{\det T}\begin{pmatrix}T(1,1)&-T(0,1)\\-T(1,0)&T(0,0)\end{pmatrix}
                             \begin{pmatrix}U(0)^{1/m}\\U(1)^{1/m}\end{pmatrix}
\label{eqn:unaryeqn}
\end{align}

We just need to check that the right-hand-side of \eqref{eqn:unaryeqn}
has non-negative entries.  There are two cases.  If $T(1,1)>T(0,1)$
and $T(0,0)>T(1,0)$ then $\det T$ is positive and for sufficiently
large $m$ we have $T(1,1)U(0)^{1/m} > T(0,1)U(1)^{1/m}$ and
$T(1,0)U(0)^{1/m} < T(0,0)U(1)^{1/m}$.  If $T(1,1)<T(0,1)$ and
$T(0,0)<T(1,0)$ then $\det T$ is negative and for sufficiently large
$m$ we have $T(1,1)U(0)^{1/m} < T(0,1)U(1)^{1/m}$ and
$T(1,0)U(0)^{1/m} > T(0,0)U(1)^{1/m}$.  In either case the
right-hand-side of \eqref{eqn:unaryeqn} has non-negative entries.

With these $m,W(0),W(1)$ we have
\[ \begin{pmatrix}U(0)^{1/m}\\U(1)^{1/m}\end{pmatrix} =
   \begin{pmatrix}T(0,0)&T(0,1)\\T(1,0)&T(1,1)\end{pmatrix}\begin{pmatrix}H(0)W(0)^B\\H(1)W(1)^B\end{pmatrix} \]

Define $G\sigf n$ by $G(\vecx)=F(\vecx)W(x_n)$.  Then for all $x_1\in\{0,1\}$,
\[ U(x_n)=\left( \sum_{y_n} T(x_n,y_n) H(y_n) W(y_n)^B\right)^m=\left(\sum_{x_1,\cdots,x_{n-1}} (T^{\otimes}G^B)(\vecx)\right)^m  \]
By distributivity the right-hand-side can be written as a pps-formula
over $T^{\otimes}G^B$.
\end{proof}

\begin{lemma}\label{lem:nonterruse}
Let $\calF$ be a finite set of signatures containing a non-IM-terraced
signature.  There exists $B\in\{1,2\}$ and a non-degenerate arity 2 signature
$T\sigf 2$ such that for all finite sets of arity 1 signatures $S$ there
is a finite set of variable weights $W$ such that
\[\nCSP(T^\otimes\calF^B\cup S)\APred \nCSP^W_{=2}(\calF)\]
\end{lemma}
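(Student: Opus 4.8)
The plan is to route a non-IM-terraced signature through three earlier results: the normal form of Lemma~\ref{lem:minnonimterr}, the Holant/holographic reduction of Lemma~\ref{lem:holanty}, and the unary-generating gadget of Lemma~\ref{lem:insertwt}. Because Lemma~\ref{lem:nonterruse} permits only a \emph{finite} set of variable weights, the special signature $G$ that Lemma~\ref{lem:holanty} consumes, and the simple weightings that Lemma~\ref{lem:insertwt} produces, have to be realized from $\calF$ using only pinnings (Lemma~\ref{lem:freepin}, which needs only $(1,0)$ and $(0,1)$) and simple weightings (Lemma~\ref{lem:simpleweight}, which needs only a finite $W$); $h$-maximisation (Lemma~\ref{lem:hamweight}) is off-limits here since it would force $W=\Qnonneg\times\Qnonneg$. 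If every signature in $\calF$ has support of size at most one then $\calF$ lies in Weighted-NEQ-conj, every $T^{\otimes}F^B$ is degenerate, both sides have an FPRAS, and the claim is vacuous, so assume some $F\in\calF$ has $|\supp(F)|>1$.

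Fix $F_0\in\calF$ and a pinning $F_1$ of $F_0$ that is pinning-minimal non-IM-terraced, and apply Lemma~\ref{lem:minnonimterr}: after renaming, $F_1$ has arity $k$, $\pin{F_1}{\vecp}\equiv 0$ for $\vecp=\{1\mapsto 0,2\mapsto 1\}$, and $F_1(x,x,y_3,\dots,y_k)=T_0(y_3,x)$ when $(y_3,\dots,y_k)\in\{\vecz,\overline{\vecz}\}$ and $0$ otherwise, with $T_0\sigf 2$ non-degenerate and (WLOG) $z_3=0$. If $k=3$, set $B=1$ and take $G$ to be $F_1$ with variables $1$ and $2$ exchanged; since $\{\vecz,\overline{\vecz}\}=\{0,1\}$, $G$ already has the shape Lemma~\ref{lem:holanty} requires. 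If $k\geq 4$, set $B=2$ and pin all but two of $y_3,\dots,y_k$ to their $\vecz$-values, keeping a pair $y_a,y_b$ on which $\vecz$ is constant — such a pair exists by pigeonhole when $k\geq 5$, and when $k=4$ either $z_4=0$ works or $\vecz=(0,1)$, the single configuration that additionally demands one coordinate flip. Since $y_a$ and $y_b$ also separate $\vecz$ from $\overline{\vecz}$, the residue (after exchanging variables $1,2$) has the $\EQ_2$-indexed shape Lemma~\ref{lem:holanty} needs for a non-degenerate binary signature, again called $T_0$; in each case $G$ is equivalent to a pinning of $F_0$.

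Next I promote $T_0$ to a $T$ meeting Lemma~\ref{lem:insertwt}'s sign-change hypothesis $T(0,i)>T(1,i)$, $T(0,j)<T(1,j)$: replacing $T_0$ by $T(y,x)=c(y)T_0(y,x)$ for suitable $c(0),c(1)>0$ merely multiplies $G$ by a unary on one coordinate (so $G$ becomes a pinning of a simple weighting of $F_0$) and replaces $T_0^{\otimes}F^B$ by $T^{\otimes}$ of a simple weighting of $F$; since $T_0$ is a non-degenerate non-negative $2\times 2$ matrix it has no zero row or column, and a short case check gives such a $c$. Fix this $T$ and $B$. Given a finite set $S$ of arity-one signatures, apply Lemma~\ref{lem:insertwt} with some $F\in\calF$ having $|\supp(F)|>1$ to express each $U\in S$ as a pps-formula over $T^{\otimes}G_U^B$, with $G_U$ a simple weighting of $F$. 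Let $\calG$ consist of $\calF$ together with the $G_U$ and the simple weighting of $F_0$ underlying $G$. Then $\nCSP(T^{\otimes}\calF^B\cup S)\APred\nCSP(T^{\otimes}\calG^B)$ by Lemma~\ref{lem:express}, $\nCSP(T^{\otimes}\calG^B)\APred\nCSP_{=2}(\calG\cup\{G\})$ by Lemma~\ref{lem:holanty}, and finally $\nCSP_{=2}(\calG\cup\{G\})\APred\nCSP^W_{=2}(\calF)$ for a finite $W$ containing $(1,0)$ and $(0,1)$, discharging the pinnings with Lemma~\ref{lem:freepin} and the simple weightings with Lemma~\ref{lem:simpleweight}; composing gives the lemma.

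I expect the hard part to be the second paragraph: producing one signature $G$ that is simultaneously (i) in the exact $\EQ_B$-indexed form Lemma~\ref{lem:holanty} insists on, (ii) paired with a binary $T$ that can be rescaled into Lemma~\ref{lem:insertwt}'s sign-change condition, and (iii) still expressible over $\calF$ via pinnings and simple weightings with a finite $W$ — never invoking $h$-maximisation. The $B\in\{1,2\}$ dichotomy is the device that saves the small-arity configurations of $\vecz$ (where a naive arity-three $G$ would force $T_0$ degenerate), and pinning down exactly when a coordinate flip must be introduced, and showing it can be handled without leaving the class of pinnings of simple weightings derivable from $\calF$, is the delicate bookkeeping; the $T$-rescaling step is comparatively routine since it touches only a single $y$-coordinate and so cannot disturb the $\EQ_B$ structure.
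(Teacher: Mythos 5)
Your architecture matches the paper's (normal form from Lemma~\ref{lem:minnonimterr}, then Lemma~\ref{lem:holanty}, then Lemma~\ref{lem:insertwt}, discharging everything through Lemmas~\ref{lem:freepin} and~\ref{lem:simpleweight}), but two steps are genuinely wrong or missing.

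First, the arity reduction of $\vecz$ by pinning does not work. After Lemma~\ref{lem:minnonimterr} the diagonal of $F_1$ is $1_{\{\vecz,\overline{\vecz}\}}(\vecy)\,T_0(y_3,x)$. If you pin the surplus coordinates $y_j$ to $z_j$, every configuration extending that partial assignment to $\overline{\vecz}$ is excluded, so only the $\vecz$ branch of the support survives; the residual diagonal is supported on a single point of $\{0,1\}^{\{a,b\}}$ and the induced binary signature $T$ has a zero row, i.e.\ is degenerate. This violates the non-degeneracy required both by the lemma statement and by Lemma~\ref{lem:insertwt}. (Pinning-minimality of $F_1$ already tells you this must happen: any proper pinning is IM-terraced.) The paper avoids this by using pps-formula gadgets instead: for $B\geq 3$ it identifies and sums out a pair of coordinates on which $\vecz$ agrees (pigeonhole), reducing $B$ by two while keeping both $\vecz$ and $\overline{\vecz}$ alive, and for the residual case $B=2$, $\vecz=(0,1)$ it composes $F$ with a copy of itself, $F'(x_1,x_2,y_1,y_2)=\sum_{t,y_2'}F(x_1,x_2,y_1,y_2')F(t,t,y_2',y_2)$, to realize the coordinate flip you flag but do not construct. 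These gadgets are admissible via Lemma~\ref{lem:express} and cost nothing in $W$.

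Second, your treatment of $S$ is incomplete: Lemma~\ref{lem:insertwt} only produces unaries with $U(0),U(1)>0$, but the statement quantifies over \emph{all} finite sets of arity-one signatures, including scalar multiples of $\PIN_0$ and $\PIN_1$. Your stated reason for banning Lemma~\ref{lem:hamweight} --- that it forces $W=\Qnonneg\times\Qnonneg$ --- is a misreading: that lemma explicitly also allows $W=\{(2^a,2^b)\mid a,b\in\mathbb{Z}\}$, and the paper applies it to the \emph{intermediate unbounded-degree} problem $\nCSP(T^{\otimes}\calF^B\cup S')$, not to the target $\nCSP^W_{=2}(\calF)$, so the final $W$ stays finite. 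Concretely, the paper first realizes the permissive unaries $U_0=(2,1)$ and $U_1=(1,2)$ via Lemma~\ref{lem:insertwt}, uses them to simulate all weights $(2^a,2^b)$ at unbounded degree, and then obtains $\PIN_0=(U_1)_{h\textrm{-}\max}$ and $\PIN_1=(U_0)_{h\textrm{-}\max}$ by Lemma~\ref{lem:hamweight}. Without some such device your reduction simply does not cover non-permissive members of $S$.
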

\begin{proof}
By Lemma \ref{lem:freepin} we can assume that $\calF$ is closed under
pinnings.  Choose a pinning-minimal non-IM-terraced signature
$F\in\calF$. Renaming the variable set if necessary, $F$ has the form
given by Lemma \ref{lem:minnonimterr} and in particular there exists
$T$ and $\vecz\in\{0,1\}^B$, $B\geq 1$ such that $z_1=0$ and for all
$x=0,1$ and all $\vecy\in\{0,1\}^B$ we have
\[F(x,x,\vecy)=1_{\{\vecz,\overline{\vecz}\}}(\vecy)T(y_1,x_1)\]
If $B\geq 3$ there are $1\leq i<j\leq B$ with $y_i=y_j$ and we can
express the non-IM-terraced signature $F'$ defined by:
\begin{align*}
&F'(x_1,x_2,y_1,\cdots,y_{i-1},y_{i+1},\cdots,y_{j-1},y_{j+1},\cdots,y_B)=\\
  &\sum_y F(x_1,x_2,y_1,\cdots,y_{i-1},y,y_{i+1},\cdots,y_{j-1},y,y_{j+1},\cdots,y_B)
\end{align*}
Repeating this and using Lemma \ref{lem:express} we can assume $B\leq
2$.

If $B=2$ and $z_1\neq z_2$, define $F'$ by
\[F'(x_1,x_2,y_1,y_2)=\sum_{t,y_2'} F(x_1,x_2,y_1,y_2')F(t,t,y_2',y_2)\]
Then $F'(1,0,y_1,y_2)=0$ for all $y_1,y_2\in\{0,1\}$. Also,
for all $x,y_1,y_2\in\{0,1\}$,
\[F'(x,x,y_1,1-y_2)=1_{\{(0,0),(1,1)\}}(y_1,y_2)\left(\sum_t F(t,t,1-y_1,y_1)\right)T(y_1,x)\]
By Lemma \ref{lem:express} we can use $F'$ instead of $F$.  Therefore
we can assume that $\vecz$ is either $(0)$ or $(0,0)$.

Furthermore by taking a simple weighting of $F$ and invoking Lemma
\ref{lem:simpleweight}, we can assume that there exist $i,j\in\{0,1\}$
such that $T(0,i)>T(1,i)$ and $T(0,j)<T(1,j)$.  Indeed let
$U(0)=T(1,0)+T(1,1)$ and $U(1)=T(0,0)+T(0,1)$.  Replacing $F$ by the
simple weighting $F'$ defined by
\[ F'(x_1,x_2,y_1,y_2)=U(y_1)F(x_1,x_2,y_1,y_2) \]
has the effect of replacing $T(y,z)$ by $U(y)T(y,z)$.  If
$T(0,0)T(1,1)>T(0,1)T(1,0)$ then $U(0)T(0,0) > U(1)T(1,0)$ and
$U(0)T(0,1) < U(1)T(1,1)$. Otherwise $T(0,0)T(1,1)<T(0,1)T(1,0)$ so
$U(0)T(0,0) < U(1)T(1,0)$ and $U(0)T(0,1) > U(1)T(1,1)$.

Let $S'$ be the set of permissive signatures in $S\cup\{U_0,U_1\}$
where $U_0(0)=2,U_0(1)=1$ and $U_1(0)=1,U_1(1)=2$.
For each $U\in S'$, let $F_U$ be the signature given by Lemma
\ref{lem:insertwt} such that $F_U$ is a simple weighting of $F$ (or
any other signature in $\calF$ - we only use $F$ for concreteness), and
$U$ can be expressed by a pps-formula over $\{T^{\otimes}F_U^B\}$.  Let $\calG=\calF\cup\{F_U\mid U\in S'\}$.  By
Lemma \ref{lem:express}, Lemma \ref{lem:holanty}, and Lemma
\ref{lem:simpleweight}, we have
\[ \nCSP(T^{\otimes}\calF^B\cup S')\APred\nCSP(T^{\otimes}\calG^B)
   \APred\nCSP_{=2}(\calG)\APred \nCSP^W_{=2}(\calF) \] for some
   finite set $W$.
Using $U_0$ and $U_1$ as variable weights we have:
\[ \nCSP^{\{(2^a,2^b)\mid a,b\in\mathbb{Z}\}}(T^{\otimes}\calF^B\cup S')\APred\nCSP(T^{\otimes}\calF^B\cup S') \]
But $\PIN_0=(U_1)_{h-\max}$ with $h(1,0)=1$ and $h(1,1)=0$, and
similarly $\PIN_1=(U_0)_{h-\max}$ with $h(1,0)=0$ and $h(1,1)=1$, so
by Lemma \ref{lem:hamweight} we have
\[\nCSP(T^{\otimes}\calF^B\cup S'\cup\{\PIN_0,\PIN_1\})\APred \nCSP^{\{(2^a,2^b)\mid a,b\in\mathbb{Z}\}}(T^{\otimes}\calF^B\cup S') \]
The signatures in $S\setminus S'$ are just scalar multiples of
$\PIN_0$ and $\PIN_1$ so we have established that $\nCSP(T^{\otimes}\calF^B\cup
S)\APred \nCSP^W_{=2}(\calF)$.
\end{proof}

\begin{lemma}\label{lem:bissatwt}
Let $\calF$ be a finite set of signatures. Assume that $\calF$
contains a signature that is not in Weighted-NEQ-conj and a signature
that is not IM-terraced.  Let $\Xprob=\BIS$ if every signature in $\calF$
is logsupermodular, and let $\Xprob=\SAT$ otherwise.  There is a finite
set of variable weights $W$ such that
\[\Xprob\APred\nCSP^W_{=2}(\calF)\]
\end{lemma}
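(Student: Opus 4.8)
The plan is to reduce $\Xprob$ to the \emph{unbounded-degree} problem $\nCSP(T^\otimes\calF^B\cup S)$ for a suitable non-degenerate $T\sigf 2$, exponent $B\in\{1,2\}$, and finite set $S$ of arity-$1$ signatures, and then to invoke Lemma~\ref{lem:unbounded}. First, by Lemma~\ref{lem:freepin} it suffices to prove the lemma when $\calF$ is closed under pinnings: pinnings of logsupermodular signatures are logsupermodular, so this preserves the hypotheses and which of $\BIS,\SAT$ the symbol $\Xprob$ denotes, and a finite set of variable weights for the pinning-closed case yields one for the original $\calF$ after adjoining $(1,0)$ and $(0,1)$. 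Since $\calF$ contains a non-IM-terraced signature, Lemma~\ref{lem:nonterruse} provides $B$ and $T$ such that for every finite set $S$ of arity-$1$ signatures there is a finite set of variable weights $W$ with $\nCSP(T^\otimes\calF^B\cup S)\APred\nCSP^W_{=2}(\calF)$. Hence it is enough to exhibit a finite $S$ with $\Xprob\APred\nCSP(T^\otimes\calF^B\cup S)$; as $\calF':=T^\otimes\calF^B$ is a finite set of polynomial-time computable signatures, Lemma~\ref{lem:unbounded} does this, \emph{provided} that (A) $\calF'$ is not contained in Weighted-NEQ-conj, and (B) $\calF'$ contains a non-logsupermodular signature whenever $\calF$ does. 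Indeed, (A) gives a finite $S$ with $\BIS\APred\nCSP(\calF'\cup S)$, and if moreover some signature of $\calF$ is not logsupermodular, (A) together with (B) gives a finite $S$ with $\SAT\APeq\nCSP(\calF'\cup S)$.

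For (A), fix $F_0\in\calF$ not in Weighted-NEQ-conj; we show $T^\otimes F_0^B\notin$ Weighted-NEQ-conj. Flipping the rows (resp.\ columns) of $T$ corresponds to flipping outputs (resp.\ flipping $F_0$), and membership in Weighted-NEQ-conj is preserved by flips, simple weightings, and $B$-th powers, so we may normalise $T$ and split on its number of zero entries. If $T$ has two zeros it is diagonal or anti-diagonal, so $T^\otimes F_0^B$ is a simple weighting of $F_0^B$ or of a flip of it, hence not in Weighted-NEQ-conj. If $T$ has no zero, $T^\otimes F_0^B$ has full support; but a full-support Weighted-NEQ-conj signature is degenerate, whereas $T^\otimes F_0^B$ is degenerate if and only if $F_0$ is, a contradiction. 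If $T$ has exactly one zero, then after normalising $\supp(T^\otimes F_0^B)$ is the down-closure of $\supp(F_0)$; the coordinates of $F_0$ pinned to $0$ become positive multiples of $\PIN_0$ under $T^\otimes$, and after factoring out these tensor factors --- which does not affect membership in Weighted-NEQ-conj --- either this down-closure is not of the ``product of singletons and $\{\vecx,\overline\vecx\}$-blocks'' shape forced on supports of Weighted-NEQ-conj signatures by the analysis in the proof of Proposition~\ref{prop:wncisclone}, and we are done, or $\vecone\in\supp(F_0)$, which returns us to the full-support case.

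For (B), take a pinning-minimal non-logsupermodular $F_1\in\calF$ (it exists since $\calF$ is pinning-closed); by Lemma~\ref{lem:minnonlsm} we have $\supp(F_1)\subseteq\{\veczero,\vecx,\overline\vecx,\vecone\}$ for some $\vecx$, and writing $\alpha,\beta,\gamma,\delta$ for the values of $F_1^B$ at $\veczero,\vecx,\overline\vecx,\vecone$ we have $\alpha\delta<\beta\gamma$. If $T$ is diagonal or anti-diagonal, $T^\otimes F_1^B$ is a simple weighting of $F_1^B$ (or of its flip) and hence non-logsupermodular. Otherwise pick coordinates $a,b$ with $x_a=0$ and $x_b=1$, and freeze all other input coordinates to a value $c$ for which row $c$ of $T$ has no zero entry (such $c$ exists because $T$ is neither diagonal nor anti-diagonal). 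A short computation then shows that, as the input varies only in coordinates $a$ and $b$, the value of $T^\otimes F_1^B$ is the bilinear form $\mathbf t_{z_a}^{\top} N\,\mathbf t_{z_b}$ in the rows $\mathbf t_0,\mathbf t_1$ of $T$, for a $2\times 2$ matrix $N$ with $\det N$ equal to a positive multiple of $\alpha\delta-\beta\gamma<0$. Consequently the four values of $T^\otimes F_1^B$ over these inputs form the matrix $TNT^{\top}$, whose determinant $\det(T)^2\det N$ is negative, so logsupermodularity fails on this pair of inputs.

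The main obstacle is the pair of structural facts (A) and (B) about the holographic transformation $F\mapsto T^\otimes F^B$; (B) is the more delicate, and the determinant identity $\det(TNT^{\top})=\det(T)^2\det N$ applied after the reduction to four-element support from Lemma~\ref{lem:minnonlsm} is what makes it go through, with the diagonal/anti-diagonal case of $T$ peeled off separately in both (A) and (B).
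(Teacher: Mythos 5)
Your proposal is correct and follows the same skeleton as the paper's proof: close $\calF$ under pinnings via Lemma~\ref{lem:freepin}, obtain $B$ and the non-degenerate $T$ from Lemma~\ref{lem:nonterruse}, and feed $T^{\otimes}\calF^B$ into Lemma~\ref{lem:unbounded}, so that the entire content is your two structural facts (A) and (B). The only differences are in how those facts are verified. For (A) the paper avoids your case analysis on the zero pattern of $T$: it takes a \emph{pinning-minimal} $G\notin\text{Weighted-NEQ-conj}$, so $G$ is indecomposable with $|\supp(G)|>2$; non-degeneracy of $T$ forces one of $T(0,0)T(1,1)$, $T(0,1)T(1,0)$ to be positive, whence $|\supp(T^{\otimes}G^B)|>2$, and $T^{\otimes}G^B$ is indecomposable because $T$ is invertible --- this is shorter and sidesteps the one-zero case of your argument, which is written loosely but does go through once one factors off the coordinates that are constant on the support and lands back in your full-support case. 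For (B) the paper writes the $2\times 2$ matrix of values of $T^{\otimes}H^B$ at $\veczero,\vecx,\overline{\vecx},\vecone$ as a product $M_1M_2M_3$ with $\det M_2<0$ and $\det M_1\det M_3>0$, which handles every non-degenerate $T$ uniformly, whereas you peel off the diagonal/anti-diagonal case (where the full flip, which does preserve non-logsupermodularity, is what occurs) and otherwise use a two-coordinate slice together with $\det(TNT^{\top})=\det(T)^2\det N$; both computations are valid and rest on Lemma~\ref{lem:minnonlsm} in the same way.
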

\begin{proof}
By Lemma \ref{lem:freepin} we can assume $\calF$ is closed under
pinnings.  Let $G$ be a pinning-minimal signature subject to
$G\in\calF\setminus\text{Weighted-NEQ-conj}$. In particular $G$ is
indecomposable.  As in Proposition {prop:wncisclone} we will use the
characterization that an indecomposable signature is in
Weighted-NEQ-conj if and only if its support has order at most two.

Let $B,T$ be as given by Lemma \ref{lem:nonterruse} applied to
$\calF$.  Either $T(0,0)T(1,1)>0$ and $\supp(T^{\otimes}G^B)\supseteq
\supp(G)$, or $T(0,1)T(1,0)>0$ and $\supp(T^{\otimes}G^B)\supseteq
\{\overline{\vecx}\mid\vecx\in\supp(G)\}$.  In either case
$|\supp(T^{\otimes}G^B)|\geq|\supp(G)|>2$.  If
$T^{\otimes}G^B=G_1\otimes G_2$ then $G^B=(S^{\otimes}G_1)\otimes
(S^{\otimes}G_2)$ where $S$ is the matrix inverse of $T$, that is, the
unique solution to $\sum_jT(i,j)S(j,k)=\EQ_2(i,k)$ ($i,k\in\{0,1\}$).
But $G^B$ is indecomposable.  Therefore $T^{\otimes}G^B$ is
indecomposable, and hence it is not in Weighted-NEQ-conj.

If $\Xprob=\SAT$ then let $H$ be a pinning-minimal non-logsupermodular
signature in $\calF$.  In particular by Lemma \ref{lem:minnonlsm},
$\supp(H)\subseteq\{\veczero,\vecx,\overline{\vecx},\vecone\}$ for
some vector $\vecx$ with $a$ zeros and $b$ ones for some $a,b\geq
1$. Hence
\begin{align*}
&\begin{pmatrix}
(T^{\otimes}H^B)(\veczero)&
(T^{\otimes}H^B)(\vecx)\\
(T^{\otimes}H^B)(\overline{\vecx})&
(T^{\otimes}H^B)(\vecone)
\end{pmatrix}\\
&=\begin{pmatrix}
T(0,0)^a&T(0,1)^a\\
T(1,0)^a&T(1,1)^a
\end{pmatrix}
\begin{pmatrix}
H(\veczero)^B&
H(\vecx)^B\\
H(\overline{\vecx})^B&
H(\vecone)^B
\end{pmatrix}
\begin{pmatrix}
T(0,0)^b&T(1,0)^b\\
T(0,1)^b&T(1,1)^b
\end{pmatrix}
\end{align*}
Denote the latter expression by $M_1M_2M_3$.  Since
$H(\veczero)H(\vecone)<H(\vecx)H(\overline{\vecx})$, the middle matrix
$M_2$ has a negative determinant.  The determinants of the neighbouring
matrices $M_1$ and $M_3$ have the same sign: if
$T(0,0)T(1,1)>T(0,1)T(1,0)$ they both have a positive determinant,
otherwise they both have a negative determinant.  Therefore the matrix
on the left-hand-side has a negative determinant, and hence
$T^{\otimes}H^B$ is not logsupermodular.

Let $B,T$ be as given by Lemma \ref{lem:nonterruse} applied to
$\calF$.  By Lemma \ref{lem:unbounded} there is a finite set of arity 1
signatures $S$ such that $\Xprob\APred\nCSP(T^{\otimes}\calF^B\cup S)$.  By the
choice of $B$ and $T$ we have $\nCSP(T^{\otimes}\calF^B\cup
S)\APred\nCSP^W_{=2}(\calF)$ for some finite set $W$.
\end{proof}

\begin{lemma}\label{lem:pmpterrwt}
Let $F$ be a terraced signature whose support is not basically binary.
Then $\PMP\APred\nCSPs_{=2}(\{F\})$.
\end{lemma}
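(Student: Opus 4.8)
The plan is to follow the template of the proof of Lemma~\ref{lem:pmpterr}, first removing the weights carried on $\supp(F)$ and then reducing the arity to $3$. The starting point is that $\supp(F)$ is a delta matroid: if $F$ is terraced then so is the relation $\supp(F)$, because whenever $\supp(\pin F\vecp)=\emptyset$ (equivalently $\pin F\vecp\equiv0$) the terraced condition makes $\pin F{\vecp^{\{i\}}}$ and $\pin F{\vecp^{\{j\}}}$ linearly dependent, and passing to supports shows $\supp(\pin F{\vecp^{\{i\}}})$ and $\supp(\pin F{\vecp^{\{j\}}})$ are equal or one of them is empty; now invoke Lemma~\ref{lem:dmterr}. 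By Lemma~\ref{lem:freepin} (using that $W=\Qnonneg\times\Qnonneg$ contains $(1,0)$ and $(0,1)$) we may add $\PIN_0$, $\PIN_1$, and all pinnings of $F$ to the language for free. Choose a partial configuration $\vecp$ whose domain is as large as possible subject to $\supp(\pin F\vecp)$ not being basically binary, and set $F'=\pin F\vecp$; then $\supp(F')$ is a delta matroid that is pinning-minimal subject to not being basically binary, so by Lemma~\ref{lem:min3delta} it has arity $3$ and there are $\vecx\in\{0,1\}^3$ and $d\in\{1,2\}$ with $\vecx^U\notin\supp(F')$ for $|U|<d$ and $\vecx^U\in\supp(F')$ for $|U|=d$. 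With $h(i)=2x_i-1$ the signature $F'_{h-\max}$ has support exactly $\{\vecx^U\mid|U|=d\}$, that is, a flip $\PM_3^{U'}$ of $\PM_3$, and by Lemma~\ref{lem:hamweight} it may be added to the language.

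Next I would strip off the flip, exactly as in Lemma~\ref{lem:pmpterr}. Replacing a signature $H$ by $\overline H(\vecx)=H(\overline\vecx)$ while simultaneously swapping the two entries of every variable weight leaves all $\nCSP$ values unchanged and turns $\PM_3^{U'}$ into $\PM_3^{\{1,2,3\}\setminus U'}$, so we may assume $|U'|\le1$; when $|U'|=1$ we reorder variables so that $G_1:=F'_{h-\max}$ has support $\PM_3^{\{1\}}$. Then the pinning $N$ of $G_1$ that fixes its first variable to $1$ is a positively weighted $\NEQ$, and the $(=2)$-formula $G(x,y,z)=\sum_{x'}N(x,x')G_1(x',y,z)$ defines, via Lemma~\ref{lem:express}, a signature $G$ with $\supp(G)=\PM_3$ and strictly positive values; when $|U'|=0$ we simply take $G=F'_{h-\max}$. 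Writing $a=G(1,0,0)$, $b=G(0,1,0)$, $c=G(0,0,1)$, the reductions so far give $\nCSPs_{=2}(\{G\})\APred\nCSPs_{=2}(\{F\})$.

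It remains to show $\PMP\APred\nCSPs_{=2}(\{G\})$. By Lemma~\ref{lem:pmptopm3} together with the trivial reduction $\nCSP_{=2}(\{\PM_3\})\APred\nCSPs_{=2}(\{\PM_3\})$, it suffices to prove $\nCSPs_{=2}(\{\PM_3\})\APred\nCSPs_{=2}(\{G\})$. Given an instance $(w,\phi)$ over $\{\PM_3\}$, form $\phi'$ by replacing each atomic formula $\PM_3(s_1,s_2,s_3)$ by $G(s_1,s_2,s_3)$ with unchanged scopes, and set $w'(v)_0=w(v)_0$ and $w'(v)_1=w(v)_1/\kappa_v$, where $\kappa_v$ is the product, over the (at most two) occurrences of $v$ in $\phi$, of $a$, $b$, or $c$ according to whether that occurrence sits at scope position $1$, $2$, or $3$. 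In any configuration satisfying all constraints, each constraint has a single scope position holding $1$, the variable occupying it has value $1$ and occurs exactly once in that constraint, and dually each variable set to $1$ occupies the $1$-position of each of its two constraints; regrouping the product of $G$-values by the variable responsible for each constraint then gives $\prod_i G(\cdots)=\prod_{v:x_v=1}\kappa_v$, so the $\{G\}$-instance and $(w,\phi)$ have the same value. Since $a,b,c$ are positive polynomial-time computable reals we instead approximate each $\kappa_v$ by a rational to within a factor $\exp(\epsilon/2n)$ (with $n$ the number of variables), which changes the value of the instance by a factor in $[\exp(-\epsilon/2),\exp(\epsilon/2)]$, and then make one oracle call with parameter $\epsilon/2$. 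The delicate point is exactly this last step: verifying the bookkeeping of which variable ``pays for'' each constraint's $G$-value — including the degenerate cases where a variable occurs twice in one constraint and is forced to $0$ — and checking that the rational variable weights produced stay within the allowed error budget.
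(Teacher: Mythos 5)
Your proof is correct and follows essentially the same route as the paper's: pin to a minimal non-basically-binary support, apply Lemma \ref{lem:min3delta} and $h$-maximisation to reach a signature supported on a flip of $\PM_3$, and exploit the fact that such a signature is a simple weighting of (a flip of) $\PM_3$. The only difference is packaging: the paper keeps the flip and transfers hardness from the relation $\PM_3^U$ via Lemma \ref{lem:pmpterr} and Lemma \ref{lem:simpleweight}, whereas you first strip the flip with the $\NEQ$ gadget and then absorb the weights $a,b,c$ into variable weights by hand --- an unrolled instance of Lemma \ref{lem:simpleweight}(1) --- which is correct but re-derives machinery the paper already provides.
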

\begin{proof}
By pinning and applying $h$-maximisation as in the proof of Lemma
\ref{lem:pmpterr}, we can assume $\supp(F)=\PM_3^U$ for some
$U\subseteq\{1,2,3\}$.

We will show that $\PM_3^U$ is a simple weighting of $F$.  Let $F'=F^U$
and $U'_1(0)=U'_2(0)=U'_3(0)=1$ and $U'_1(1)=1/F'(1,0,0)$,
$U'_2(1)=1/F'(0,1,0)$, $U'_3(1)=1/F'(0,0,1)$.  Then
$F'(x_1,x_2,x_3)U'_1(x_1)U'_2(x_2)U'_3(x_3)=\PM_3(x_1,x_2,x_3)$ for
all $x_1,x_2,x_3\in\{0,1\}$.  For all $x\in\{0,1\}$ define
$U_i(x)=U'_i(x)$ for $i\in\{1,2,3\}\setminus U$ and $U_i(x)=U'_i(1-x)$
for $i\in U$. Then
$F(x_1,x_2,x_3)U_1(x_1)U_2(x_2)U_3(x_3)=\PM_3^U(x_1,x_2,x_3)$ for all
$x_1,x_2,x_3\in\{0,1\}$ as required.

By Lemma \ref{lem:pmpterr} there is an
AP-reduction from $\PMP$ to $\nCSP_{=2}(\{\PM_3^U\})$;
and since $\PM_3^U$ is a simple weighting of $F$, by Lemma
\ref{lem:simpleweight} there is an AP-reduction from
$\nCSP_{=2}(\{\PM_3^U\})$ to $\nCSPs_{=2}(\{F\})$.
\end{proof}

\begin{lemma}\label{lem:neqing}
Let $\calF$ be a finite set of signatures, containing a signature
whose support is not in IM-conj.  Then
$\nCSPs_{=2}(\calF')\APred\nCSPs_{=2}(\calF)$ where $\calF'$ is the
closure of $\calF$ under flips:
\[\calF'=\{F^U\mid \text{$F\in\calF$ and $U\subseteq V(F)$}\}\]
\end{lemma}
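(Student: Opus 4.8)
The plan is to manufacture a positively‑weighted arity‑two analogue of $\NEQ$ and then use it, degree‑two style, to splice onto each flipped coordinate: since every variable has degree exactly two, inserting such a gadget on the half‑edge joining a coordinate of an atomic formula $F^U(\dots)$ to the rest of an instance negates that coordinate, turning $F^U$ into $F$, and a variable weight on the fresh gadget variable absorbs the scalar the gadget introduces. So the real task is to produce, inside $\nCSPs_{=2}(\calF)$ instances, a gadget whose effective signature $\tilde N$ on two external legs has $\supp(\tilde N)=\{(0,1),(1,0)\}$ and positive values. First I would use Lemma~\ref{lem:freepin} to assume $\calF$ is closed under pinnings (this also gives $\PIN_0,\PIN_1$). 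Let $F\in\calF$ have support not in IM-conj, and pass to a pinning $G$ of $F$ that is pinning‑minimal subject to $\supp(G)\notin\Imconj$; such a $G$ exists (take minimal arity) and lies in $\calF$. Using the ``closed under meets and joins'' characterisation of IM-conj, $G$ is pinning‑minimal with respect to ``not closed under joins'' or ``not closed under meets'', so by Lemma~\ref{lem:minnonjoin} (applied directly, or to the complement) $\supp(G)$ is $\{\vecz,\overline\vecz\}$, $\{\veczero,\vecz,\overline\vecz\}$, or $\{\vecone,\vecz,\overline\vecz\}$ for a configuration $\vecz$ with (say) $a\ge1$ ones and $b\ge1$ zeros. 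An $h$-maximisation (Lemma~\ref{lem:hamweight}) with $h$ equal to $\pm b$ on the $1$-coordinates of $\vecz$ and $\pm a$ on the $0$-coordinates strips off the stray $\veczero$ or $\vecone$, leaving a signature $N$ of some arity $m\ge2$ with $\supp(N)=\{\vecz,\overline\vecz\}$ and $N(\vecz),N(\overline\vecz)>0$.

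The heart of the argument is turning $N$ into a weighted $\NEQ$. If $m=2$ then $N$ already is one. Otherwise I would put one external leg on a $0$-coordinate of $\vecz$, one on a $1$-coordinate, and collapse each of the remaining $m-2$ coordinates into pairs of equal $\vecz$-value, each pair identified to a single fresh internal degree‑two variable; because $\supp(N)$ is the complementary pair $\{\vecz,\overline\vecz\}$, only the two ``everything agrees'' branches survive, and the two legs see $(0,1)$ resp.\ $(1,0)$ with weights $N(\vecz),N(\overline\vecz)$ --- a weighted $\NEQ$. This pairing is possible exactly when $a$ and $b$ are both odd. To force that parity when it fails, I would first pass to two disjoint copies of $N$ in which one matching pair of $1$-coordinates and one matching pair of $0$-coordinates are identified (two fresh degree‑two variables): the identifications kill the two ``mixed‑state'' branches and leave a complementary support $\{\vecz',\overline{\vecz'}\}$ with $2a-1$ ones and $2b-1$ zeros --- both odd --- carrying weights $N(\vecz)^2,N(\overline\vecz)^2$, to which the previous pairing now applies. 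Crucially this whole construction is inserted directly into an instance as a degree‑two piece (all new variables are internal of degree two); it is deliberately not packaged as a stand‑alone $(=2)$-formula --- the identified coordinates have degree two --- so Lemma~\ref{lem:express} is not invoked for it.

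To finish, given an instance $\phi$ of $\nCSPs_{=2}(\calF')$ I would, for each atomic formula $F^U(\vecs)$ with $U\ne\emptyset$, redirect each coordinate $i\in U$ of $F$ to a fresh variable $s_i'$, attach the $\NEQ$-gadget between $s_i$ and $s_i'$, and set the variable weight of $s_i'$ to the reciprocals of the two gadget weights $c_1,c_2\in\{N(\vecz),N(\overline\vecz),N(\vecz)^2,N(\overline\vecz)^2\}$ in the order that makes the gadget contribute exactly $1$; then $F$ sees the negated value at coordinate $i$, as $F^U$ demands. Because $c_1,c_2$ are fixed positive polynomial‑time computable reals and only polynomially many gadgets occur, replacing the reciprocals by rationals of precision polynomial in $|\phi|$ and $\epsilon^{-1}$ distorts the value of the instance by at most a factor $e^{\epsilon/2}$, exactly as in the rounding estimates of Lemmas~\ref{lem:sateasy} and~\ref{lem:hamweight}; querying the oracle at error $\epsilon/2$ then suffices. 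This yields $\nCSPs_{=2}(\calF')\APred\nCSPs_{=2}(\calG\cup\{N\})$ where $\calG$ consists of $\calF$, its pinnings, and $\PIN_0,\PIN_1$; chaining with $\nCSPs_{=2}(\calG\cup\{N\})\APred\nCSPs_{=2}(\calG)$ (Lemma~\ref{lem:hamweight}, as $N$ is an $h$-maximisation of $G\in\calG$) and $\nCSPs_{=2}(\calG)\APred\nCSPs_{=2}(\calF)$ (Lemma~\ref{lem:freepin}) completes the reduction.

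The step I expect to fight with is the parity and ``rigidity'' bookkeeping in the middle paragraph: checking that the identifications leave exactly the two complementary branches, that the two external legs are chosen so the gadget realises $\NEQ$ and not $\EQ_2$, and that the $2a-1,2b-1$ counts really are both odd in every case; after that the remaining work --- the error analysis for the irrational cancellation weights and the routine verification that every inserted variable has degree two --- is standard.
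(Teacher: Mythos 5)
Your arity-two case and the final splicing step coincide with the paper's, but for arity greater than two you take a genuinely different route --- building a weighted $\NEQ$ gadget directly inside the degree-two instance --- and this is where the proposal breaks. First, the arithmetic in your parity fix is wrong: identifying one matching pair of $1$-coordinates and one matching pair of $0$-coordinates across two copies of $N$ consumes those four slots as \emph{internal} degree-two variables, so the surviving complementary support has $2a-2$ ones and $2b-2$ zeros, not $2a-1$ and $2b-1$; both counts are even, and after reserving the two external legs you are left with $2a-3$ and $2b-3$ coordinates of each value, both odd, so your own ``pairs of equal $\vecz$-value'' criterion fails. Second, and more seriously, the obstruction is not just bookkeeping. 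For any gadget built from $c$ copies of $N$ with all non-leg slots paired into internal degree-two variables, the sum of the external leg values modulo $2$ equals the sum over copies of ($a$ or $b$, according to the copy's state); when $a\equiv b\pmod 2$ this is the constant $ca\bmod 2$, and when $a$ and $b$ are both \emph{even} it is always $0$. Since $\NEQ$ is supported on odd-sum configurations, no such gadget can realise it: e.g.\ $\supp(N)=\{(1,1,0,0),(0,0,1,1)\}$ (which is pinning-minimal subject to not being in IM-conj) defeats the construction outright, and allowing mixed-value identifications, more copies, or pinnings does not rescue it.

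This is exactly why the paper does not attempt to build $\NEQ$ in the higher-arity case. Instead it uses two copies of the (zero-one-valued, support $\{\vecz,\overline{\vecz}\}$) signature to define $\EQ_{2a}$ or $\EQ_{2b}$ of arity at least three by a $(=2)$-formula, feeds that into Lemma~\ref{lem:holanty} (with $T=\EQ_2$) to remove the degree bound, and then closes the loop through $\SAT$ via Lemmas~\ref{lem:sateasy} and~\ref{lem:unbounded}: $\nCSPs_{=2}(\calF')\APred\SAT\APred\nCSPs(\calF)\APred\nCSPs_{=2}(\calF)$. Note that an equality gadget is compatible with the parity invariant above (its support has even sum), whereas $\NEQ$ is not; so the detour through the Holant lemma and $\SAT$ is not an inelegance you can engineer away with a cleverer pairing --- it is forced. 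Your preliminary reductions (pinning-minimality, Lemma~\ref{lem:minnonjoin}, the $h$-maximisation stripping off $\veczero$ or $\vecone$) are essentially the paper's, modulo fixing the signs of $h$ so that $\vecz$ and $\overline{\vecz}$ tie for the maximum.
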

\begin{proof}
It suffices to do one flip at a time: to show that for all $G\in
\calF$ and all $U\subseteq V(G)$ we have
$\nCSP(\calF\cup\{G^U\})\APred\nCSP(\calF)$.  By Lemma
\ref{lem:freepin} and Lemma \ref{lem:hamweight} we can assume that
$\calF$ is closed under pinnings and $h$-maximisations.

Pick a pinning-minimal signature $F\in\calF$ such that $\supp(F)$ is
not IM-conj.  By Lemma \ref{lem:minnonlsm} $\supp(F)$ is a (proper)
subset of $\{\veczero,\vecz,\overline{\vecz},\vecone\}$ for some
$\vecz$, and by taking an equivalent signature we can assume that
there exists $a,b\geq 1$ such that $\vecz$ is an arity $a+b$ vector
with $z_i=0$ for $1\leq i\leq a$ and $z_i=1$ for $a+1\leq i\leq a+b$.
By taking a suitable $h$-maximisation (Lemma \ref{lem:hamweight}) we
may assume that $\veczero,\vecone\notin\supp(F)$, and by simple
weighting (Lemma \ref{lem:simpleweight}) we may assume that $F$ is
zero-one valued.

If the arity of $F$ is two then $F=\NEQ$. But
\[ G^U(\vecx,\vecx')=\sum_{\vecy:U\to\{0,1\}}G(\vecy,\vecx')\prod_{i\in U}\NEQ(x_i,y_i) \]
for all $\vecx\in\{0,1\}^U$ and $\vecx'\in\{0,1\}^{V(G)\setminus U}$.
Hence $\nCSPs_{=2}(\calF\cup\{G^U\})\APred\nCSPs_{=2}(\calF)$ by Lemma
\ref{lem:express}.

If the arity of $F$ is greater than two then for all
$\vecx,\vecx'\in\{0,1\}^a$ and $\vecy,\vecy'\in\{0,1\}^b$ we have
\[\EQ_{2a}(\vecx,\vecx')=\sum_{\vecy}F(\vecx,\vecy)F(\vecy,\vecx')\]
\[\EQ_{2b}(\vecy,\vecy')=\sum_{\vecx}F(\vecx,\vecy)F(\vecy',\vecx)\]
One of these has arity at least three, so Lemma \ref{lem:holanty} can
be applied.  The Lemma is trivial in $\calF$ is contained in
Weighted-NEQ-conj, and otherwise by Lemma \ref{lem:sateasy}, Lemma
\ref{lem:unbounded} and Lemma \ref{lem:holanty} we have
\[\nCSPs_{= 2}(\calF')\APred\SAT\APred\nCSPs(\calF)\APred\nCSPs_{=2}(\calF)\]
as required.  And if $\calF$ is contained in Weighted-NEQ-conj then
$\nCSPs_{= 2}(\calF')$ already has an FPRAS again by Lemma
\ref{lem:unbounded}.
\end{proof}

\begin{reptheorem}{thm:weightthm}
Let $\calF$ be a finite set of signatures.  If every signature in
$\calF$ is basically binary or every signature in $\calF$ is
in Weighted-NEQ-conj, then $\nCSPs_{\leq 2}(\calF)$ has an FPRAS.  Otherwise
assume furthermore that there is a signature in $\calF$ that is not
terraced or that does not have basically binary support. Then:
\begin{itemize}
\item If every signature in $\calF$ is logsupermodular then
  $\BIS\APred\nCSPs_{=2}(\calF)$ (and $\calF$ necessarily contains a
  signature that is not terraced).
\item If some signature in $\calF$ is not logsupermodular then
  $\PMp\APred\nCSPs_{=2}(\calF)$. If furthermore some
  signature in $\calF$ is not terraced then $\SAT\APeq\nCSPs_{=2}(\calF)$.
\end{itemize}
\end{reptheorem}

\begin{proof}
The FPRAS is given by Lemma \ref{lem:tract}.  If every signature in
$\calF$ is terraced, there is a signature $F$ in $\calF$ that does not
have basically binary support.  By Lemma \ref{lem:imdeltabb} $F$ is not
logsupermodular, and by Lemma \ref{lem:pmpterrwt} we have
$\PMP\APred\nCSPs_{=2}(\{F\})$ as required.

Otherwise, there is a signature $F$ in $\calF$ that is not terraced.
By definition there is a pinning $\pin F \vecp$ and there are
variables $i,j\in\dom(\vecp)$ such that $\pin F \vecp$ is identically
zero but $\pin F {\vecp^{\{i\}}}$ and $\pin F {\vecp^{\{j\}}}$ are
linearly independent.  If $p_i\neq p_j$ then $F$ is not IM-terraced.
Otherwise $p_i=p_j$.  There are
$\vecx,\vecy\in\{0,1\}^{V(F)\setminus\dom(\vecp)}$ such that
$F(\vecp^{\{i\}},\vecx)$ and $F(\vecp^{\{j\}},\vecy)$ are non-zero,
but $F(\vecp,\vecx\vee\vecy)=F(\vecp,\vecx\wedge\vecy)=0$.  Hence
$\supp(F)$ is not in IM-conj, The flip $F^{\{i\}}$ is not IM-terraced,
and $\nCSPs_{=2}(\calF\cup\{F^{\{i\}}\})\APred\nCSPs_{=2}(\calF)$ by
Lemma \ref{lem:neqing}.

So in either case we can assume that $F$ is not IM-terraced.  By
Lemma \ref{lem:bissatwt} $\BIS\APred\nCSPs_{=2}(\calF)$, and
$\SAT\APeq\nCSPs_{=2}(\calF)$ if $\calF$ contains a signature that is
not logsupermodular.
\end{proof}

\begin{reptheorem}{thm:finitewts}
Let $\calF$ be a finite set of signatures. Assume that not every
signature in $\calF$ is in Weighted-NEQ-conj, and not every signature
in $\calF$ is basically binary, and not every signature in $\calF$ is
terraced.
(This the same setting as the $\BIS$ and $\SAT$ reductions in Theorem
\ref{thm:weightthm}.)

Unless all the following conditions hold, there is a finite set
$W\subseteq\R_p\times\R_p$ such that
$\Xprob\APred\nCSP^W_{=2}(\calF)$ where $\Xprob=\BIS$ if every signature in
$\calF$ is logsupermodular, and $\Xprob=\SAT$ otherwise.

\begin{enumerate}
\item Every signature $F\in\calF$ is IM-terraced.
\item Either the support of every signature $F$ in $\calF$ is closed
  under meets
  ($\vecx,\vecy\in\supp(F)\implies\vecx\wedge\vecy\in\supp(F)$), or
  the support of every signature $F$ in $\calF$ is closed under joins
  ($\vecx,\vecy\in\supp(F)\implies\vecx\vee\vecy\in\supp(F)$).
\item No pinning of the support of a signature in $F$ is equivalent to
  $\EQ_2$.
\end{enumerate}
\end{reptheorem}
\begin{remark*}
These conditions describe when all the
reductions in the following proof fail.  They are certainly not
exhaustive. For example the following relation $R$ is not in
Weighted-NEQ-conj, is not basically binary, and is not terraced, but
(the signature of) $R$ is IM-terraced, $R$ is closed under meets, and
has no pinning equivalent to $\EQ_2$.
\[R=\{(0,0,0),(0,0,1),(0,1,0),(1,0,0),(0,1,1)\}\]
\end{remark*}
\begin{proof}[Proof of Theorem \ref{thm:finitewts}]
By Lemma \ref{lem:freepin} we can assume $\calF$ is closed under
pinning.  We will consider each condition in turn.

\begin{enumerate}
\item Assume that $\calF$ is not IM-terraced.  The conclusion follows
  from Lemma \ref{lem:bissatwt}.

\item
  Assume that condition 2 does not hold but condition 1 holds.  Pick a
  non-terraced signature $F'\in\calF$.  By Lemma \ref{lem:minnonterr}
  there is a signature $F$ equivalent to a pinning of $F'$ and
  satisfying certain conditions: $V(F)=\{1,\cdots,|V(F)|\}$, and there
  are configurations $\vecp\in\{0,1\}^{\{1,2\}}$ and
  $\vecz\in\{0,1\}^{\{3,\cdots,|V(F)|\}}$ such that $\pin F \vecp$ is
  identically zero and for all
  $\vecx\in\{\vecp^{\{1\}},\vecp^{\{2\}}\}$ and
  $\vecy\in\{0,1\}^{\{3,\cdots,|V(F)|\}}$ we have
\[F(x_1,x_2,y_3,\cdots,y_{|V(F)|})=\begin{cases}
                   T(y_3,x_1)&\text{if $\vecy=\vecz$ or $\vecy=\overline{\vecz}$}\\
                   0&\text{otherwise}
               \end{cases}\]

  We have assumed that condition 1 holds, so $F'$ is IM-terraced, so
  $p_1=p_2$. Permuting the domain $\{0,1\}$ if necessary we can assume
  $p_1=p_2=0$ without loss of generality.

  There is a signature $G'\in\calF$ such that $\supp(G')$ is not
  closed under joins; let $G$ be a minimal pinning of $G'$ such that
  $\supp(G)$ is not closed under joins.

  By Lemma \ref{lem:minnonjoin} there exists $\vecx$ such that
  $\supp(G)=\{\veczero,\vecx,\overline{\vecx}\}$ or
  $\supp(G)=\{\vecx,\overline{\vecx}\}$.  And $\{\veczero,\vecone\}$
  is closed under joins, so $\vecx\neq\vecone$ and there is a variable
  $i$ such that $x_i=0$.  Since $G$ is IM-terraced, $\supp(G)$ is a
  delta matroid (Lemma \ref{lem:dmterr}) and hence
  $\vecx^{\{i,j\}}\in\supp(G)$ for some $j\in V(G)$. But this can only
  mean that $\vecx^{\{i,j\}}=\overline{\vecx}$, which implies
  $V(G)=\{i,j\}$. Also, the arity of $G$ is not 1. It will be harmless to
  take $V(G)=\{1,2\}$. With this assumption we have $G(1,1)=0$
  and $G(0,1),G(1,0)\neq 0$.

  Define $H\sigf {V(F)}$ by
  \[ H(x_1,x_2,\vecy)=\sum_{t=0,1} G(x_1,t)F(t,x_2,\vecy) \]
  If we shorten $F_{\{1\mapsto i,2\mapsto j\}}$ to $F_{ij}$, and
  similarly define $F'_{ij}$, and allow scalar multiplication of a
  signature by a constant, we have:
  \begin{align*}
  H_{10}&=G(1,0)F_{00}+G(1,1)G_{10}\text{ which is identically zero}\\
  H_{00}&=G(0,0)F_{00}+G(0,1)F_{10}=G(0,1)F_{10}\\
  H_{11}&=G(1,0)F_{01}+G(1,1)F_{11}=G(1,0)F_{01}
  \end{align*}
  Hence $H$ is not IM-terraced. (A related trick, expressing IMP using
  OR and NAND, is used in \cite{bdddeg}.)

  We have shown (condition 1) that there is a finite set $W$ such that
  $\Xprob\APred\nCSP^W_{=2}(\calF\cup\{H\})$; by Lemma \ref{lem:express}
  $\nCSP^W_{=2}(\calF\cup\{H\})\APred\nCSP_{=2}^W(\calF\cup\{F,G\})$;
  and by Lemma \ref{lem:freepin}
  $\nCSP^W_{=2}(\calF\cup\{F,G\})\APred\nCSP_{=2}^{W'}(\calF)$ where
  $W'=W\cup\{(0,1),(1,0)\}$.

\item
  Assume that condition 3 does not holds but conditions 1 and 2 do
  hold.  So there is a signature in $\calF$ whose support is not
  closed under joins, and a signature in $\calF$ whose support is not
  closed under meets.  By permuting the domain $\{0,1\}$ if necessary
  we can assume without loss of generality that the support of every
  signature in $\calF$ is closed under meets.

  Pick $G$ satisfying $\supp(G)=\NAND$ as follows.  Let $H$ be a
  minimal non-terraced pinning of a signature in $\calF$.  Reordering
  the variables according to Lemma \ref{lem:minnonterr}, there exist
  not necessarily distinct configurations
  $\vecy,\vecy'\in\{0,1\}^{\{3,\cdots,V(H)|\}}$ such that
  $(0,1,\vecy),(1,0,\vecy')\in\supp(H)$, and
  $(0,0,\vecy\wedge\vecy'),(1,1,\vecy\vee\vecy')$ are not both in
  $\supp(F)$.  By assumption $\supp(G)$ is closed under meets, so it must not be
  closed under joins.  By the same argument used for condition 2,
  there is a pinning $G$ of $H$ of arity 2, and we can take
  $V(G)=\{1,2\}$ so $\supp(G)=\NAND$.

  Let $h(1)=h(2)=1$ so $\supp(G_{h-\max})=\NEQ$.  Since $G_{h-\max}$
  fails condition 2, there is a finite set $W$ such that
  $\Xprob\APred\nCSP^W_{=2}(\calF\cup\{G_{h-\max}\})$.

  We will want to use variable weights that are arbitrary powers of
  two, so it is convenient to hide $W$ at this point.  By Lemma
  \ref{lem:simpleweight} there is a set of simple weightings $\calG$
  of signatures in $\calF$, and a set of simple weightings $\calG'$ of
  $G_{h-\max}$, such that
  $\nCSP^W_{=2}(\calF\cup\{G_{h-\max}\})\APred\nCSP_{=2}(\calG\cup\calG')$.
  Let \[P=\{(2^{p_0},2^{p_1})\mid p_0,p_1\in\mathbb{Z}\}\] Let
  $\calG''$ be the set of simple weightings $G'$ of $G$ satisfying
  $G'_{h-\max}\in\calG'$. In other words, for all arity 1 signatures
  $U,W$, if the signature defined by $G_{h-\max}(x,y)U(x)W(y)$ is in
    $\calG'$, then the signature defined by $G(x,y)U(x)W(y)$ is in
    $\calG''$.  Note that $|\calG''|=|\calG'|$ is finite. By Lemma
    \ref{lem:hamweight},
  \[ \nCSP_{=2}(\calG\cup\calG')\APred \nCSP^P_{=2}(\calG\cup\calG'') \]
  
  We will show that
  \begin{align}
  \nCSP^P_{=2}(\calG\cup\calG'')\APred \nCSP^{\{(1,2),(1,1),(2,1)\}}(\calG\cup\calG''\cup\{\EQ_2\})
  \label{eqn:eqtopot}
  \end{align}
  
  We are given an instance $(w,\phi)$ of $\nCSP^P_{=2}(\calG\cup\calG'')$.
  For each $v\in V=V^{\phi}$ there exists $p_v$
  such that $w(v,1)/w(v,0)=2^{p_v}$ for $i=0,1$.  Let $V'=\{v_i\mid
  v\in V; 0\leq i\leq |p_v|\}$.  Define $w':V\to W$ by $w'(v_0)=(1,1)$
  and for all $i>0$,
  \[w'(v_i)=\begin{cases}(1,2)&\text{ if $p_v<0$}\\(2,1)&\text{ if $p_v>0$}\end{cases}\]
  Modify $\phi$ as follows to obtain a new formula $\phi'$: for each
  $v\in V$, insert atomic formulas $\EQ_2(v_0,v_1)\cdots
  \EQ_2(v_{|p_v|-1},v_{|p_v|})$ and replace the two occurences of
  $v$ by $v_0$ and $v_{|p_v|}$. Note that configurations $\vecx'$ of
  $V'$ have zero weight in $(w',\phi')$ unless there exists
  $\vecx\in\{0,1\}^V$ such that $x'_{v_i}=x_v$ for all $v,i$, and in
  this case $\wt{w'}{\phi'}(\vecx')=\wt{w}{\phi}(\vecx)C$ where
  $C=\prod_{v\in V}\min(w(v,0),w(v,1))$.  Hence
  $\zwp{w}{\phi}=\zwp{w'}{\phi'}C$.  And $\zwp{w'}{\phi'}$ can be
  approximated by the oracle. This establishes the AP-reduction
  \eqref{eqn:eqtopot}.

  To finish, let $F$ be a pinning of a signature in $\calF$ such that
  $\supp(F)$ is equivalent to $\EQ_2$.  Then $F(x,y)=\EQ_2(x,y)F(x,x)$
  for all $x,y\in\{0,1\}$ so $F$ is a simple weighting of $\EQ_2$.  By
  Lemma \ref{lem:simpleweight} there is a finite set $W'$ (which we
  can assume contains $(0,1)$ and $(1,0)$) such that
  \[\nCSP^{\{(1,2),(1,1),(2,1)\}}(\calG\cup\calG''\cup\{\EQ_2\})\APred\nCSP^{W'}_{=2}(\calF\cup\{F\})\]
  and $\nCSP^{W'}_{=2}(\calF\cup\{F\})\APred\nCSP^{W'}_{=2}(\calF)$ by
  Lemma \ref{lem:freepin}.  \qedhere
\end{enumerate}
\end{proof}

\section{Degree three and higher}

In this section we will study $\nCSP^W_{\leq k}(\calF)$ for $k>2$.  We
will use a result of Sly about the complexity of the partition
function of the hardcore model on a graph.  The partition function of
the hardcore model with fugacity $\lambda$, defined on a graph $G$, is
defined to be the sum of $\lambda^{|I|}$ over independent sets $I$ of
$G$.

\begin{lemma}[\cite{sly}, Theorem 1]\label{lem:sly}
For every $d\geq 3$ there exists $\lambda_c(d),\epsilon(d)>0$ such that when
$\lambda_c(d) < \lambda < \lambda_c(d) + \epsilon(d)$, unless NP=RP,
there does not exist an FPRAS for the partition function of the
hardcore model with fugacity $\lambda$ for graphs of maximum degree at
most d.
\end{lemma}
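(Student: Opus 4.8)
The statement is quoted as Theorem~1 of Sly, so the plan is simply to reproduce Sly's argument. The conceptual starting point is the uniqueness threshold of the hardcore model on the infinite $(d{-}1)$-ary tree — the local structure of a $d$-regular graph: the hardcore Gibbs measure is unique for $\lambda<\lambda_c(d)$ and non-unique for $\lambda>\lambda_c(d)$, where $\lambda_c(d)=(d{-}1)^{d{-}1}/(d{-}2)^d$. For $\lambda$ just above $\lambda_c(d)$ this tree non-uniqueness shows up on finite random graphs as a genuine \emph{two-phase} structure of the hardcore measure, and the whole proof is an attempt to mine that structure for computational hardness.

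First I would set up the random-graph model and the moment computation. Fix $\lambda\in(\lambda_c(d),\lambda_c(d)+\epsilon(d))$ and let $G$ be a uniformly random $d$-regular \emph{bipartite} graph on two parts of size $n$, with partition function $Z_G(\lambda)=\sum_I\lambda^{|I|}$ over independent sets $I$. Using the configuration model one estimates $\mathbb{E}[Z_G]$ and $\mathbb{E}[Z_G^2]$. The first moment is governed by an optimisation over the occupancy density on each side; the key point is that for $\lambda>\lambda_c(d)$ the optimiser is not the balanced one but a pair of unbalanced densities $(\alpha_+,\alpha_-)$ with $\alpha_+\neq\alpha_-$, so the dominant configurations are those concentrated on one side of the bipartition or the other — this is the two-phase picture. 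The second moment is a harder optimisation over a joint overlap profile between two independent sets, and the crucial technical claim — the heart of Sly's paper — is that for $\lambda$ close enough to $\lambda_c(d)$ that optimisation is maximised only at the ``decoupled'' profiles, i.e.\ two copies in the same phase or in opposite phases; this gives $\mathbb{E}[Z_G^2]=\Theta((\mathbb{E}Z_G)^2)$ after restricting to a single phase. Combined with a small-subgraph-conditioning argument (Robinson--Wormald style) this yields that for a typical $G$ both $\log Z_G$ and the phase (``left-heavy'' vs.\ ``right-heavy'') concentrate.

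Next I would convert the phase into a computational bit. The gadget is a random bipartite $d$-regular graph with a few edges deleted, leaving $O(1)$ vertices of degree $d{-}1$ to serve as ``ports''; by the concentration above, the conditional marginal at a port is bounded away from its unconditional value, with sign determined by the gadget's phase. Wiring ports of several gadgets together (and inserting $O(1)$ auxiliary vertices to keep every degree $\le d$) lets one assemble, from these two-state gadgets, an instance whose log-partition function equals, up to $o(n)$ additive error, an affine function of the number of ``agreeing'' connections — i.e.\ of a Max-Cut–type objective on a bounded-degree graph. Since approximating Max-Cut on bounded-degree graphs within some fixed constant factor is NP-hard, an FPRAS for $Z(\lambda)$ on max-degree-$d$ graphs would give a randomised constant-factor approximation to Max-Cut, forcing $\mathrm{NP}=\mathrm{RP}$.

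The main obstacle is the second-moment estimate above: one must prove that the two-variable entropy/energy functional controlling $\mathbb{E}[Z_G^2]$ attains its maximum only at the uncorrelated overlaps. This is exactly where the restriction $\lambda<\lambda_c(d)+\epsilon(d)$ enters — at the threshold the relevant Hessian degenerates, so only a small perturbative neighbourhood of $\lambda_c(d)$ is controllable — and it is also the reason the argument does not, without substantial extra work, extend to all $\lambda>\lambda_c(d)$.
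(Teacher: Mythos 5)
The paper offers no proof of this lemma at all: it is quoted verbatim as Theorem~1 of Sly's paper and used as a black box, so there is nothing internal to compare your argument against. Your outline is a faithful summary of the structure of Sly's actual proof (tree uniqueness threshold, first/second moments of $Z_G$ on random $d$-regular bipartite graphs, phase concentration via small-subgraph conditioning, degree-$(d-1)$ ports, and the reduction from approximating Max-Cut on bounded-degree graphs), and you correctly identify the second-moment optimisation near the threshold as the step where the restriction $\lambda<\lambda_c(d)+\epsilon(d)$ is genuinely needed; but as written it is a roadmap rather than a proof, since each of those named steps is itself a substantial argument in Sly's paper. For the purposes of this paper, the correct ``proof'' is simply the citation.
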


\def\APwred{\leq_{AP}^*}

Lemma \ref{lem:sly} is not stated as an AP reduction.  We would like
to present complexity-theoretic results that are not stated as AP
reductions.  Let $\Xprob$ and $\Yprob$ be $\Rnonneg$-valued function
problems. The notation $\Xprob\APwred\Yprob$ means: $\Xprob$ has an
FPRAS if $\Yprob$ has an FPRAS. The following Lemma is given as a
remark in, for example, \cite{rcap} and \cite{jerrumbook}.

\begin{lemma}\label{lem:nprp}
If NP=RP then $\SAT$ has an FPRAS.
\end{lemma}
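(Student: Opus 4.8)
The plan is to reduce approximate counting of satisfying assignments to an $\mathrm{NP}$ oracle, and then to eliminate the oracle using the hypothesis $\mathrm{NP}=\mathrm{RP}$. First I would invoke Stockmeyer's approximate counting theorem (equivalently, the Jerrum--Valiant--Vazirani reduction from approximate counting to satisfiability queries): there is a randomised polynomial-time algorithm $M$ that, on input $(\varphi,\epsilon)$, makes at most $p(|\varphi|,\epsilon^{-1})$ adaptive queries to some fixed language $L\in\mathrm{NP}$ and outputs a value in $[\exp(-\epsilon)N(\varphi),\exp(\epsilon)N(\varphi)]$ with probability at least $7/8$, where $N(\varphi)$ is the number of satisfying assignments of $\varphi$ and $p$ is a polynomial. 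The constant $7/8$ can be arranged from the usual $3/4$ by taking the median of a constant number of independent runs.

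Next, assuming $\mathrm{NP}=\mathrm{RP}$, we have $L\in\mathrm{RP}$, so there is a randomised polynomial-time algorithm $A$ with $\Pr[A(x)\text{ accepts}]\geq 1/2$ when $x\in L$ and $\Pr[A(x)\text{ accepts}]=0$ when $x\notin L$. I would then run $M$, answering each oracle query $x$ by running $A(x)$ independently $t=\lceil\log_2(8p)\rceil$ times and accepting iff some run accepts. Each query is then answered incorrectly with probability at most $2^{-t}\leq 1/(8p)$, and the error is one-sided: a ``no'' instance is never wrongly accepted, and a ``yes'' instance is wrongly rejected with probability at most $2^{-t}$. By a union bound over the at most $p$ queries actually made along the computation path, with probability at least $7/8$ every query is answered exactly as the true oracle for $L$ would answer it, in which case the simulation behaves exactly like $M$ with oracle $L$. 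By a further union bound on the two failure events (probability at most $1/8$ each), the resulting oracle-free randomised algorithm runs in time polynomial in $|\varphi|$ and $\epsilon^{-1}$ and outputs a value in $[\exp(-\epsilon)N(\varphi),\exp(\epsilon)N(\varphi)]$ with probability at least $3/4$; that is, it is an FPRAS for $\SAT$.

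The only substantive ingredient is Stockmeyer's theorem that approximate counting is reducible to an $\mathrm{NP}$ oracle; everything else is routine probability bookkeeping -- choosing the number of repetitions so that a union bound over all queries leaves enough room, and boosting the base success probability by taking medians. One point to be careful about is that the queries of $M$ are adaptive, so the union bound must be taken over the polynomially many query \emph{positions} along the actual computation path, not over all syntactically possible queries; since only $p$ queries are ever made, this causes no difficulty. This is precisely why the lemma is usually stated as a remark, and a proof at this level of detail -- or simply a pointer to \cite{rcap} and \cite{jerrumbook} -- suffices.
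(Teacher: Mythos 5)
Your proof is correct and is exactly the standard argument that the paper's cited remark (in \cite{rcap} and \cite{jerrumbook}) relies on: reduce approximate counting of $\SAT$ to polynomially many adaptive NP-oracle queries via Stockmeyer/Jerrum--Valiant--Vazirani, then simulate the oracle by an amplified RP algorithm and union-bound over the queries along the computation path. The paper itself offers no proof beyond the citation, so nothing further is needed.
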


\begin{lemma}\label{lem:sly3}
Let $R=\NAND$ or $R=\OR$.  There exists a finite set of variable
weights $W$ such that $\SAT\APwred\nCSP_{\leq 3}^W(\{R\})$.
\end{lemma}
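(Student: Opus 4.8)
The plan is to identify $\nCSP^W_{\leq 3}(\{\NAND\})$ with the hardcore partition function on graphs of maximum degree at most three, for a suitable fugacity encoded by the variable weight in $W$, and then to quote Sly's hardness result (Lemma~\ref{lem:sly}) together with Lemma~\ref{lem:nprp}. Throughout we use that $\NAND$-constraints encode the independent-set relation and that variable weights of the form $(1,\lambda)$ introduce a factor $\lambda$ per chosen vertex, which is exactly the hardcore weighting.

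Concretely, I would first fix a rational $\lambda^*$ with $\lambda_c(3)<\lambda^*<\lambda_c(3)+\epsilon(3)$; such a $\lambda^*$ exists because the interval is open and non-empty, and $\lambda^*>0$ since $\lambda_c(3)>0$, so a rational in the interval is a legitimate (polynomial-time computable) variable weight. For $R=\NAND$ set $W=\{(1,\lambda^*)\}$. Given a simple graph $G$ of maximum degree at most three, I would delete its isolated vertices --- each contributing an exactly computable factor $1+\lambda^*$ to the hardcore partition function --- to obtain a graph $G'$ of minimum degree at least one and maximum degree at most three. The pps-formula $\phi$ with internal variable set $V(G')$, one atomic formula $\NAND(x_u,x_v)$ for each edge $uv$ of $G'$, together with $w(v)=(1,\lambda^*)$ for every $v$, is then a $(\leq 3)$-formula with no external variables, and $\zwp{w}{\phi}=\sum_{I}(\lambda^*)^{|I|}$, the sum ranging over independent sets $I$ of $G'$ (the map $\vecx\mapsto\{v:x_v=1\}$ being a bijection between the configurations that satisfy all $\NAND$-constraints and the independent sets of $G'$). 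This is precisely the hardcore partition function of $G'$ at fugacity $\lambda^*$, so an FPRAS for $\nCSP^W_{\leq 3}(\{\NAND\})$ yields an FPRAS for the hardcore partition function at fugacity $\lambda^*$ on graphs of maximum degree at most three; by Lemma~\ref{lem:sly} this forces $\mathrm{NP}=\mathrm{RP}$, and then $\SAT$ has an FPRAS by Lemma~\ref{lem:nprp}. That is exactly the assertion $\SAT\APwred\nCSP^W_{\leq 3}(\{\NAND\})$.

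For $R=\OR$ I would exploit the symmetry swapping $0$ and $1$: since $\OR=\NAND^{\{1,2\}}$, running the same construction on $G'$ but with weights $w(v)=(\lambda^*,1)$ gives $\zwp{w}{\phi}=\sum_{J}(\lambda^*)^{|J|}$, the sum now being over independent sets $J=\{v:x_v=0\}$ of $G'$, so $W=\{(\lambda^*,1)\}$ works in this case (one may instead take a single $W\supseteq\{(1,\lambda^*),(\lambda^*,1)\}$ to cover both relations at once). The argument is routine bookkeeping once Lemma~\ref{lem:sly} is available; the only points requiring any care are choosing $\lambda^*$ to lie inside the hardness window while still being an honest variable weight, and the (standard) removal of degree-zero variables so that the instances produced are genuine $(\leq 3)$-formulas.
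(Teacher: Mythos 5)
Your proposal is correct and follows essentially the same route as the paper: fix a rational fugacity $\lambda$ in Sly's hardness window, take $W=\{(1,\lambda)\}$, encode the degree-$3$ hardcore partition function directly as a $\nCSP^W_{\leq 3}(\{\NAND\})$ instance, and dispose of the $\OR$ case by the symmetry exchanging $0$ and $1$. The extra bookkeeping you include (removing isolated vertices, and the explicit chain FPRAS for the \#CSP $\Rightarrow$ FPRAS for the hardcore model $\Rightarrow$ NP$=$RP $\Rightarrow$ FPRAS for $\SAT$) matches the paper's argument.
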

\begin{proof}
It suffices to consider $R=\NAND$; the definitions of $\nCSP$s are not
affected by permuting the domain $\{0,1\}$, so
\[\nCSP^{\overline W}_{\leq 3}(\overline \NAND)\APeq\nCSP^W_{\leq 3}(\NAND)\]
where $\overline W=\{(b,a)\mid(a,b)\in\calF\}$ and where $\overline{\NAND}$
is defined by $\overline{\NAND}(\vecx)=\NAND(\overline{\vecx})=\OR(\vecx)$.

Let $\lambda$ be a rational number such that
$\lambda_c(d)<\lambda<\lambda_c(d)+\epsilon(d)$ where $\lambda_c$ and
$\epsilon$ are given by Lemma \ref{lem:sly}.  Let $W=\{(1,\lambda)\}$.
We will show that $\HC_3(\lambda)\APred\nCSP_{\leq 3}^W(\NAND)$ where
$\HC_3(\lambda)$ is the problem of computing the partition function
of the hardcore model with fugacity $\lambda$ for graphs of maximum
degree at most $3$.  Then $\SAT\APwred\HC_3(\lambda)\APred\nCSP_{\leq
  3}^W(\NAND)$ by Lemma \ref{lem:sly}.

Given an instance $(V,E)$ of $\HC_3(\lambda)$, we can query the
$\nCSP_{\leq 3}^W(\NAND)$ oracle to approximate
\[ \zwp{w}{\phi}=\sum_{\vecx\in\{0,1\}^V}\left(\prod_{v\in W} \lambda^{x_v}\right)\left(\prod_{ij\in E}\NAND(x_i,x_j)\right) \]
But this is just the sum of $\lambda^{|I|}$ over independent sets $I$
in $G$, which is the correct output of $\HC_3(\lambda)$ on this
instance.
\end{proof}

\begin{reptheorem}{thm:highdeg}
Let $\calF$ be a finite set of signatures and assume that not every
signature in $\calF$ has degenerate support. There exists a finite set
of variable weights $W$ such that
$\nCSPs(\calF)\APwred\nCSP_{\leq 3}^W(\calF)$.
\end{reptheorem}

\begin{proof}
Let $F_1$ be a signature in $\calF$ whose support is non-degenerate.
Let $F_2$ be a minimal non-degenerate pinning of $F_1$.  Define
$F(x_1,x_2)=\sum_{x_3,\cdots,x_k}F_2(x_1,\cdots,x_k)$.  By Lemma
\ref{lem:minnondeg}, either the arity of $F_2$ is 2, or $\supp(F_2)$ equals
$\{\vecx,\overline{\vecx}\}$ for some tuple $\vecx$.  In either case
$R=\supp(F)$ is non-degenerate.

Now we claim that there are arity one signatures $U,V$, taking
positive values, such that for all $x,y\in\{0,1\}$ the value
$F(x,y)U(x)V(y)$ is zero or one.  Since $\supp(F)$ is not degenerate
there is a flip $F^S$ of $F$ with $F^S(0,0)=0$.  We will find
$U'(0),U'(1),V'(0),V'(1)>0$ such that $F^S(x,y)U'(x)V'(y)$ is zero-one
valued for all $x,y\in\{0,1\}$; then $F(x,y)U(x)V(y)$ is also zero-one
valued, where $U$ and $V$ are the flips $(U')^{S\cap\{1\}}$ and
$(V')^{S\cap\{1\}}$ respectively, establishing the claim.  Since $F^S$
is non-degenerate, $F^S(0,1),F^S(1,0)>0$, so
$\supp(F^S)=\{(0,1),(1,0)\}$ or $\supp(F^S)=\{(0,1),(1,0),(1,1)\}$.
In the first case take $U'(x)=1/F^S(x,1-x)$ and $V'(x)=1$ for all
$x=0,1$.  In the second case set $U'(1)=1$ and $V'(0)=1/F^S(1,0)$ and
$V'(1)=1/F^S(1,1)$ and $U'(0)=F^S(1,1)/F^S(0,1)$.

We will show that there is a finite set $W$ such that
\begin{align}\nCSPs(\calF)\APwred\nCSP^W_{\leq 3}(\calF\cup\{R\})\label{eqn:relnal}\end{align}
Then using both parts of Lemma \ref{lem:simpleweight}, there is a finite set
$W'$ such that
\[\nCSP^W_{\leq 3}(\calF\cup\{R\})\APred\nCSP^{W'}_{\leq 3}(\calF\cup\{F\})\]
But $F_1\in\calF$, and $F_2$ is a pinning of $F_1$ (Lemma
\ref{lem:freepin}), and $F$ is given by a $(\leq 3)$-formula over
$F_2$ (Lemma \ref{lem:express}):
\[\nCSP^{W'}_{\leq 3}(\calF\cup\{F\})\APred \nCSP^{W'}_{\leq 3}(\calF\cup\{F_2\})\APred \nCSP^{W'}_{\leq 3}(\calF)\]
So we are done if we can show \eqref{eqn:relnal}.

Up to equivalence,
\[ R\in\{\NAND,\OR,\EQ_2,\NEQ,\IMP\}\]
If $R=\NEQ$ then $\sum_y R(x,y)R(y,z)$ is a $(\leq 3)$-formula
expressing $\EQ_2$. By Lemma $\ref{lem:express}$ we have $\nCSP_{\leq
  3}^W(\calF\cup\{\EQ_2\})\APred \nCSP_{\leq 3}^W(\calF)$ for all sets
$W$ containing $(1,1)$. So we can ignore the case $R=\NEQ$.  If
$R=\NAND$ or $R=\OR$ then $\nCSPs(\calF)\APred\SAT\APwred
\nCSP^W_{\leq 3}(\{R\})$ for some finite set $W$, by Lemma
\ref{lem:sateasy} and Lemma \ref{lem:sly3}.  Otherwise $R=\EQ_2$ or
$R=\IMP$. We will ``3-simulate equality'' as in \cite{bdddeg}.

By Lemma \ref{lem:tract} we may assume that $\calF$ is not contained
in Weighted-NEQ-conj.  It follows from \cite[Theorem 14, Proposition
  25]{lsm} that there is a finite set $S$ of arity 1 signatures such
that $\nCSP(\calF\cup S)\APeq \nCSP(\calF\cup\{\IMP\})\APeq
\nCSPs(\calF)$. Let \[W=\{(U(0),U(1))\mid \text{$U$ has arity 1, and
  $U\in \calF$}\}\cup\{(1,1)\}\]

We will show that $\nCSP(\calF\cup S)\APred \nCSP^W_{\leq 3}(\calF)$.
Given an instance $(V,\phi)$ of $\nCSP(\calF\cup S)$, for each
variable $v$ replace all its occurences by distinct variables
$v_1,\cdots,v_d$ and insert new atomic formulas
$R(v_1,v_2)\cdots R(v_d,v_1)$.  This gives a new formula
$\phi'$ on variables $V'$. Now replace any arity 1 atomic formula $U(v_i)$
by a variable weight on $v_i$; that is, delete these atomic formulas
to obtain $\phi''$
and define $w:V'\to V$ by
\[ w(v_i)=\begin{cases}(U(0),U(1))&\text{ if there is an atomic formula $U(v_i)$ in $\phi'$}\\
                     (1,1)&\text{ otherwise}\end{cases}\] for all
$v_i\in V'$. Then $\phi''$ is a ($\leq 3$)-formula with
$\zwp{w}{\phi'}=\zp{\phi}$, so we can just query the oracle.
\end{proof}

\section{Tractable problems not in FP}
In this section we will argue that there is a large tractable region
for $\nCSPs_{\leq d}$.  The existence of these FPRASes contrasts with
the unbounded problem $\nCSPs$. Assuming that $\BIS$ does not have an
FPRAS, $\nCSPs(F)$ has an FPRAS if and only if $\nCSPs(F)$ is in FP, as
least as long as $F$ is rational-valued (see Lemma \ref{lem:unbounded} and
\cite{csptrich}). But $\nCSPs_{\leq d}(F)$ can
have an FPRAS even when $\nCSPs_{\leq d}(F)$ is $\#P$-hard.

\begin{proposition}\cite[Theorem 5.3]{hpacc}
If $\mathcal F$ is not a subset of Weighted-NEQ-conj then
$\nCSPs_{\leq 3}(\calF)$ (without any approximation) is $\#P$-hard.
\end{proposition}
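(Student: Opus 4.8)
The plan is to reconstruct the argument of \cite{hpacc}, which reduces known \#P-hard problems to $\nCSPs_{\leq 3}(\calF)$. First I would record that the gadget constructions from Section~\ref{sec:reductions} used in that argument --- realising signatures by $(\leq 3)$-formulas (Lemma~\ref{lem:express}), free pinning via the variable weights $(1,0),(0,1)$ (Lemma~\ref{lem:freepin}), and absorbing simple weightings into variable weights (Lemma~\ref{lem:simpleweight}) --- are exact identities $\zwp{w}{\phi}=\zwp{w'}{\phi'}$, up to an explicit scalar and one explicit root extraction of a rational, so they are genuine Turing reductions preserving exact \#P-completeness; no $h$-maximisation is needed. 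I would then extract a small hard core: since Weighted-NEQ-conj is closed under tensor products and contains every signature whose support has cardinality $\leq 2$ (the content of the proof of Proposition~\ref{prop:wncisclone}), the hypothesis gives, by pinning away the other tensor factors of a non-member, an \emph{indecomposable} signature $F_0$ --- a scalar multiple of a pinning of some $F\in\calF$ --- with $|\supp(F_0)|\geq 3$. In particular $F_0$ has arity $\geq 2$; and indecomposability forbids any coordinate of $F_0$ being pinned to a constant on $\supp(F_0)$, so if $\supp(F_0)$ is degenerate then $\supp(F_0)=\{0,1\}^{V(F_0)}$. As $F_0$ arises from $\calF$ by pinning, \#P-hardness of $\nCSPs_{\leq 3}(\{F_0\})$ transfers to $\nCSPs_{\leq 3}(\calF)$ (Lemma~\ref{lem:freepin}), so it is enough to handle $\{F_0\}$.

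Case 1: $\supp(F_0)$ is not degenerate. Then some arity-two pinning of $F_0$ has non-degenerate support, equivalently support equivalent to one of $\NAND,\OR,\EQ_2,\NEQ,\IMP$. If it is $\NAND$ or $\OR$, that pinning is a positive weighting of the relation, so $\nCSPs_{\leq 3}(\{F_0\})$ computes --- up to a multiplicative constant --- the number of independent sets of a cubic multigraph, which is \#P-hard (Greenhill). If it is $\EQ_2$, $\NEQ$ or $\IMP$, a short $(\leq 2)$-formula over this (weighted) relation followed by Lemma~\ref{lem:simpleweight} produces $\EQ_2$ itself, and the path $\EQ_2(v_1,v_2)\EQ_2(v_2,v_3)\cdots\EQ_2(v_{k-1},v_k)$ forces $v_1,\dots,v_k$ equal with all internal degrees $\leq 2$, hence simulates $\EQ_k$ for every $k$. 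So the unbounded-degree problem $\nCSPs(\{F_0\})$ reduces to $\nCSPs_{\leq 3}(\{F_0,\EQ_2\})$, and thence (as $\EQ_2$ is built from $F_0$) to $\nCSPs_{\leq 3}(\{F_0\})$. Finally $\nCSPs(\{F_0\})$ is \#P-hard because $F_0\notin\text{Weighted-NEQ-conj}$ --- the exact-counting form of Lemma~\ref{lem:unbounded}, i.e.\ the Boolean weighted \#CSP dichotomy \cite{csptrich,lsm}, whose tractable side with unary weights available is exactly Weighted-NEQ-conj.

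Case 2: $\supp(F_0)=\{0,1\}^{V(F_0)}$. Then $F_0$ is nowhere zero and, being indecomposable of arity $\geq 2$, non-degenerate. I would use the elementary fact that if every arity-two pinning of a nowhere-zero signature is degenerate then all mixed second differences of $\log F_0$ vanish, forcing $\log F_0(\vecx)=c+\sum_i\ell_i(x_i)$ and hence $F_0$ degenerate --- a contradiction. So $F_0$ has a nowhere-zero, non-degenerate arity-two pinning, which after scaling and permuting $\{0,1\}$ in each coordinate is a two-state edge interaction $G=\bigl(\begin{smallmatrix}\beta&1\\1&\gamma\end{smallmatrix}\bigr)$ with $\beta,\gamma>0$ and $\beta\gamma\neq1$. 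It then suffices to show $\nCSPs_{\leq 3}(\{G\})$ is \#P-hard; but this is the partition function of a two-state spin system with a non-trivial interaction and arbitrary non-negative external fields (the variable weights) on graphs of maximum degree three, whose exact evaluation is \#P-hard by the (bounded-degree case of the) dichotomy for two-state spin systems --- e.g.\ it subsumes the hard-core and Ising partition functions on cubic graphs --- where one argues by a direct \#P-hardness construction on sparse graphs using degree-three field and interaction gadgets.

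I expect Case 2 to be the main obstacle: since the binary signature $G$ there is nowhere zero, no pinning or pps-construction over $\{G\}$ is ever zero, so the ``simulate equality'' technique --- which the rest of the proof, like Theorem~\ref{thm:highdeg}, relies on --- is simply unavailable, and one cannot route through the unbounded-degree \#CSP dichotomy; a genuinely separate input, the \#P-hardness of bounded-degree two-state spin systems with fields, must be brought in and carried out within degree three. A secondary, more routine difficulty is the structural book-keeping: showing that an indecomposable signature with support of size $\geq 3$ always reduces, via degree-preserving pinnings and pps-constructions, to exactly one of the three situations above (weighted $\NAND$/$\OR$ on cubic graphs; $\EQ_2$ together with the unbounded-degree \#CSP; a nowhere-zero non-degenerate binary signature), and checking that each reduction lemma invoked admits the exact Turing-reduction strengthening used here.
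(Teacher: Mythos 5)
The paper does not prove this proposition from first principles: its entire proof is the observation that variable weights let you realise the unary signature $U$ with $U(0)=1$, $U(1)=2$, that $U$ does not lie in the ``affine-type'' tractable class $\mathcal{A}$ of \cite{hpacc}, and that Weighted-NEQ-conj is contained in their product-type class $\mathcal{P}$; Theorem 5.3 of \cite{hpacc} then applies verbatim (the citation in the proposition's header is the proof). You have instead set out to re-derive that theorem, and the re-derivation has genuine gaps precisely at the points where \cite{hpacc} does real work.

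First, your Case 2 (and, as it turns out, part of Case 1) ends by \emph{asserting} that exact evaluation of a nowhere-zero, non-degenerate two-spin partition function with arbitrary fields on degree-three graphs is $\#$P-hard. That assertion is essentially the binary-signature instance of the very theorem being proved, and its known proofs proceed by polynomial interpolation (building a one-parameter family of instances by stretching/thickening edges and recovering the desired partition function from polynomially many oracle calls). Interpolation is a Turing reduction but it is not an ``exact identity $\zwp{w}{\phi}=\zwp{w'}{\phi'}$ up to a scalar,'' so it is excluded by the framework you set up at the outset; without it, this case is simply open in your write-up. Second, the $\IMP$ branch of Case 1 is wrong as stated: a weighting of $\IMP$ is an upper-triangular matrix $\bigl(\begin{smallmatrix}a&b\\0&c\end{smallmatrix}\bigr)$ with $a,b,c>0$, and any pps-composition of such signatures (with unary signatures inserted) remains upper-triangular with strictly positive off-diagonal entry, so no $(\leq 2)$-formula over it ever has support $\EQ_2$. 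The paper's tool for killing such an entry, $h$-maximisation (Lemma \ref{lem:hamweight}), is an approximation-preserving reduction with an error analysis and cannot be used for exact $\#$P-hardness; so this branch must also be routed through the spin-system hardness you have not supplied. A smaller issue: your claim that a non-degenerate indecomposable $\supp(F_0)$ with at least three elements admits a non-degenerate arity-two pinning needs justification --- Lemma \ref{lem:minnondeg} only guarantees that a pinning-minimal non-degenerate pinning has arity two \emph{or} support of the form $\{\vecx,\overline{\vecx}\}$, and in the latter case no arity-two pinning of it is non-degenerate, so you must argue the witness can be found elsewhere. In short, you have correctly located where the difficulty lies, but the proposal defers exactly that difficulty to the external result the paper cites.
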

\begin{proof}
Define $U$ by $U(0)=1$ and $U(1)=2$.  Using variable weights instead
of $U$, we have $\nCSP_{\leq 3}(\calF\cup\{U\})\APred\nCSPs_{\leq
  3}(\calF)$.  Now we can appeal to \cite[Theorem 5.3]{hpacc}.  Their
set ``$\mathcal A$'' does not contain $U$, and Weighted-NEQ-conj is
contained in their ``$\mathcal P$''.  Hence $\nCSP_{\leq 3}(\{F, U\})$
is $\#P$-hard.
\end{proof}

This following argument is inspired by \cite{jztwostate}, and in
particular we use the same quantity $J$.

\begin{reptheorem}{thm:boxg}
Let $d,k\geq 2$.  Let $F$ be a an arity $k$ signature with
values in the range $[1,\frac{d(k-1)+1}{d(k-1)-1})$.  Then
  $\nCSPs_{\leq d}(F)$ has an FPRAS.
\end{reptheorem}
\begin{proof}
We will use a path coupling argument on a Markov chain with Glauber
dynamics.  We will proceed by giving a FPAUS, which in this case is a
randomised algorithm that, given an instance $(w,\phi)$ and
$\epsilon>0$, outputs a random configuration $\mu$ such that the total
variation distance of $\mu$ from $\pi^w_\phi$ is at most $\epsilon$
where $\pi^w_\phi(\sigma)=\wt{w}{\phi}(\sigma)/Z^w_\phi$; and the
algorithm runs in time polynomial in the size of the input and
$\log(1/\epsilon)$.

The FPAUS is to simulate a Markov chain of configurations
$(X_t)_{t=0,1,\cdots}$ and output $X_T$ for some $T$ to be determined
later.  For configurations $X$ and variables $v$ we will use the
notation $X[v\mapsto j]$ to mean $X[v\mapsto j](u)=X(u)$ for $u\neq v$
and $X[v\mapsto j](v)=j$.  Let $X_0\in \{0,1\}^V$ be any
configuration. For each $t\geq 1$ let $v_t$ be distributed uniformly
at random and let $X_t$ be distributed according to heat bath
dynamics, that is, distributed according to $\pi^w_\phi$
conditioned on $X_t\in \{X_{t-1}[v_t\mapsto 0],X_{t-1}[v_t\mapsto
  1]\}$. Thus
\[
  \mathbb{P}[X_t(i)=1 \mid X_{t-1},v_t] = \frac{\wt{w}{\phi}(X_{t-1}[v\mapsto
      1])}{\wt{w}{\phi}(X_{t-1}[v\mapsto 0])+\wt{w}{\phi}(X_{t-1}[v\mapsto 1])}
\]
This probability is easy to compute exactly, so each step of the
Markov chain can be simulated efficiently.

Consider another Markov chain $(Y_t)_{t\geq 0}$ distributed in the
same way as $(X_t)_{t\geq 0}$, with the optimal coupling given that
both chains choose the same variables $v_t$. So
\[
\mathbb{P}[X_t(v_t)\neq Y_t(v_t)|X_{t-1},Y_{t-1},v_t]=|\mathbb{P}[X_t(v_t)=1|X_{t-1},Y_{t-1},v_t]-\mathbb{P}[Y_t(v_t)=1|X_{t-1},Y_{t-1},v_t]|
\]
Define $\beta=\beta(w,\phi)=\max_{X_0,Y_0:|X_0\setdiff Y_0|=1}
\mathbb{E}[d(X_1,Y_1)]$.
Let $M$ be the maximum value taken by $F$.
We will establish the bound
\begin{align}\label{eqn:betabound}\beta\leq 1-c|V|^{-1}\end{align}
for some $c>0$ depending only on the parameters $d,k,M$.  Then by the
General Path Coupling Theorem of \cite{pathcouple} the total variation
distance from the stationary distribution is at most $\epsilon$ as
long as $T\geq \log(|V|\epsilon^{-1})/\log
\beta^{-1}=\operatorname{poly}(|V|,\log \epsilon^{-1})$. This gives
the required FPAUS.  Given the FPAUS, there is an FPRAS by
\cite[Theorem 6.4]{jvv} (the self-reducibility is Lemma \ref{lem:freepin}).

We will now bound $\beta$.  Fix configurations $X_0$ and $Y_0$ that
only differ on a single variable $u$. For all $v_1\in V$ define

\[ E(X_0,Y_0,v_1)=|\mathbb{P}[X_1(v_1)=1|X_0,Y_0,v_1]-\mathbb{P}[Y_1(v_1)=1|X_0,Y_0,v_1]| \]

Define $W_{ij}=W_{ij}(X_0,Y_0,v_1)=\wt{w}{\phi}(X_0[u\mapsto i][v_1\mapsto j])$ for all $i,j\in\{0,1\}$. Then
\begin{align*}
E(X_0,Y_0,v_1)&=|\mathbb{P}[X_1(v_1)=1|X_0,Y_0,v_1]-\mathbb{P}[Y_1(v_1)=1|X_0,Y_0,v_1]|\\
&=\left|\frac{W_{01}}{W_{00}+W_{01}}-\frac{W_{11}}{W_{10}+W_{11}}\right|\\
&=\frac{|W_{00}W_{11}-W_{01}W_{10}|}{W_{00}W_{11}+W_{01}W_{10}+W_{00}W_{10}+W_{01}W_{11}}\\
&\leq\frac{|W_{00}W_{11}-W_{01}W_{10}|}{W_{00}W_{11}+W_{01}W_{10}+2\sqrt{W_{00}W_{10}W_{01}W_{11}}}\\
&=\frac{|\sqrt{W_{00}W_{11}}-\sqrt{W_{01}W_{10}}|}{\sqrt{W_{00}W_{11}}+\sqrt{W_{01}W_{10}}}
\end{align*}

Let $v_1\in V\setminus\{u\}$.  Denote by $I'(u,v_1)\subseteq I$ the
set of indices of atomic formulas with $u$ and $v_1$ in their scope.
For all $i\in I'(u,v_1)$ and all $j,k\in\{0,1\}$, define
\[ F'_i(j,k)=F(x_{\scope(i,1)},\cdots,x_{\scope(i,k)})\]
where $x_v=(X_0[u\mapsto j][v_1\mapsto k])_v$. Define
$W'(j,k)=\prod_{i\in I'} F'_i(j,k)$.  The other weights depend on $u$
or $v_1$ alone, so $W'(0,0)W'(1,1)/W'(0,1)W'(1,0)$ equals $W_{00}W_{11}/W_{01}W_{10}$ and
\[E(X_0,Y_0,v_1)\leq\frac{|\sqrt{W'(0,0)W'(1,1)}-\sqrt{W'(0,1)W'(1,0)}|}{\sqrt{W'(0,0)W'(1,1)}+\sqrt{W'(0,1)W'(1,0)}} \]

For all arity 2 signatures $G$ (taking strictly positive values),
define $J(G)=\frac{1}{4}\log\frac{G(0,0)G(1,1)}{G(0,1)G(1,0)}$.  Note
that the functions $F'_i$ take values in the range $[1,M]$ so
$|J(F'_i)|\leq \frac 1 2 \log M$; also recall that $\tanh$ is
non-decreasing and subadditive for positive reals, that is,
$\tanh(x+y)=\frac{\tanh x+\tanh y}{1+\tanh x \tanh y}\leq
\tanh(x)+\tanh(y)$. Hence

\begin{align*}
E(X_0,Y_0,v_1)&\leq\frac{|\sqrt{W'(0,0)W'(1,1)}-\sqrt{W'(0,1)W'(1,0)}|}{\sqrt{W'(0,0)W'(1,1)}+\sqrt{W'(0,1)W'(1,0)}}\\
     &= \tanh |J(W')|\\
     &= \tanh\left|\sum_{i\in I'(u,v_1)} J(F'_i)\right|\\
     &\leq |I'(u,v_1)|\tanh\left(\frac{1}{2}\log M\right)\\
     &= |I'(u,v_1)|\frac{M-1}{M+1}
\end{align*}

The variable $u$ appears in at most $d$ atomic formulas, each of which
contributes at most $k-1$ to $\sum_{v_1}|I'(u,v_1)|$. Rearranging
$M<\frac{d(k-1)+1}{d(k-1)-1}$ we get
$d(k-1)\frac{M-1}{M+1} < 1$, so
\begin{align*}
\mathbb{E}[d(X_1,Y_1)]=1-\frac{1}{|V|}+\frac{1}{|V|}\sum_{v_1\in V\setminus\{u\}}E(X_0,Y_0,v_1)\\
                      \leq 1-\left(1-d(k-1)\frac{M-1}{M+1}\right)/|V|
\end{align*}
giving the required bound \eqref{eqn:betabound}.
\end{proof}

\section{Infinite sets of variable weights are sometimes necessary}\label{sec:monodim}
Theorem \ref{thm:finitewts} gives some circumstances in which the set
of variable weights in Theorem \ref{thm:weightthm} can be taken to be
finite.  On the other hand, assuming that $\PMP$ does not have an
FPRAS, there is a situation where we cannot take the set of variable
weights to be finite.

\def\zmd{Z_{\mathrm{MD}}}

Let $G$ be a (simple) graph with a non-negative edge weight
$\lambda(e)$ for each edge $e$ of $G$. Recall that a matching in $G$
is a subset $M$ of the edge set of $G$ such that no two edges in $M$
share a vertex.  The partition function $\zmd(G)$ of the
monomer-dimer model on $G$ is the sum, over all matchings $M$ in $G$,
of $\prod_{e\in M}\lambda(e)$.

\begin{lemma}[\cite{approxperm}, Corollary 3.7]\label{lem:monomerdimer}
There is an FPRAS for the partition function of the monomer-dimer
model if the edge weights are given as integers in unary.
\end{lemma}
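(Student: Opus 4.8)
The plan is to follow the Markov chain Monte Carlo approach of Jerrum and Sinclair. First I would reduce counting to sampling by self-reducibility: fix an ordering $e_1,\dots,e_m$ of the edges of $G$, let $G_i$ be $G$ with $e_1,\dots,e_{i-1}$ deleted (so $G_1=G$ and $G_{m+1}$ is edgeless, $\zmd(G_{m+1})=1$), and write $\zmd(G)=\prod_{i=1}^{m}\zmd(G_i)/\zmd(G_{i+1})$. Each factor is the reciprocal of the probability $\Pr_{\pi_i}[e_i\notin M]$, where $\pi_i$ is the weighted matching distribution $\pi_i(M)\propto\prod_{e\in M}\lambda(e)$ on $G_i$; since inserting $e_i$ changes the weight of any matching by a factor controlled by $\lambda_{\max}$, and $\lambda_{\max}$ is bounded by the input size precisely because the weights are given in unary, this probability is bounded below by an inverse polynomial. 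Hence an inverse-polynomial number of samples from each $\pi_i$ estimates each factor accurately enough, and the standard reduction from almost-uniform sampling to approximate counting (as in the reasoning behind \cite[Theorem 6.4]{jvv}) turns a polynomial-time almost-uniform sampler into an FPRAS.

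It then remains to give an FPAUS for $\pi=\pi_G$. I would use the Jerrum--Sinclair matchings chain: from a matching $M$, pick an edge $e=\{u,v\}$ uniformly at random; if $e\in M$, propose $M\setminus\{e\}$; if $u$ and $v$ are both $M$-unmatched, propose $M\cup\{e\}$; if exactly one endpoint, say $u$, is matched by $e'\in M$, propose $(M\setminus\{e'\})\cup\{e\}$; otherwise do nothing. Accepting each proposal with a Metropolis factor $\min(1,\text{weight ratio})$ and making the chain lazy yields a chain that is reversible with stationary distribution $\pi$, aperiodic, and irreducible (any matching can be emptied edge by edge and rebuilt), with each step computable in polynomial time.

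The core of the argument is bounding the mixing time by the canonical paths / multicommodity flow method. For matchings $M,M'$, the symmetric difference $M\triangle M'$ is a vertex-disjoint union of simple alternating paths and even alternating cycles; fixing once and for all an ordering of possible components, define the canonical path from $M$ to $M'$ by processing components in that order, unwinding each path component from a prescribed endpoint via a sequence of add/remove/exchange moves, and breaking each cycle at a prescribed edge before unwinding it. The key estimate is that for any transition $t=(N,N')$, the pairs $(M,M')$ routing through $t$ inject into the set of matchings via the encoding $\sigma_t(M,M')=(M\triangle M')\triangle(N\cup N')$, possibly after deleting one designated edge: this object is ``almost a matching'', having at most one vertex of degree two, so the number of such $(M,M')$ is at most $\mathrm{poly}(m)$ times the number of matchings, while $\pi(M)\pi(M')$ and $\pi(N)\pi(N')\pi(\sigma_t(M,M'))$ differ by a factor bounded by $\mathrm{poly}(\lambda_{\max})$. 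This makes the congestion $\rho=\max_t\frac{1}{\pi(N)P(N,N')}\sum_{(M,M'):\,t\in\mathrm{path}}\pi(M)\pi(M')\,|\mathrm{path}(M,M')|$ polynomial in $m$ and $\lambda_{\max}$, and the canonical-paths bound $\tau(\delta)\le\rho\,(\ln\pi_{\min}^{-1}+\ln\delta^{-1})$ gives a polynomial mixing time; running the chain for $\tau(\epsilon)$ steps produces the required FPAUS.

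I expect the main obstacle to be exactly this encoding/congestion computation: one must verify that $(M,M')\mapsto\sigma_t(M,M')$ is injective once $t$ is fixed (which is why a designated edge may have to be recorded separately, at the cost of the extra polynomial factor), and that the weight distortion between the two endpoints and the midpoint stays polynomially bounded. Both points rely on $\lambda_{\max}$ being polynomially bounded, i.e.\ on the unary encoding of the weights; the remaining steps (reversibility, the telescoping identity, and the sampling-to-counting reduction) are routine.
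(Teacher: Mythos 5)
The paper does not prove this lemma at all---it is quoted verbatim as Corollary 3.7 of the cited Jerrum--Sinclair work---and your outline is precisely the standard argument behind that citation: self-reducibility via the telescoping product of edge-deletion ratios (each ratio bounded below by $1/(1+\lambda_{\max})$, polynomial because the weights are unary), followed by the matchings chain analysed by canonical paths with the $(M\triangle M')\triangle(N\cup N')$ encoding. The sketch is correct and matches the source the paper relies on, so there is nothing to add.
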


Let $R$ the the relation $\{(0,0,0),(0,0,1),(0,1,0),(1,0,0)\}$.

\begin{proposition}
Let $W$ be a finite set of integer-valued variable weights. Then
$\nCSP^W_{=2}(\{R\})$ has an FPRAS.
\end{proposition}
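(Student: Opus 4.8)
The plan is to exhibit $\zwp{w}{\phi}$ as an exactly-computable positive multiple of a monomer-dimer partition function and then to apply Lemma~\ref{lem:monomerdimer}. First I set up the combinatorial picture. Given an instance $(w,\phi)$, form the multigraph $H$ whose vertex set is the set $I=I^{\phi}$ of atomic formulas and which has, for each variable $v$, a single edge labelled $v$ joining the two (not necessarily distinct) atomic formulas in whose scope $v$ occurs, a variable occurring twice in the same atomic formula giving a loop; since every application of $R$ has arity $3$, every vertex of $H$ has degree $3$. For a configuration $\vecx$, the factor $\prod_{i\in I}R(\cdots)$ is $1$ exactly when each atomic formula has at most one argument set to $1$, that is, exactly when $S=\{v:x_v=1\}$ meets each vertex of $H$ in at most one edge-end; equivalently $S$ is a matching of $H$ (loops can never lie in $S$). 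Hence
\[
\zwp{w}{\phi}=\sum_{M}\Bigl(\prod_{v\in M}w(v)_1\Bigr)\Bigl(\prod_{v\notin M}w(v)_0\Bigr),
\]
the sum over all matchings $M$ of $H$.

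Next I would reduce to the case where every variable weight has both components strictly positive. A weight $(a,0)$ pins its variable to $1$ and a weight $(0,b)$ pins it to $0$; propagating these pins in polynomial time --- a variable pinned to $1$ forces to $0$ the other variables of each of its two atomic formulas --- either exposes a contradiction, so that $\zwp{w}{\phi}=0$ and we are done, or deletes all pinned variables and all loops (which never lie in a matching), after pulling a positive constant out of the displayed sum. So assume from now on that $w(v)=(a_v,b_v)$ with $a_v,b_v>0$ for every remaining variable and that $H$ has no loops; then
\[
\zwp{w}{\phi}=\Bigl(\prod_{v} b_v\Bigr)\,\zmd(H;\lambda),\qquad \lambda(v)=a_v/b_v,
\]
where $\zmd(H;\lambda)=\sum_{M}\prod_{v\in M}\lambda(v)$. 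The activities $\lambda(v)$ are positive rationals whose numerators and denominators come from the fixed finite set $W$.

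The only substantive point is that the $\lambda(v)$ need not be integers, so Lemma~\ref{lem:monomerdimer} does not apply verbatim; I would clear denominators by a global rescaling. Let $D$ be the least common multiple of the positive first components of the elements of $W$; it depends only on $W$, and $\mu_v:=4D^{2}\lambda(v)=4D^{2}a_v/b_v$ is a positive integer bounded by a constant. Build a simple graph $G$ from $H$ by first merging parallel edges (a single edge whose weight is their sum has the same effect on every matching, and there are at most three edges in any bundle, so the merged weight is still a bounded integer), then attaching to each vertex $u$ of $H$ a private pendant path through two new vertices $x_u,y_u$, with weight $4D-2$ on the edge $\{u,x_u\}$ and weight $1$ on the edge $\{x_u,y_u\}$. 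In a matching $M$ of $H$ a covered vertex contributes a local pendant factor $1+1=2$ and an uncovered vertex a factor $1+(4D-2)+1=4D$; since a matching covers exactly $2|M|$ vertices,
\[
\zmd(G)=\sum_{M}\Bigl(\prod_{v\in M}\mu_v\Bigr)2^{2|M|}(4D)^{|V(H)|-2|M|}=(4D)^{|V(H)|}\sum_{M}(4D^{2})^{-|M|}\prod_{v\in M}\mu_v=(4D)^{|V(H)|}\,\zmd(H;\lambda),
\]
using $\mu_v/(4D^{2})=\lambda(v)$.

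Now $G$ is a simple graph whose edge weights are positive integers bounded by a constant, hence writable in unary in polynomial time, so $\zmd(G)$ has an FPRAS by Lemma~\ref{lem:monomerdimer}. Dividing its output by the exactly-computable positive constant $(4D)^{|V(H)|}$ gives an FPRAS for $\zmd(H;\lambda)$, and multiplying by $\prod_v b_v$ and by the constant pulled out during the pinning reduction gives one for $\zwp{w}{\phi}$, which is what is wanted. The step I expect to be the crux is exactly this rescaling: a non-integer effective edge activity cannot be produced by a local monomer-dimer gadget (forcing the ``half-used'' boundary states of a two-port gadget to have weight zero leaves only integer activity ratios), so the denominators must be cleared globally --- which is what the pendant-path construction achieves, by turning a per-vertex correction into a uniform per-edge rescaling via the identity $|V(M)|=2|M|$.
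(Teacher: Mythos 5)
Your proposal is correct and its core is the same as the paper's: identify the vertices of an auxiliary (multi)graph with the atomic formulas and the edges with the variables, observe that the nonzero-weight configurations are exactly the matchings, merge parallel edges, and invoke Lemma~\ref{lem:monomerdimer}. Where you genuinely diverge is in the treatment of the edge activities: the paper simply scales each weight to the form $(1,\lambda)$ and asserts that the finitely many resulting (rational) edge weights can be ``written in unary,'' which is loose given that Lemma~\ref{lem:monomerdimer} is stated for integer weights; your pendant-path gadget, which converts the per-vertex monomer weight $4D$ versus $2$ into a uniform $(4D^2)^{-|M|}$ rescaling via $|V(M)|=2|M|$, clears the denominators explicitly and reduces cleanly to bounded integer weights, so on this point your argument is more self-contained than the paper's. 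Two bookkeeping slips to fix: under the paper's convention $\prod_v w(v)_{x_v}$, a weight $(a,0)$ has $w(v)_1=0$ and hence pins $v$ to $0$, not $1$ (you have the two components consistently reversed, so your formulas are internally coherent but contradict the paper's definition); and since the denominator of the activity $\lambda(v)$ is the weight of the \emph{unmatched} value, $D$ must be a common multiple of those components --- with your stated choice of ``first components'' it need not divide out $b_v$, so just take $D$ to be the least common multiple of all positive components occurring in $W$. Neither slip affects the validity of the construction.
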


\begin{proof}
We can scale the variable weights to assume that $w(0)\in\{0,1\}$ for
all $w\in W$.  We will give an AP-reduction from $\nCSP^W_{=2}(\{R\})$ to
the problem of computing the monomer-dimer partition function of a
graph with positive edge weights specified in unary.  Let $(w,\phi)$
be an instance of $\nCSP^W_{=2}(\{R\})$.

Let $G$ be the edge-weighted multigraph whose vertices are atomic
formula indices $I^\phi$ and with, for each $v\in V$ with $w(v,0)=1$,
an edge with weight $w(v,1)$ joining the two indices of the atomic
formulas in which $v$ appears - and if a variable is used twice in the
same atomic formula then we get a vertex with a loop.  For each
$v\in V$ with $w(v,0)=0$ delete the two vertices corresponding to the
atomic formulas in which $v$ appears.

The definition of the partition function for the monomer-dimer model
extends to multigraphs, and the value of the instance $(w,\phi)$ is
$\zmd(G)$: positive-weight configurations $\sigma:V\to \{0,1\}$ of
$\zwp{}{\phi}$ correspond to subsets $M=\sigma^{-1}(1)$ of the edge set
of $G$ that are matchings, and the weight $\wt{w}{\phi}(\vecx)$ is the
weight $\prod_{e\in M}\lambda(e)$ of the corresponding matching $M$.
We can transform this multigraph to a simple graph without changing
the partition function: a set of parallel edges with weights
$w_1,\cdots,w_k$ are equivalent to having a single edge with weight
$w_1+\cdots+w_k$, and any loop can be deleted.

The result of these transformations is a simple edge-weighted graph
$G'$ such that $\zmd(G)=\zwp{w}{\phi}$, and whose edge weights
belong to $\sum_{w\in W'}w(1)$ for some subset $W'$ of $W$.  But there
are finitely many such edge weights, so we can write any such edge
weight in unary in constant time, and use the oracle to approximate
$\zmd(G')$.
\end{proof}

By Theorem \ref{thm:basic} however, $\PMP\APred\nCSPs_{=2}(\{R\})$.

\subsection{Acknowledgements}

The proof of Theorem \ref{thm:basic} grew out of a
study of degree-two $\nCSP$s by Leslie Ann Goldberg and David
Richerby, and their discussions with the author.  The author wishes to
thank Leslie Ann Goldberg and Russell Martin for their advice.

\bibliography{main}{} \bibliographystyle{alpha}

\end{document}